\definecolor{MyBlue}{RGB}{50,100,200}
\crefname{appsec}{Appendix}{Appendices}
\theoremstyle{plain}
\newtheorem{theorem}{Theorem}
\newcommand{\newtheoremwithcrefformat}[2]{%
  \newtheorem{#1}[theorem]{#2}%
  \crefformat{#1}{##2\MakeUppercase#1~##1##3}%
  \Crefformat{#1}{##2\MakeUppercase#1~##1##3}%
}
\newcommand{\newseptheoremwithcrefformat}[2]{%
  \newtheorem{#1}{#2}%
  \crefformat{#1}{##2\MakeUppercase#1~##1##3}%
  \Crefformat{#1}{##2\MakeUppercase#1~##1##3}%
}
\theoremstyle{nonumberplain}
\newtheorem{proof}{Proof}
\def\cqedsymbol{\ifmmode$\lrcorner$\else{\unskip\nobreak\hfil
\penalty50\hskip1em\null\nobreak\hfil$\lrcorner$
\parfillskip=0pt\finalhyphendemerits=0\endgraf}\fi}
\newenvironment{problem}[1]{\begin{tcolorbox}[title=\textsc{\color{black}#1},colback=white,colframe=Gray!40,enlarge left by=0mm,
        boxsep=5pt,
        arc=0pt,outer arc=0pt,]
}{\end{tcolorbox}}
\def\N{\mathbb{N}}
\def\O{\mathcal{O}}
\newcommand{\Oh}{\O}
\def\G{\mathcal{G}}
\def\S{\mathcal{S}}
\def\Ff{\mathcal{F}}
\DeclareMathOperator{\ctw}{\mathbf{ctw}}
\def\barcs{\overleftarrow{A}}
\DeclareMathOperator{\First}{\mathsf{start}}
\DeclareMathOperator{\Last}{\mathsf{end}}
\DeclareMathOperator{\first}{\mathsf{first}}
\DeclareMathOperator{\last}{\mathsf{last}}
\DeclareMathOperator{\head}{\mathsf{head}}
\DeclareMathOperator{\tail}{\mathsf{tail}}
\DeclareMathOperator{\markerX}{\mathsf{X}}
\DeclareMathOperator{\markerH}{\mathsf{H}}
\def\himmt{{\sc $H$-hitting Immersions in Tournaments}}
\DeclareMathOperator{\cut}{\mathsf{cut}}
\newcommand{\wh}[1]{\widehat{#1}}
\newcommand{\whH}{\wh{H}}
\renewcommand{\leq}{\leqslant}
\renewcommand{\geq}{\geqslant}
\def\dirE{\overrightarrow{A}}
\def\barcs{\overleftarrow{A}}
\def\wh#1{\widehat{#1}}
\def\sing{\partial}
\def\gen{\Gamma}
\def\flover{\wh{\partial}}
\def\scat#1{\widetilde{#1}}
\def\glue{\oplus}
\def\Cctw{c_{H}}
\begin{document}

\title{Polynomial kernel for immersion hitting in tournaments\thanks{This work is 
a part of projects that have received funding from the European Research Council (ERC) 
under the European Union's Horizon 2020 research and innovation programme, grant agreements No.~714704 (\L{}.~Bo\.zyk) and No.~677651 (Mi.~Pilipczuk).
}}

\author{\L{}ukasz Bo\.zyk\thanks{
  Institute of Informatics, University of Warsaw, Poland, \texttt{l.bozyk@uw.edu.pl}.
} 
\and
Micha\l{}~Pilipczuk\thanks{
  Institute of Informatics, University of Warsaw, Poland, \texttt{michal.pilipczuk@mimuw.edu.pl}.}
}


\begin{textblock}{20}(0, 12.6)
\includegraphics[width=40px]{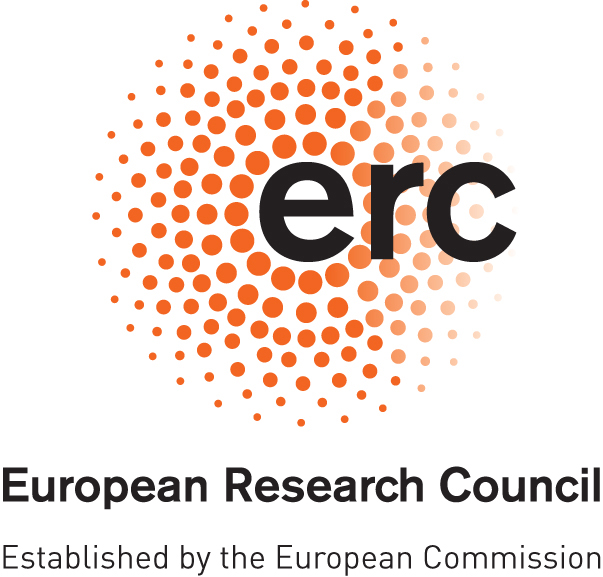}%
\end{textblock}
\begin{textblock}{20}(-0.25, 13)
\includegraphics[width=60px]{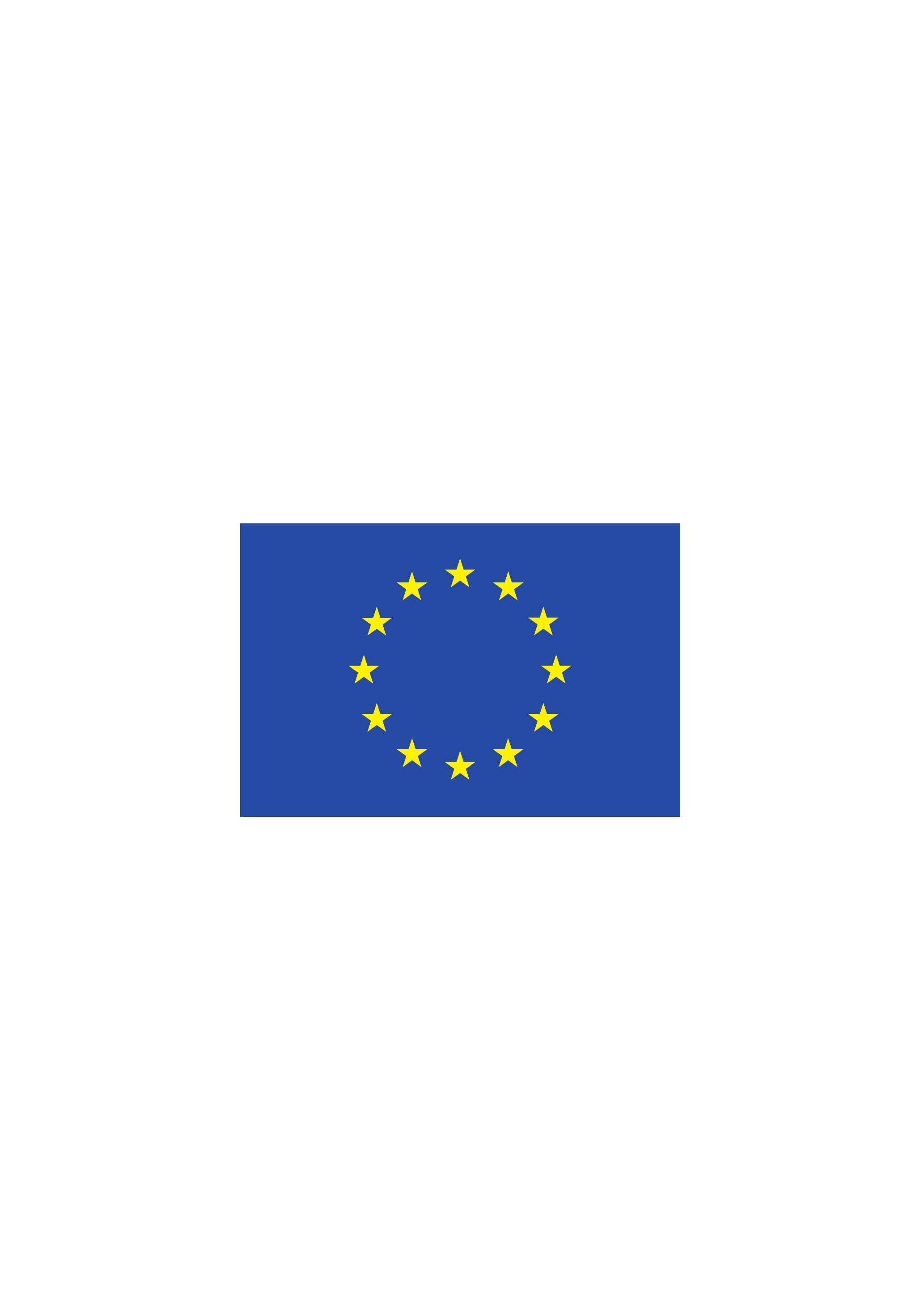}%
\end{textblock}

\maketitle

\begin{abstract}
\noindent For a fixed simple digraph $H$ without isolated vertices, we consider the problem of deleting arcs from a given tournament to get a digraph which does not contain $H$ as an immersion. We prove that for every~$H$, this problem admits a polynomial kernel when parameterized by the number of deleted arcs. The degree of the bound on the kernel size depends on $H$. 
\end{abstract}

\section{Introduction}

{\em{Kernelization}} is an algorithmic framework for describing preprocessing procedures that given an instance of a hard problem, identify and reduce easily resolvable parts. The usual formalization of the concept is based on the paradigm of parameterized complexity. A {\em{kernelization procedure}}, or a {\em{kernel}} for short, for a parameterized decision problem $L$ is a polynomial-time algorithm that given an instance $(I,k)$ of $L$, where $k$ is the parameter, outputs an equivalent instance $(I',k')$ such that both $|I'|$ and $k'$ are bounded by a computable function of $k$. If this function is a polynomial in $k$, we say that the kernel is {\em{polynomial}}. The search for (polynomial) kernels is an established research area within the field of parameterized algorithms. We refer to the textbook of Fomin et al.~\cite{squirrel} for a broader discussion of classic results and techniques.

A particularly fruitful line of research within kernelization concerns the methodology of {\em{protrusions}} and {\em{protrusion replacement}}. The idea is to find a large {\em{protrusion}}: a piece of graph that is simple --- for instance, has bounded treewidth --- and communicates with the rest of the graph only through a small interface. If found, a protrusion can be fully understood --- for instance, using dynamic programming on its tree decomposition --- and replaced with a smaller one with the same functionality. So if one proves that, provided the given instance is large, a large protrusion can be efficiently found and replaced with a strictly smaller one, then applying this strategy exhaustively eventually arrives at a kernel. Protrusion-based techniques were pioneered by Bodlaender et al.~\cite{BodlaenderFLPST16}, but by now have become a part of the standard toolbox of kernelization. We refer the interested reader to~\cite[Part~2]{squirrel} for more information.

A particularly important achievement in the development of protrusion-based kernelization procedures is the result of Fomin et al.~\cite{FominLMS12}, who gave a polynomial kernel for the {\sc{Planar $\Ff$-Deletion}} problem, defined as follows. Let $\Ff$ be a fixed family of graphs containing at least one planar graph. Then in the problem we are given a graph $G$ and an integer $k$ (considered to be the parameter), and the question is whether one can hit all minor models of graphs from $\Ff$ in $G$ using a hitting set consisting of at most $k$ vertices. Fomin et al. gave a polynomial kernel for this problem for every fixed family $\Ff$ as above. The degree of the polynomial bound on the kernel size depends on $\Ff$ and it is known that under certain complexity-theoretical assumptions, this is unavoidable~\cite{GiannopoulouJLS17}. The assumption that $\Ff$ contains a planar graph is crucial in the approach: under this assumption, graphs not containing any graph from $\Ff$ as a minor have treewidth bounded by a constant, which unlocks a multitude of tools related to tree decomposition using which one can understand the structure of the instance.

The concept of a protrusion, as described above, is quite capacious and can be applied in different settings as well. For instance, Giannopolou et al.~\cite{GiannopoulouPRT21} considered the {\sc{$\Ff$-Immersion Deletion}} problem, where for a given graph $G$ and parameter $k$, one wishes to hit all immersion models of graphs from $\Ff$ using a hitting set of edges of size at most $k$. (Recall that an {\em{immersion model}} of a graph $H$ in a graph $G$ consists of mapping the vertices of $H$ to distinct vertices of $G$ and edges of $H$ to pairwise edge-disjoint paths in $G$ so that the image of an edge $uv$ connects the image of $u$ with the image of $v$.) By loosely following the approach of~\cite{FominLMS12}, Giannopolou et al.~\cite{GiannopoulouPRT21} gave a {\em{linear}} kernel for {\sc{$\Ff$-Immersion Deletion}} for every family $\Ff$ that contains a subcubic planar graph. Here, the main idea was to adjust the notions of protrusions to the graph parameter {\em{tree-cutwidth}} and corresponding {\em{tree-cut decompositions}}, which play the same role for immersions as treewidth and tree decompositions play for minors.

Motivated by this, it is interesting to consider other settings where the protrusion methodology could be applied. A particularly tempting area is that of {\em{directed graphs}}, where natural analogues of problems considered in undirected graphs can be easily stated. Unfortunately, the structural theory of directed graphs is much less understood than that of undirected graphs, and many problems become inherently harder; see e.g.~\cite{GiannopoulouKKK20,GiannopoulouKKK22,KawarabayashiK15}. In particular, there is even a scarcity of fixed-parameter tractability results, not to mention kernelization results.

However, there is a particular class of directed graphs where a sound structural theory has been developed: {\em{tournaments}}. (Here, recall that a {\em{tournament}} is a directed graph where every pair of vertices is connected by exactly one arc). This theory\footnote{In this line of work, most results concern the class of {\em{semi-complete digraphs}}, which differ from tournaments by allowing that a pair of vertices can be also connected by two oppositely-oriented arcs. In this article we focus on the setting of tournaments for simplicity.} was pioneered by Chudnovsky, Ovetsky Fradkin, Kim, and Seymour~\cite{ChudnovskyFS12,ChudnovskyS11,FradkinS13,FradkinS15,KimS15}, while structural and algorithmic aspects connected to parameterized complexity were investigated by Fomin and Pilipczuk~\cite{FoPi}. See the introductory section of~\cite{FoPi} for an overview.

In particular, as proved in the aforementioned works, there are two main width notions for tournaments: {\em{cutwidth}} and {\em{pathwidth}}. The first one is tightly connected to (directed) immersions as follows: if a tournament $T$ excludes a digraph $H$ as an immersion, then the cutwidth of $T$ is bounded by a constant depending only on $H$. Pathwidth is connected to {\em{topological minors}} and {\em{strong minors}} in the same sense. These structural results were used for the design of parameterized algorithms for containment problems in tournaments in~\cite{ChudnovskyFS12,FradkinS13,FoPi}. Later, they were used by Raymond~\cite{Raymond18} and by Bożyk and Pilipczuk~\cite{BoPi} to establish Erd\H{o}s-P\'osa properties for immersions and topological minors in tournaments. 

The goal of this work is to explore the applicability of the structural theory of tournaments for kernelization, with a particular focus of developing a sound protrusion-based methodology.

\subparagraph*{Our contribution.}
For a simple directed graph $H$ without isolated vertices we define the following parameterized problem:

\begin{problem}{\himmt{}}
{\bf Input:} A tournament $T$ and a positive integer $k$.\\
{\bf Parameter:} $k$\\
{\bf Output:} Is there a set $F\subseteq A(T)$, such that $|F|\leq k$ and $T-F$ is $H$-immersion-free?
\end{problem}

That this problem is fixed-parameter tractable is proved in~\cite{FoPi}. 
Our main result states that for every fixed $H$, \himmt{} admits a polynomial kernel, of degree dependent on~$H$. Formally, we prove the following theorem.

\begin{theorem}\label{thm:main}
For every simple digraph $H$ without isolated vertices there exists a constant $c$ and an algorithm that given an instance $(T,k)$ of \himmt{}, runs in polynomial time and returns an equivalent instance $(T',k)$ with $|T'|\leq c\cdot k^{c}$.
\end{theorem}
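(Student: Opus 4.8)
The plan is to follow the protrusion-based methodology, adapted to tournaments and directed immersions, mirroring the strategy of Giannopolou et al.~\cite{GiannopoulouPRT21} for undirected immersions but replacing tree-cut decompositions with the cutwidth/linear structure of tournaments from~\cite{FoPi}. The first step is to compute, in polynomial time, an approximate solution or to decide infeasibility: if $(T,k)$ has a solution $F$ of size at most $k$, then $T-F$ is $H$-immersion-free, hence has cutwidth bounded by a constant $d=d(H)$, and so $T$ itself has a small ``almost-linear'' structure --- an ordering of the vertices in which all but few arcs go forward and all backward arcs are concentrated near the at most $2k$ endpoints of $F$. If no such structure is found, we may reject. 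Having fixed such an ordering $\sigma$ of $V(T)$, we partition the vertex set into a bounded number of ``relevant'' vertices (those incident to backward arcs, of which there are $\O(k)$) and long ``clean'' intervals of consecutive vertices that form highly structured transitive-like blobs communicating with the rest only through a small cut.

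The second, and main, step is \emph{protrusion replacement}. A clean interval $X$ of $\sigma$ that is long enough, together with the $\O(d)$ arcs crossing each of its two boundary cuts, constitutes a protrusion: the sub-tournament on $X$ has cutwidth $\O(d)$, and it interacts with $T-X$ only through $\O(d)$ arcs plus the two ``sides'' of the order. For each protrusion we run a dynamic-programming routine over its linear cut-decomposition that computes a bounded-size signature describing, for every possible pattern of arc-deletions and partial immersion models entering the protrusion from outside, the cheapest way to complete them inside $X$. Since $H$ is fixed, the number of such signatures is bounded by a constant $c=c(H)$; a standard ``representative'' / finiteness-of-indistinguishability argument then shows there is a finite set of ``canonical'' small protrusions such that any protrusion can be replaced by the canonical one with the same signature, strictly decreasing $|V(T)|$ while preserving the answer (after decrementing $k$ by the savings in forced deletions, though here the statement keeps $k$ fixed, so one absorbs this into the equivalence rather than changing the parameter). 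Applying this exhaustively, every clean interval is shrunk to constant length, so the total number of vertices becomes $\O(k)$ times a constant --- giving the claimed bound $|T'|\leq c\cdot k^{c}$.

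The third step is the bookkeeping that turns ``$\O(k)$ clean intervals of constant length plus $\O(k)$ relevant vertices'' into the polynomial bound, and the verification that the reduction is sound: one must argue that an immersion model of $H$ in the reduced tournament lifts back to one in the original (and conversely), using that immersion models decompose along small cuts into bounded ``boundaried'' pieces, which is precisely what the DP signatures track. The hard part will be setting up the right notion of a \emph{boundaried tournament with a linear cut-structure} and proving the corresponding gluing lemma: one needs that (i) immersion models interact with a cut of width $w$ through a bounded amount of information --- essentially which boundary arcs are used and how the traversing paths are paired up, which is $f(w,|H|)$ many possibilities --- and (ii) that two protrusions with the same signature are interchangeable inside \emph{any} host tournament, not merely the one at hand. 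This finiteness and replaceability, in the directed immersion setting and with the extra subtlety that a tournament stays a tournament only if we are careful about which arcs we keep, is the technical heart; everything else (finding the ordering, enumerating protrusions, the DP itself) is routine given the structural results of~\cite{FoPi} quoted above.
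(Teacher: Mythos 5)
Your proposal follows the same high-level blueprint as the paper (bound cutwidth, find protrusions, compute signatures, replace), but it misses the two technical ideas that make the argument work in tournaments, and as a result the claimed bound and the replacement step both have genuine gaps.

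First, the interface of a protrusion in a tournament is \emph{not} small. If $I$ is an interval and $v \in I$, $u \notin I$, then there is always an arc between $u$ and $v$. So the protrusion does not ``communicate with the rest only through $\O(d)$ arcs plus the two sides of the order'': paths from an immersion model can enter and leave $I$ along any of the $|I|\cdot|V\setminus I|$ forward arcs (the paper calls these \emph{generic}), not just the $\O(d)$ backward (\emph{singular}) ones. Your claim that the interaction is captured by ``which boundary arcs are used and how the traversing paths are paired up, which is $f(w,|H|)$ many possibilities'' is exactly the statement that needs a new idea here. The paper's fix is a type/signature formalism that records singular arcs exactly, but records generic entries only up to a marker unless they land in an $r$-boundary of the interval, plus equivalence relations tracking coincidences of endpoints; this compresses the a priori unbounded generic interface to constant size. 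Without something of that flavor, your DP signature over a linear cut-decomposition of $I$ cannot be finite, because the cuts inside $I$ do not separate $I$ from the rest of the tournament.

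Second, even granting finite signatures, the standard protrusion-replacement argument does not go through: one must argue that an optimal solution touches the chosen protrusion in only $\O(1)$ arcs, and in the tournament setting this is not automatic from the small singular interface. The paper proves this (Lemma~13) by first applying Simon Factorization to locate $2k+3$ consecutive intervals with \emph{identical} signature, equal to the signature of their union; then a solution that deletes many arcs incident with the union can be rerouted through a sibling interval untouched by the solution, contradicting optimality. Your proposal has no counterpart to this step. This is also why your claimed kernel size $\O(k)$ cannot be right: the Simon Factorization step forces the interval to be $(2k+3)^{3s(r,c)}$-long, where $s(r,c)$ is a constant depending on $H$, and this is exactly where the paper's bound $c\cdot k^c$ with degree depending on $H$ comes from. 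A linear kernel is in fact stated in the paper as an open question.

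Finally, a smaller issue: you suggest ``decrementing $k$ by the savings in forced deletions'', but the problem's reduction rule and the statement keep $k$ fixed; the paper's replacement lemma produces a smaller tournament $T'$ with the \emph{same} $k$, so the equivalence must be phrased without changing the parameter.
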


We remark that when $H$ is the directed triangle, \himmt{} is equivalent to the {\sc{Feedback Arc Set in Tournaments}} ({\sc{FAST}}) problem. There is a sizeable literature on the parameterized complexity of {\sc{FAST}}, see e.g.~\cite{AlonLS09,Feige09,FoPi,KarpinskiS10}, mostly due to the fact that it admits a subexponential parameterized algorithm with running time $2^{\Oh(\sqrt{k})}\cdot n^{\Oh(1)}$. Kernelization procedures for {\sc{FAST}} were investigated by Bessy et al. in~\cite{BessyFGPPST11}, while kernelization of the dual problems of packing arc-disjoint triangles and packing arc-disjoint cycles in tournaments were recently studied by Bessy et al. in~\cite{BessyBKSSTZ21}.

On a very high conceptual level, the proof of \cref{thm:main} follows the classic blueprint of protrusion-based kernelization, like in e.g.~\cite{FominLMS12,GiannopoulouPRT21}. That is, if $(T,k)$ is a given instance of \himmt{}, we perform the following steps.
\begin{itemize}
 \item We may assume that the cutwidth of $T$ is bounded polynomially in $k$, for otherwise in $T$ one can find $k+1$ arc-disjoint immersion models of $H$; these witness a negative answer to the instance.
 \item Assuming that $T$ is large --- of size superpolynomial in $k$ --- but has cutwidth bounded polynomially in $k$, we may find in $T$ a large {\em{protrusion}}. Here, a protrusion is an interval $I$ in the vertex ordering $\sigma$ witnessing small cutwidth such that $\sigma$ restricted to $I$ witnesses that $T[I]$ has constant cutwidth, and there is only a constant number of $\sigma$-backward arcs with one endpoint in $I$ and the other outside of $I$. These are instantiations of the two desired properties of a protrusion: it has to have bounded width and communicate with the rest of the graph through a boundary of bounded size.
 \item We can replace the obtained protrusion with a strictly smaller one of the same ``type'', thus obtaining a strictly smaller equivalent instance. Applying this strategy exhaustively eventually yields a kernel of polynomial size.
\end{itemize}

Compared to the previous works, the main difficulty is to tame the interaction between a protrusion and the remainder of the instance. Namely, this interaction is not restricted to a set of vertices or arcs of constant size: as we work with tournaments, every vertex of a protrusion will necessarily have an arc connecting it to every vertex outside of the protrusion. The idea is that all but a constant number of those arcs will be forward arcs in the fixed vertex ordering $\sigma$ with bounded cutwidth. We call those well-behaved forward arcs {\em{generic}}, while the remaining constantly many backward arcs are {\em{singular}}. Understanding the interaction between a protrusion and the rest of the tournament as being governed by few singular arcs and a large number of well-behaved generic arcs is the crux of our approach.

In particular, while looking for a large replaceable protrusion, we have to be extremely careful when arguing about how such a protrusion may interact with optimum solutions. Here, a key step is to find several protrusions that appear consecutively in $\sigma$ (recall that our protrusions are intervals in $\sigma$), have the same {\em{type}} (in the sense of admitting partial immersions of $H$), and such that their union is a protrusion of again the same type. This step is done using Simon Factorization, a tool commonly used in the theory of automata and formal languages. Simon Factorization was recently used a few times in structural graph theory~\cite{BonamyP20,NesetrilMPRS21,NesetrilRMS20}, but we are not aware of any previous application in the context of kernelization.

The application of Simon Factorization is also the only step in the reasoning that causes the degree of the polynomial bounding the size of our kernel to depend on $H$. It is an interesting open question whether this can be improved, or in other words, whether there is a kernel of size at most $c\cdot k^d$, where $c$ may depend on $H$ but $d$ does not. Judging by the results on hitting immersions in undirected graphs~\cite{GiannopoulouPRT21}, we expect that this might be the case.

\section{Preliminaries}\label{sec:prelims}

We use the standard terminology and notation for describing immersions in tournaments and for cutwidth of digraphs and of tournaments. This terminology and notation is borrowed mostly, and often in a verbatim form, from the work of Bożyk and Pilipczuk~\cite{BoPi}.

For a positive integer $n$, we denote $[n]\coloneqq \{1,\ldots,n\}$ and $[-n]=\{-1,\ldots,-n\}$.

We use standard graph terminology and notation.
All graphs considered in this paper are finite, simple (i.e. without self-loops or multiple arcs with same head and tail), and directed (i.e. are {\em{digraphs}}). For a digraph $D$, by $V(D)$ and $A(D)$ we denote the vertex set and the arc set of $D$, respectively. We denote
\[|D|\coloneqq |V(D)|\qquad \textrm{and}\qquad \|D\|\coloneqq |A(D)|.\]
For $X\subseteq V(D)$, the subgraph {\em{induced}} by $X$, denoted $D[X]$, comprises of the vertices of $X$ and all the arcs of $D$ with both endpoints in $X$.
By $D-X$ we denote the digraph $D[V(D)\setminus X]$. Further, if $F$ is a subset of arcs of $D$, then by $D-F$ we denote the digraph obtained from $D$ by removing all the arcs of $F$.
For $X,Y\subseteq V(D)$ we denote by $\dirE(X,Y)$ the set of all arcs $(u,v)\in A(D)$ such that $u\in X$ and $v\in Y$ and moreover $A(X,Y)\coloneqq \dirE(X,Y)\cup\dirE(Y,X)$. For an arc $a=(u,v)\in A(D)$ we  define $\tail(a)\coloneqq u$ and $\head(a)\coloneqq v$. For a directed (not necessarily simple) path $P$ we denote by $\first(P)$ and $\last(P)$ the first and the last arcs on path $P$, respectively.

\subparagraph*{Tournaments.}
A simple digraph $T=(V,A)$ is called a \emph{tournament} if for every pair of distinct vertices $u,v\in V$, either $(u,v)\in A$, or $(v,u)\in A$ (but not both). Alternatively, one can represent the tournament $T$ by providing a pair $(\sigma,\barcs_\sigma(T))$, where $\sigma\colon V\to [|V|]$ is an {\em{ordering}} of the set $V$ and $\barcs_{\sigma}(T)$ is the set of \emph{$\sigma$-backward arcs}, that is, \[\barcs_{\sigma}(T)\coloneqq\{\,(u,v)\in A\ \mid\ \sigma(u)>\sigma(v)\,\}.\] All the remaining arcs are called \emph{$\sigma$-forward}. If the choice of ordering $\sigma$ is clear from the context, we will call the arcs simply \emph{backward} or \emph{forward}. For $\alpha,\beta\in \{0,1,\ldots,|V|\}$, $\alpha\leq \beta$, we define \[\sigma(\alpha,\beta]\coloneqq \{v\in V\mid \alpha<\sigma(v)\leq\beta\}.\] Sets $\sigma(\alpha,\beta]$ as defined above will be called \emph{$\sigma$-intervals}. 

\subparagraph*{Cutwidth.}
Let $T=(V,A)$ be a tournament and $\sigma$ be an ordering of $V$. 
The set \[\cut_\sigma[\alpha]=\{(u,v)\in A\mid \sigma(u)>\alpha\geq\sigma(v)\}\subseteq\barcs_\sigma(T)\] is called the \emph{$\alpha$-cut} of $\sigma$. The {\em{width}} of the ordering $\sigma$ is equal to $\max_{0\leq\alpha\leq |V|}|\cut[\alpha]|$, and the \emph{cutwidth} of~$T$, denoted $\ctw(T)$, is the minimum width among all orderings of $V$.

It is perhaps a bit surprising that in tournaments, there is a very simple algorithm to compute an ordering of optimum width: just sort the vertices by outdegrees.

\begin{lemma}[see~\cite{BarberoPP18,fradkin-thesis}]\label{lem:sorting}
 Let $T$ be a tournament and $\sigma$ be an ordering of $T$ satisfying the following for every pair of vertices $u$ and $v$: if $u$ appears before $v$ in $\sigma$, then the outdegre of $u$ is not smaller than the outdegree of~$v$. Then the width of $\sigma$ is equal to $\ctw(T)$.
\end{lemma}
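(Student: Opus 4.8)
The plan is to establish the nontrivial inequality, namely that the width of $\sigma$ is at most $\ctw(T)$; the reverse inequality is immediate from the definition of cutwidth. Write $n=|V(T)|$ and, for a vertex $v$, let $d^+(v)$ denote its out-degree in $T$. The first step I would take is to observe that for \emph{any} ordering $\tau$ of $V(T)$ and any $\alpha\in\{0,1,\dots,n\}$, the size of the $\alpha$-cut of $\tau$ depends only on the set $L\coloneqq\tau(0,\alpha]$ of the first $\alpha$ vertices, and not on how $L$ or its complement are internally arranged. Indeed, with $R\coloneqq V(T)\setminus L$, in a tournament the pair $\{L,R\}$ spans exactly $|L|\cdot|R|=\alpha(n-\alpha)$ arcs, of which $\cut_\tau[\alpha]$ is precisely the set $\dirE(R,L)$; hence $|\cut_\tau[\alpha]|=\alpha(n-\alpha)-|\dirE(L,R)|$. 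Furthermore $T[L]$ is a tournament on $\alpha$ vertices, so the arcs with both endpoints in $L$ number $\binom{\alpha}{2}$, and splitting the out-degree of each $v\in L$ into its part inside $L$ and its part towards $R$ yields $\sum_{v\in L}d^+(v)=\binom{\alpha}{2}+|\dirE(L,R)|$. Combining these identities,
\[
|\cut_\tau[\alpha]| \;=\; \alpha(n-\alpha)+\binom{\alpha}{2}-\sum_{v\in L}d^+(v).
\]

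With this identity in hand, the rest is a one-line exchange argument, which I would carry out next. Fixing $\alpha$, among all $\alpha$-element subsets $L\subseteq V(T)$ the quantity $\sum_{v\in L}d^+(v)$ is maximized exactly when $L$ collects $\alpha$ vertices of largest out-degree. By the hypothesis on $\sigma$ (vertices appear in non-increasing order of out-degree), the prefix $\sigma(0,\alpha]$ is such a maximizer, so for every ordering $\tau$ we obtain
\[
|\cut_\sigma[\alpha]| \;=\; \alpha(n-\alpha)+\binom{\alpha}{2}-\sum_{v\in\sigma(0,\alpha]}d^+(v)\;\le\;\alpha(n-\alpha)+\binom{\alpha}{2}-\sum_{v\in\tau(0,\alpha]}d^+(v)\;=\;|\cut_\tau[\alpha]|.
\]
Taking the maximum over $\alpha$ on both sides gives that the width of $\sigma$ is at most the width of $\tau$, and since $\tau$ is arbitrary, the width of $\sigma$ is at most $\ctw(T)$, which finishes the argument.

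I do not expect any genuine obstacle here; the proof is short and essentially folklore, which is why the statement cites it rather than proving it. The only point deserving a moment of care is the first observation, that the $\alpha$-cut size is insensitive to the internal order of the left and right parts and depends only on which vertices form the length-$\alpha$ prefix. This is precisely what reduces the comparison of two arbitrary orderings to a comparison of two $\alpha$-element vertex sets, and what makes the greedy choice ``sort by non-increasing out-degree'' simultaneously optimal for every value of $\alpha$ rather than for a single one.
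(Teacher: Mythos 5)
The paper does not prove this lemma; it cites \cite{BarberoPP18,fradkin-thesis} and uses it as a black box, so there is no in-paper argument to compare against. Your proof is correct and is the standard folklore argument: the identity $|\cut_\tau[\alpha]| = \alpha(n-\alpha)+\binom{\alpha}{2}-\sum_{v\in\tau(0,\alpha]}d^+(v)$ shows that the $\alpha$-cut size depends only on which $\alpha$ vertices form the prefix, and is pointwise minimized (simultaneously for every $\alpha$) by greedily taking vertices in non-increasing out-degree order. The only step worth double-checking is the final passage from pointwise dominance $|\cut_\sigma[\alpha]|\le|\cut_\tau[\alpha]|$ for all $\alpha$ to dominance of the maxima, and this is sound: if $\alpha^\star$ maximizes the left side, then $\max_\alpha|\cut_\sigma[\alpha]|=|\cut_\sigma[\alpha^\star]|\le|\cut_\tau[\alpha^\star]|\le\max_\alpha|\cut_\tau[\alpha]|$. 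So the argument is complete and matches what the cited sources prove.
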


If $I=\sigma(\alpha,\beta]$, then we denote
\[\sing^+(I)\coloneqq \dirE(I,\sigma(0,\alpha])\quad\text{and}\quad \sing^-(I)\coloneqq \dirE(\sigma(\beta,|V|],I).\]
Note that $\sing^+(I)\subseteq \cut_\sigma[\alpha]$ and $\sing^-(I)\subseteq \cut_\sigma[\beta]$ and therefore $|\sing^+(I)|\leq c$ and $|\sing^-(I)|\leq c$, where $c$ is the width of $\sigma$. These inclusions may be proper, as the arcs from the set $\flover(I)\coloneqq \dirE(\sigma(\beta,|V|],\sigma(0,\alpha])$ contribute to the cuts but are not incident with $I$.
We define $\sing(I)\coloneqq\sing^+(I)\cup\sing^-(I)$ and call the elements of $\sing(I)$ \emph{$I$-singular} (or simply \emph{singular}) arcs.
Moreover, we define
\[\gen^+(I)\coloneqq\dirE(I,\sigma(\beta,|V|]),\qquad \gen^-(I) \coloneqq \dirE(\sigma(0,\alpha],I),\quad \textrm{and}\quad \gen(I)=\gen^+(I)\cup\gen^-(I),\]
and call the elements of $\gen(I)$ \emph{$I$-generic} (or simply \emph{generic}) arcs.

If $I'=V-I$ where $I=\sigma(\alpha,\beta]$, then we call the set $I'$ a \emph{co-interval} and define \emph{$I'$-singular} and \emph{$I'$-generic} arcs as follows
\[\sing^-(I')\coloneqq \sing^+(I),
\quad \sing^+(I')\coloneqq\sing^-(I),
\quad \sing(I')\coloneqq\sing^+(I')\cup\sing^-(I')=\sing(I), \]
\[\gen^-(I')\coloneqq\gen^+(I),
\quad\gen^+(I')\coloneqq\gen^-(I),
\quad \gen(I')\coloneqq \gen^-(I')\cup\gen^+(I')=\gen(I).\]

\subparagraph*{Immersions.}
A digraph $\wh{H}$ is an \emph{immersion model} (or briefly a \emph{copy}) of a digraph $H$ if there exists a mapping $\phi$, called an {\em{immersion embedding}}, such that: 
\begin{itemize}
\item vertices of $H$ are mapped to pairwise different vertices of $\whH$;
\item each arc $(u,v)\in A(H)$ is mapped to a directed path in $\whH$ starting at $\phi(u)$ and ending at $\phi(v)$; and
\item each arc of $\whH$ belongs to exactly one of the paths $\{\phi(a)\colon a\in A(H)\}$.
\end{itemize}
If it does not lead to misunderstanding, we will sometimes slightly abuse the above notation by identifying $\phi$ and $\wh{H}$ and calling $\phi$ the immersion model of $H$.

Let $H$ be a digraph.
We say that a digraph $G$ \emph{contains $H$ as an immersion} (or $H$ can be \emph{immersed} in $G$) if $G$ has a subgraph that is an immersion model of $H$. Digraph $G$ is called \emph{$H$-immersion-free} (or \emph{$H$-free} for brevity) if it does not contain $H$ as an immersion.

We will use the following result of Fomin and Pilipczuk~\cite{FoPi}.

\begin{theorem}[Theorem~7.3 of~\cite{FoPi}]\label{thm:ctw-bound}
Let $T$ be a tournament which does not contain a digraph $H$ as an immersion. Then $\ctw(T)\in \O((|H|+\|H\|)^2)$.
\end{theorem}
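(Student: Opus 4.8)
I would prove the contrapositive: the goal is an absolute constant $c_0$ such that whenever $\ctw(T)>c_0\cdot(|H|+\|H\|)^2$, the tournament $T$ contains $H$ as an immersion. Write $p=|H|$ and $q=\|H\|$; since $H$ has no isolated vertices, $p\le 2q$, so it suffices to make $\ctw(T)$ large compared with $p^2+q^2$.

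The first step is to pass to a canonical ordering. By \cref{lem:sorting}, ordering $V(T)$ by non-increasing out-degree gives an ordering $\sigma$ of width exactly $\ctw(T)$, so some $\alpha$-cut of $\sigma$ has size $\ctw(T)$. Writing $L=\sigma(0,\alpha]$, $R=\sigma(\alpha,|V(T)|]$ and $k=|R|$, and using that $\cut_\sigma[\alpha]=\dirE(R,L)$, that every arc with tail in $R$ either stays in $R$ or goes to $L$, and that $T[R]$ is a tournament, one gets
\[
\ctw(T)=|\dirE(R,L)|=\Bigl(\sum_{u\in R}d^+(u)\Bigr)-\binom{k}{2},
\]
where $R$ consists of the $k$ vertices of smallest out-degree in $T$. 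So large cutwidth means that the low-out-degree set $R$ collectively sends surprisingly many arcs against $\sigma$, into the high-out-degree set $L$; these ``backward'' arcs are exactly what is needed to realize the parts of an immersion model of $H$ that run against $\sigma$.

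The heart of the argument is to convert this single large cut into a structured routing object, and here I would appeal to the degree-tangle/matching-tangle methodology from the structural theory of tournaments~\cite{ChudnovskyFS12,FoPi}. From the displayed identity, a pigeonhole-and-greedy argument produces, at the cost of a square root, a substructure of size $m=\Omega(\sqrt{\ctw(T)})=\Omega(p+q)$ of one of two kinds: a \emph{degree tangle}, a set of $m$ vertices whose out-degrees all lie in a window of length less than $m$; or a \emph{matching tangle}, a set of $m$ pairwise vertex-disjoint arcs running from $R$ to $L$. In either case one then extracts a \emph{jungle} of order $p$: a set $S\subseteq V(T)$ with $|S|=p$ such that between every ordered pair of distinct vertices $u,v\in S$ there are at least $3q$ pairwise arc-disjoint directed paths from $u$ to $v$, each of length at most $3$ and with all internal vertices outside $S$. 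Given such a jungle, $H$ immerses routinely by a greedy argument: fix an arbitrary bijection between $V(H)$ and $S$ and process the arcs of $H$ one at a time, sending each arc $(a,b)$ to one of the prescribed short paths from the image of $a$ to the image of $b$ that avoids the fewer than $3q$ arcs already used by the paths chosen so far; such a path exists because at least $3q$ of them are available for that pair. The resulting paths are pairwise arc-disjoint, internally avoid the vertex images, and join the correct endpoints, so they constitute an immersion embedding of $H$ into $T$ --- contradicting $H$-immersion-freeness.

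The main obstacle is precisely this middle step --- extracting a jungle of order $p$, with enough paths per pair relative to $\|H\|$, from a single large cut of backward arcs; the degree-tangle case is the delicate one, since there the backward arcs need not be vertex-disjoint and the required short detour paths must be synthesized from the near-equality of out-degrees rather than read off directly. It is also this step that makes the dependence on $|H|+\|H\|$ quadratic rather than linear, through the $\sqrt{\ctw(T)}$ loss when passing from the size of the cut to the size of the extracted tangle. The reduction to the sorted ordering and the final greedy routing are routine bookkeeping.
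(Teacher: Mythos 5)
This statement is quoted verbatim from Fomin and Pilipczuk~\cite{FoPi} and the paper under review does not re-prove it, so there is no in-paper argument to compare against. Your sketch nonetheless correctly reconstructs the route taken in~\cite{FoPi}: passing to the out-degree ordering (the paper's \cref{lem:sorting}), expressing the largest cut via $|\dirE(R,L)| = \sum_{u\in R} d^+(u) - \binom{k}{2}$, extracting a degree tangle or matching tangle at a square-root loss, upgrading to a short jungle, and finishing with the greedy arc-disjoint routing of $H$ --- and you honestly identify the tangle-to-jungle extraction as the technical heart, which is indeed where the substance of that proof lies.
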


\begin{corollary}\label{cor:const-ctw}
For every digraph $H$ there exists a constant $\Cctw$ such that for every $H$-free tournament $T$, we have $\ctw(T)\leq \Cctw$. 
\end{corollary}

\pagebreak[3]

Throughout this paper we fix a simple digraph $H$ without isolated vertices and an integer $k\in \N$.
For a tournament $T$, a set $F\subseteq A(T)$ is called a \emph{solution} if $T-F$ is $H$-free. Moreover, $F$ is an \emph{optimal solution} if it is a solution of the smallest possible size. So $(T,k)$ is a YES-instance of \himmt{} if and only if in $T$ there exists an optimal solution of size at most $k$.

\subparagraph*{Monoids and Simon factorization.}
Simon factorization was originally developed by Simon in~\cite{SIMON} and the currently best bounds are due to Kufleitner~\cite{Kufleitner08}. See also the work of Boja\'nczyk~\cite{BojanczykFF} for a nice exposition; we mostly follow the notation from that source.

Let $S$ be a finite monoid (i.e., a finite set equipped with an associative binary operation $\cdot$ and a neutral element $1$).  An element $e\in S$ is called \emph{idempotent} if $e\cdot e=e$. For a finite alphabet $A$, by $A^\star$ we denote the set of all finite words over $A$, and a {\em{morphism}} $\alpha\colon A^\star\to S$ is a function satisfying $\alpha(\varepsilon)=1$ ($\varepsilon$ being the empty word) and $\alpha(w_1w_2)=\alpha(w_1)\cdot\alpha(w_2)$ for every $w_1,w_2\in A^\star$. Note that a morphism is uniquely defined by the images of single symbols from $A$.
The following lemma is a direct consequence of Simon Factorization, see \cref{sec:appa} for a derivation.

\begin{lemma}\label{lem:Simon}
Let $S$ be a finite monoid, $A$ be a finite alphabet, and $\alpha\colon A^\star\to S$ be a morphism.
Suppose $w\in A^\star$ is a word of length at least $\ell^{3|S|}$. Then there exists a subword $w'$ of $w$ and an idempotent $e\in S$ such that $w'=w_1w_2\ldots w_\ell$, where $w_i\in A^\star$ are nonempty subwords of $w$ and 
\[\alpha(w_1)=\alpha(w_2)=\ldots=\alpha(w_\ell)=e.\]
\end{lemma}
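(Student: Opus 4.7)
The plan is to invoke Simon's Factorization Theorem (in Kufleitner's sharpened form) and then extract the claimed factorization by a pigeonhole argument on the resulting factorization tree. Recall that this theorem guarantees, for every morphism $\alpha\colon A^\star\to S$ and every word $w\in A^\star$, the existence of a rooted ordered \emph{factorization tree} whose leaves correspond to the letters of $w$ read from left to right, and whose internal nodes are of two kinds: \emph{binary nodes} with exactly two children, and \emph{idempotent nodes} with at least two children all of which map under $\alpha$ to a common idempotent element $e\in S$. Crucially, the depth of such a tree can be made at most $3|S|$, which is precisely the exponent appearing in the hypothesis $|w|\ge \ell^{3|S|}$.

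Assume first that $\ell\ge 3$; the cases $\ell\le 2$ are either vacuous or immediate from any nontrivial factorization together with the fact that the monoid contains an idempotent. Suppose toward contradiction that no internal node of the factorization tree of $w$ has $\ell$ or more children. Then every internal node has at most $\ell-1$ children: binary nodes have exactly $2\le \ell-1$ children, and idempotent nodes have at most $\ell-1$ children by assumption. A rooted tree of depth at most $3|S|$ with this branching bound has at most $(\ell-1)^{3|S|} < \ell^{3|S|}$ leaves, contradicting that the tree has $|w|\ge \ell^{3|S|}$ leaves. Hence some internal node $v$ of the factorization tree has at least $\ell\ge 3$ children, and since binary nodes have exactly two children, $v$ must be an idempotent node.

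Let $e\in S$ be the common idempotent image of the children of $v$. Picking any $\ell$ consecutive children of $v$ and letting $w_1,w_2,\ldots,w_\ell$ be the corresponding contiguous factors of $w$, each $w_i$ is nonempty (every subtree of the factorization tree spans at least one leaf), they appear consecutively in $w$, and each satisfies $\alpha(w_i)=e$. Their concatenation $w'=w_1w_2\cdots w_\ell$ is then a contiguous factor of $w$ with all the required properties.

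The hard part is nothing more than a single pigeonhole estimate on the factorization tree; no technical obstacle is expected. The only point requiring care is invoking the sharpest available depth bound, namely Kufleitner's $3|S|$, rather than weaker bounds from Simon's original paper, so that the exponent $3|S|$ in the hypothesis is matched exactly. The derivation of the appropriate version of Simon's theorem is left to \cref{sec:appa}.
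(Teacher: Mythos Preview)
Your proposal is correct and follows essentially the same approach as the paper's own derivation in the appendix: invoke Simon's Factorization Forest Theorem with Kufleitner's height bound $3|S|$, then argue by pigeonhole that a tree of this height with branching below $\ell$ cannot have $\ell^{3|S|}$ leaves, forcing a high-degree (hence idempotent) node. The only cosmetic differences are that you separate out the case $\ell\le 2$ explicitly (the paper tacitly assumes $\ell\ge 3$, which is all that is needed for the application) and that you select $\ell$ consecutive children where the paper instead concatenates the surplus children into the last block.
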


Note that in the setting of \cref{lem:Simon}, given a word $w\in A^\star$ of length $n\geq \ell^{3|S|}$, one can easily find $w'$ and a suitable decomposition $w'=w_1w_2\ldots w_\ell$ in time $\Oh(|S|\cdot n^3)$ assuming unit cost of operations in $S$. Indeed, one can guess $e$ (by trying at most $|S|$ possibilities) and the first position of $w'$ within $w$ (by trying $n$ possibilities), and then for every subword $w''$ starting at this position compute the longest possible decomposition of the form $w''=w_1w_2\ldots w_{\ell'}$ such that $\alpha(w_1)=\ldots=\alpha(w_{\ell'})=e$, if existent. The latter can be done by a standard left-to-right dynamic programming in time $\Oh(n^2)$.

\section{Partial immersions}\label{sec:partial-immersions}

Our goal in this section is to extend the notion of an immersion to {\em{partial immersions}}. These will be used to understand possible behaviors of immersion models in a tournament $T$ with respect to different intervals in an ordering of the vertex set of $T$.
Let then $T=(V,A)$ be a tournament and let $\sigma$ be an ordering of $V$. For now, fix a $\sigma$-interval $I\coloneqq\sigma(\alpha,\beta]$.


\begin{definition}
A \emph{scattered path} in $I$ of {\em{size}} $q\geq 0$ is a sequence $\scat{P}=(P_i)_{i=1}^q$ satisfying the following properties:
\begin{itemize}
\item for each $i\in [q]$, $P_i$ is a directed (simple) path of length at least $1$ consisting of arcs that belong to $A(T[I])\cup\sing(I)\cup\gen(I)$;
\item paths $P_i$ for $i\in [q]$ are pairwise arc-disjoint;
\item for every $i\in [q]$, $i\neq 1$, we have $\first(P_i)\in \gen^-(I)\cup\sing^-(I)$;
\item for every $i\in [q]$, $i\neq q$, we have $\last(P_i)\in \gen^+(I)\cup\sing^+(I)$.
\end{itemize}
Each term in the sequence $(P_i)_{i=1}^q$ will be called a \emph{piece} of $\scat{P}$. The set of arcs of all pieces of $\scat{P}$ is denoted $A(\scat{P})$. If $\first(P_i)\in\gen^-(I)$ and $\last(P_i)\in\gen^+$, then the piece $P_i$ is called \emph{generic}.
\end{definition}
Note that $\first(P_1)$ is allowed to be an arc in the set $A(T[I])\cup\gen^+(I)\cup \sing^+(I)$. If it is such, we call the vertex $\tail(\first(P_1))\in I$ the \emph{beginning} of $\scat{P}$ and denote it by $\First(\scat{P})$. 
Similarly, if $\last(P_q)\in A(T[I])\cup \gen^-(I)\cup\sing^-(I)$, then we call the vertex $\head(\last(P_q))$ the \emph{end} of $\scat{P}$ and denote it by $\Last(\scat{P})$. 
Note that the empty sequence is a scattered path of size $0$. Also, a scattered path with only one piece, whose both beginning and end exist, is just a path in $T[I]$.
By $\mathcal{P}_I$ we denote the family of all scattered paths in~$I$.

We say that a scattered path $\scat{P}=(P_i)_{i=1}^q$ in $I$ can be \emph{shortened} to a scattered path $\scat{P}'$ (or that $\scat{P}'$ is a \emph{shortening} of $\scat{P}$) if:
\begin{itemize}
\item $\First(\scat{P})=\First(\scat{P}')$ and $\Last(\scat{P})=\Last(\scat{P}')$ (meaning either equal or simultaneously undefined);
\item for each piece $P'_s$ of $\scat{P}'$ there exist indices $i_s^-\leq i_s^+$ such that $\tail(\first(P'_s))=\tail(\first(P_{i_s^-}))$ and $\head(\last(P'_s))=\head(\last(P_{i_s^+}))$; and
\item whenever $s<s'$, we have $i_s^+<i_{s'}^-$ and $P'_s$ appears before $P'_{s'}$ in $\scat{P}'$.
\end{itemize}
Intuitively, shortening of the path means removing some pieces and replacing several contiguous subsequences of the pieces with single pieces, keeping the tail of the beginning and the head of the end of the replaced subsequence. Note that in particular, some pieces of $\scat{P}$ can be simply omitted in $\scat{P}'$ (other than the initial and the terminal one).

\begin{definition}
A \emph{partial immersion embedding} of $H$ in $I$ (or shortly, a \emph{partial immersion} in $I$) is a mapping $\phi\colon A(H)\to \mathcal{P}_I$ such that
\begin{itemize}
\item all scattered paths $\phi(a)$ for $a\in A(H)$ are pairwise arc-disjoint;
\item if $\tail(a)=\tail(a')$ then $\First(\phi(a))$ and $\First(\phi(a'))$ are either equal, or simultaneously undefined;
\item if $\First(\phi(a))$ and $\First(\phi(a'))$ are defined and equal, then $\tail(a)=\tail(a')$;
\item if $\head(a)=\head(a')$ then $\Last(\phi(a))$ and $\Last(\phi(a'))$ are either equal, or simultaneously undefined;
\item if $\Last(\phi(a))$ and $\Last(\phi(a'))$ are defined and equal, then $\head(a)=\head(a')$.
\end{itemize}
\end{definition}
Intuitively, we can think of a partial immersion as of a ``trace'' which some immersion model $\wh{H}$ of $H$ in $T$ ``leaves'' on the interval $I$. Some edges of $H$ have images being paths in $\wh{H}$ non-incident with $I$ (these correspond to empty scattered paths in the partial immersion embedding). Some images of arcs of $H$ come back and forth to $I$, intersecting with $I$ along a non-empty scattered path (the ordering of paths on a single scattered path corresponds to the order of their appearance along the image of the respective arc of $H$). Finally, some arc images begin or end within $I$, which corresponds to the case when the beginning or the end of a scattered path is defined and is a vertex of $I$.

We call a partial immersion $\phi'$ in $I$ a \emph{shortening} of $\phi$ in $I$ if for every $a\in A(H)$, the scattered path $\phi'(a)$ is a shortening of $\phi(a)$. We call $\phi$  \emph{minimal} if there is no  shortening of $\phi$ with at least one scattered path of strictly smaller size. Note that $\phi$ may be minimal even if some piece of some $\phi(a)$ can be replaced by a different single piece with equal first and last vertices. Shortening which does not decrease the size of any scattered path will be called \emph{trivial}.

Note that each immersion model $\phi$ of $H$ in $T$ is a partial immersion in $V(T)$, in which all scattered paths $\phi(a)$, $a\in A(H)$, are paths in $T$ beginning and ending at $\phi(\tail(a))$ and $\phi(\head(a))$, respectively. Moreover, each partial immersion $\phi$ in $I$ gives rise to a natural partial immersion of $H$ in $J\subseteq I$ in which all paths $\phi(a)$ where $a\in A(H)$ are ``trimmed'' to scattered paths consisting of precisely those arcs which are incident with $J$. 

Formally, let $I$ and $J$ be $\sigma$-intervals such that $J\subseteq I$. If $P$ is a path in $I$, then define the \emph{trace} $P|_{J}$ of $P$ on $J$ to be the scattered path consisting of all arcs of $P$ incident with $J$, arranged in the order of appearance along $P$. If $\scat{P}=(P_i)_{i=1}^q$ is a scattered path in $I$, then define the \emph{trace $\scat{P}|_{J}$} of $\scat{P}$ on $J$ to be the concatenation of scattered paths $P_i|_J$. The \emph{trace} $\phi|_{J}$ of a partial immersion $\phi$ of $H$ in $I$ on $J$ is defined by setting $\phi|_{J}(a)=(\phi(a))|_{J}$ for every $a\in A(H)$.

Consider any $\sigma$-intervals $I_1$, $I_2$ with the property that there exist $\alpha,\beta,\gamma\in \{0,1,\ldots,|V|\}$ such that $I_1=\sigma(\alpha,\gamma]$ and $I_2=\sigma(\gamma,\beta]$. We will call such a pair of intervals \emph{consecutive}. Equivalently, two disjoint intervals are consecutive  if their union $I=I_1\cup I_2$ is a $\sigma$-interval as well. Let $\scat{P}_1$ be a scattered path in $I_1$ and $\scat{P}_2$ be a scattered path in $I_2$. We say that $\scat{P}_1$ and $\scat{P}_2$ are \emph{compatible} if
\begin{itemize}
\item $A(\scat{P}_1)\cap A(I_1,I_2)=A(\scat{P}_2)\cap A(I_1,I_2)$;
\item the set of pieces whose arc set is $A(\scat{P}_1)\cup A(\scat{P}_2)$ can be ordered to form a scattered path $\scat{P}$ in $I$ with the property that all pieces of $\scat{P_i}$ appear in $\scat{P}$ in the same order as they do in $\scat{P_i}$, for $i=1,2$.
\end{itemize}
Every $\scat{P}$ described above will be called a \emph{gluing} of $\scat{P}_1$ and $\scat{P}_2$. Note that a gluing is not necessarily uniquely defined, which can be seen particularly well when $A(\scat{P}_1)\cap A(I_1,I_2)=A(\scat{P}_2)\cap A(I_1,I_2)=\emptyset$ --- in this case the pieces of both paths can be  ``shuffled'' in any way only keeping the order of pieces originating from the same path.

\begin{observation}
If $\scat{P}_1$ is compatible with $\scat{P}_2$ and $\scat{P}'_2$ is a shortening of $\scat{P}_2$, then there exists a shortening $\scat{P}'_1$ of $\scat{P}_1$ such that $\scat{P}'_1$ and $\scat{P}'_2$ are compatible and every gluing of them is a shortening of some gluing of $\scat{P}_1$ and~$\scat{P}_2$.  
\end{observation}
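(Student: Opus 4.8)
The plan is to work piece-by-piece along the scattered path $\scat{P}_2$, tracking how the shortening $\scat{P}_2'$ of $\scat{P}_2$ was obtained. Recall that $\scat{P}_2'$ arises from $\scat{P}_2$ by deleting some pieces and collapsing contiguous runs of pieces into single pieces that retain the tail of the first and the head of the last piece of the run. I would fix a gluing $\scat{P}$ of $\scat{P}_1$ and $\scat{P}_2$, which exists by compatibility, and use the linear order of the pieces of $\scat{P}$ to coordinate the construction. Note that every piece of $\scat{P}$ is a piece of either $\scat{P}_1$ or $\scat{P}_2$ (their arc sets partition $A(\scat{P}_1)\cup A(\scat{P}_2)$ up to the shared arcs in $A(I_1,I_2)$, which lie on a common piece), so the pieces of $\scat{P}_1$ inherit a total order from $\scat{P}$ that is consistent with their order in $\scat{P}_1$.

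Next I would build $\scat{P}_1'$ as follows. The collapsing operation on $\scat{P}_2$ groups the pieces of $\scat{P}_2$ into blocks (singletons for retained-but-uncollapsed pieces, longer runs for collapsed ones, and ``deleted'' pieces form no block). I would lift this blocking to the pieces of $\scat{P}$: a maximal run of consecutive pieces of $\scat{P}$ that belong (as pieces of $\scat{P}_2$) to a single block of $\scat{P}_2$, together with any pieces of $\scat{P}_1$ interleaved with them, should be collapsed into a single piece of the gluing $\scat{P}'$ we are constructing; the corresponding sub-run consisting only of the $\scat{P}_1$-pieces gets collapsed into one piece of $\scat{P}_1'$. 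Pieces of $\scat{P}_1$ lying outside all such runs are kept verbatim in $\scat{P}_1'$ (we never delete pieces of $\scat{P}_1$, so $\First$ and $\Last$ of $\scat{P}_1$ are preserved, as required for a shortening). The key point to check is that each collapsed block is a genuine run of \emph{consecutive} pieces in $\scat{P}$: this follows because the pieces collapsed in $\scat{P}_2$ are consecutive in $\scat{P}_2$, hence their traces and the interleaved $\scat{P}_1$-pieces occupy a contiguous interval in $\scat{P}$. Then $\scat{P}_1'$ is a legitimate shortening of $\scat{P}_1$ (the index bounds $i_s^-\le i_s^+$ and the ordering condition follow from the contiguity just argued), and by construction the resulting collapsed sequence $\scat{P}'$ of pieces of $\scat{P}_1'$ and $\scat{P}_2'$ has the property that the $\scat{P}_1'$-pieces and $\scat{P}_2'$-pieces each appear in their original relative order — so $\scat{P}'$ is a gluing of $\scat{P}_1'$ and $\scat{P}_2'$, and it is visibly a shortening of $\scat{P}$.

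It remains to verify compatibility of $\scat{P}_1'$ and $\scat{P}_2'$, i.e. that $A(\scat{P}_1')\cap A(I_1,I_2)=A(\scat{P}_2')\cap A(I_1,I_2)$. This is where I expect the only real bookkeeping: shortening can drop pieces and hence drop cross arcs, so I would argue that a cross arc survives into $\scat{P}_2'$ exactly when it lies on a retained (non-deleted) piece of $\scat{P}_2$, and that by the block-lifting construction the same piece of $\scat{P}$ — and hence the cross arc on it — is retained on the $\scat{P}_1$ side precisely in that case; the equality then follows from the original equality $A(\scat{P}_1)\cap A(I_1,I_2)=A(\scat{P}_2)\cap A(I_1,I_2)$ restricted to retained pieces. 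One subtlety is that a cross arc is an endpoint arc of pieces on \emph{both} sides ($\last$ of a $\scat{P}_1$-piece and $\first$ of a $\scat{P}_2$-piece, say), and collapsing a run keeps the tail of the first and head of the last arc of the run, so a cross arc interior to a collapsed run could in principle disappear — but such an arc cannot be interior to a run on one side without being interior on the other, since it is shared, which forces consistency. The main obstacle, then, is purely combinatorial care in setting up the block-lifting so that "consecutive in $\scat{P}_2$" transfers to "consecutive in $\scat{P}$" and so that cross arcs are tracked symmetrically; once the indexing is pinned down, every required property of "shortening", "compatible", and "gluing" is a direct unravelling of the definitions.
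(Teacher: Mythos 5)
The paper's proof is a single sentence: construct $\scat{P}'_1$ by simply \emph{omitting} those pieces of $\scat{P}_1$ whose shared cross arcs with $\scat{P}_2$ do not appear in $\scat{P}'_2$. Because the resulting $\scat{P}'_1$ is then a subsequence of the pieces of $\scat{P}_1$, verifying that it is a shortening, that it is compatible with $\scat{P}'_2$, and that any gluing of the pair is a shortening of a gluing of $\scat{P}_1,\scat{P}_2$ is an immediate unfolding of the definitions. Your construction instead fixes a gluing $\scat{P}$, lifts the block structure of the shortening $\scat{P}_2\to\scat{P}'_2$ to $\scat{P}$, and collapses runs of $\scat{P}_1$-pieces that happen to interleave with a collapsed block. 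That is a genuinely different and substantially heavier route: collapsing pieces of $\scat{P}_1$ that are logically unrelated to the shortening of $\scat{P}_2$ is never needed, and it forces you to reason about the order of pieces inside the gluing $\scat{P}$ (a gluing is in general not uniquely determined, and pieces of $\scat{P}_1$ and $\scat{P}_2$ that share a cross arc merge into a single piece of $\scat{P}$, so the ``interleaved pieces'' picture you rely on is not quite accurate).

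More importantly, as written your construction has a concrete gap. A piece of $\scat{P}_2$ may be entirely \emph{deleted} by the shortening (not collapsed into a longer run); such a piece belongs to no block of $\scat{P}_2$, so it induces no run under your lifting, and hence any $\scat{P}_1$-piece that shares a cross arc with it lies ``outside all such runs'' and, per your rule, is kept verbatim in $\scat{P}'_1$. That keeps the shared cross arc in $A(\scat{P}'_1)$ while it is absent from $A(\scat{P}'_2)$, violating the required equality $A(\scat{P}'_1)\cap A(I_1,I_2)=A(\scat{P}'_2)\cap A(I_1,I_2)$. The ``we never delete pieces of $\scat{P}_1$'' step you emphasize in order to preserve $\First$ and $\Last$ is therefore too strong a claim: exactly those pieces whose cross arcs vanish from $\scat{P}'_2$ \emph{must} be removed, and these are never the initial or terminal piece of $\scat{P}_1$ (a non-initial, non-terminal piece is the only kind that can enter and leave $I_1$ solely through dropped cross arcs). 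The paper's drop-only construction handles this automatically, because a piece is discarded as soon as its cross arcs disappear from $\scat{P}'_2$, regardless of whether they were lost to a deletion or a collapse.
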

\begin{proof}
To construct $\scat{P}'_1$ from $\scat{P}_1$ it is enough to omit the pieces which do not share arcs with $\scat{P}'_2$. 
\end{proof}

We will say that two partial immersions $\phi_1$ in $I_1$ and $\phi_2$ in $I_2$ are \emph{compatible} if there exists a partial immersion $\phi$ in $I$ such that $\phi_1=\phi|_{I_1}$ and $\phi_2=\phi|_{I_2}$, or --- in other words --- that for every $a\in A(H)$ the scattered path $\phi(a)$ is a gluing of $\phi_1(a)$ and $\phi_2(a)$. We will call every such $\phi$ a \emph{gluing} of $\phi_1$ and $\phi_2$. Note that a gluing is not necessarily uniquely defined. Denote the set of all gluings of $\phi_1$ and $\phi_2$ by $\phi_1\glue\phi_2$.

\begin{observation}\label{obs:gluing-minimal}
If $\phi\in\phi_1\glue\phi_2$ and $\phi$ is minimal, then $\phi_1$ and $\phi_2$ are minimal.
\end{observation}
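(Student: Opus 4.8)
The plan is to prove the contrapositive-style statement directly: assuming $\phi \in \phi_1 \glue \phi_2$ is minimal, I want to show that neither $\phi_1$ nor $\phi_2$ admits a shortening that strictly decreases the size of some scattered path. I will argue for $\phi_1$; the argument for $\phi_2$ is symmetric (and in fact follows by the symmetry in the definition of consecutive intervals, swapping the roles of $I_1$ and $I_2$). Suppose for contradiction that $\phi_1$ is not minimal, so there is a shortening $\phi_1'$ of $\phi_1$ in which, for some $a_0 \in A(H)$, the scattered path $\phi_1'(a_0)$ has size strictly smaller than $\phi_1(a_0)$, while every other $\phi_1'(a)$ is a shortening of $\phi_1(a)$.

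The key step is to push this shortening through the gluing to produce a shortening of $\phi$ that is strictly smaller on $a_0$, contradicting minimality of $\phi$. For this I would invoke the Observation immediately preceding this one, applied edge-by-edge: for each $a \in A(H)$, the scattered path $\phi_1(a)$ is compatible with $\phi_2(a)$ (since $\phi(a)$ is a gluing of them), and $\phi_1'(a)$ is a shortening of $\phi_1(a)$; hence there is a shortening $\phi_2'(a)$ of $\phi_2(a)$ such that $\phi_1'(a)$ and $\phi_2'(a)$ are compatible and every gluing of $\phi_1'(a)$ and $\phi_2'(a)$ is a shortening of some gluing of $\phi_1(a)$ and $\phi_2(a)$. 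Fix such a gluing $\phi'(a)$ that shortens $\phi(a)$. Collecting these over all $a \in A(H)$ gives a candidate map $\phi'$; I then need to check that $\phi'$ is actually a partial immersion in $I$ (arc-disjointness of the $\phi'(a)$'s follows from arc-disjointness of the $\phi(a)$'s since each $\phi'(a)$ uses a subset of the arcs of $\phi(a)$, and the five conditions on beginnings/ends are inherited because shortening preserves $\First$ and $\Last$), that it is a shortening of $\phi$ (true by construction, piece by piece), and that $\phi'|_{I_1} = \phi_1'$ and $\phi'|_{I_2} = \phi_2'$ so that $\phi' \in \phi_1' \glue \phi_2'$ — though the last point is not even needed, since all I require is that $\phi'$ shortens $\phi$ and is strictly smaller on $a_0$. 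Since $\phi_1'(a_0)$ has size strictly smaller than $\phi_1(a_0)$, and a gluing has size at least the sizes of the two pieces being glued, $\phi'(a_0)$ — a gluing of $\phi_1'(a_0)$ and $\phi_2'(a_0)$, and a shortening of $\phi(a_0)$ — has size strictly smaller than $\phi(a_0)$. Wait, I need to be careful here: size of a gluing relates to sizes of the two parts, so I should instead argue that because $\phi'(a_0)$ is a shortening of $\phi(a_0)$ and its trace on $I_1$ equals $\phi_1'(a_0)$, which has fewer pieces than $\phi_1(a_0) = \phi(a_0)|_{I_1}$, the scattered path $\phi'(a_0)$ cannot be a trivial shortening — it must drop at least one piece. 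This gives a shortening of $\phi$ strictly smaller on $a_0$, contradicting minimality of $\phi$.

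The main obstacle I anticipate is the bookkeeping around traces and gluings: I must make sure that when I assemble the per-edge shortenings $\phi_1'(a)$, $\phi_2'(a)$, $\phi'(a)$ into global maps, the arc-disjointness across \emph{different} edges of $H$ is not violated. This is where I would lean on the fact that each $\phi'(a)$ uses only arcs appearing in $\phi(a)$ (a consequence of ``every gluing of the shortenings is a shortening of some gluing'' together with the structure of shortenings, which only delete or merge pieces), so arc-disjointness is inherited directly from $\phi$. A secondary subtlety is verifying that $\phi_1'$ and the derived $\phi_2'$ — which is what actually appears in the displayed gluing $\phi_1' \glue \phi_2'$ — is consistent across edges sharing a tail or head; but again this is governed by the $\First$/$\Last$ conditions, which shortenings preserve exactly, so no new coincidences or mismatches are introduced. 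Once these routine checks are in place, the contradiction with the minimality of $\phi$ closes the argument, and symmetry handles $\phi_2$.
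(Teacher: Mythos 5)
Your proposal is correct and follows essentially the same route as the paper: suppose $\phi_1$ is not minimal, take a shortening $\phi_1'$, invoke the preceding observation arc-by-arc to obtain a compatible shortening $\phi_2'$ of $\phi_2$, and glue them into a shortening $\phi'$ of $\phi$ with strictly fewer pieces, contradicting minimality of $\phi$. The paper phrases the final step as ``strictly smaller total number of pieces'' rather than tracking a single $a_0$, but this is the same argument; the paper is just as terse as you are about the routine arc-disjointness and piece-counting bookkeeping you flag.
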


\begin{proof}
Suppose that $\phi_1$ is not minimal. Then there exists a shortening $\phi'_1$ of $\phi_1$ which has a strictly smaller total number of pieces. For every $a\in A(H)$ the scattered path $\phi'_1(a)$ is compatible with some scattered path $\phi'_2(a)$ obtained by omitting several pieces in $\phi_2(a)$. In particular, the collection of all such $\phi'_2(a)$ --- call it $\phi'_2$ --- is a partial immersion compatible with $\phi'_1$ that is a shortening of $\phi_2$. Any gluing belonging to $\phi'_1\glue\phi'_2$ is a shortening of $\phi$ with a strictly smaller total number of pieces --- contradiction. An analogous reasoning can be applied upon supposing that $\phi_2$ is not minimal. 
\end{proof}


The notions of a scattered path, partial immersion and trace can be naturally extended to co-intervals, by applying all definitions verbatim. If $I$ is a $\sigma$-interval and $I'=V-I$ is the corresponding co-interval, then partial immersions $\phi_1$ in $I$ and $\phi_2$ in $I'$ are \emph{compatible} if there exists an immersion $\phi$ in $T$ such that $\phi_1=\phi|_{I}$ and $\phi_2=\phi|_{I'}$. Again, every such $\phi$ is called a \emph{gluing} of $\phi_1$ and $\phi_2$.

\subparagraph*{Types of intervals.} The key ingredient of our analysis is a constant-size encoding of the set of possible ``behaviors'' of partial immersions in intervals. 

A $\sigma$-interval $I$ shall be called \emph{$\ell$-long} if $|I|\geq \ell$. Further, we shall call $I$ \emph{$c$-flat} if $|\sing^+(I)|\leq c$, $|\sing^-(I)|\leq c$, and $\sigma$ restricted to $T[I]$ has width at most $c$.

Note that if $I=\sigma(\alpha,\beta]$ is $2r$-long, then the intervals $I^-_r\coloneqq \sigma(\alpha,\alpha+r]$ and $I^+_r\coloneqq \sigma(\beta-r,\beta]$ are disjoint. On the other hand, if $I$ is~$c$-flat, then we can color all backward arcs incident with $I$ with at most $3c$ colors in such a way that each $\gamma$-cut of $\sigma$ restricted to those arcs contains arcs of mutually different colors. This can be achieved e.g. by greedy coloring the $\gamma$-cuts for consecutive $\gamma=\alpha,\ldots,\beta+1$. Formally, there exists a function $\xi\colon \barcs_\sigma(T)\cap A(I,V)\to [3c]$ such that for every $\gamma\in[|V|-1]$ and every two distinct arcs $a_1,a_2\in \cut_\sigma[\gamma]\cap A(I,V)$ we have $\xi(a_1)\neq\xi(a_2)$. In the following fix such a function.

\begin{definition}
Let $\phi$ be a partial immersion in an interval $I$ that is $2r$-long and $c$-flat. For each $a\in A(H)$ we define the $(r,c)$-\emph{type} $\tau^{(r,c)}(\phi(a))$ of the scattered path $\phi(a)=(P_i)_{i=1}^q$ as the following sequence of length~$2q$: 
\[(f_-(\first(P_1)),f_+(\last(P_1)),f_-(\first(P_2)),f_+(\last(P_2)),\ldots,f_-(\first(P_q)),f_+(\last(P_q))),\] where the functions $f_{\pm}\colon A(T[I])\cup \sing(I)\cup\gen(I)\to [-3c]\cup[r]\cup\{\markerX,\markerH\}$ are defined as follows

\[f_{-}(a)=\begin{cases}
-\xi(a)&\text{ if }a\in\sing^{-}(I),\\
\sigma(\head(a))-\alpha&\text{ if }a\in\gen^{-}(I)\text{ and }\head(a)\in I^-_r,\\
\markerX&\text{ if }a\in\gen^{-}(I)\text{ and }\head(a)\notin I^-_r,\\
\markerH&\text{ otherwise;}
\end{cases}\]
\[f_{+}(a)=\begin{cases}
-\xi(a)&\text{ if }a\in\sing^{+}(I),\\
r+\sigma(\tail(a))-\beta&\text{ if }a\in\gen^{+}(I)\text{ and }\tail(a)\in I^+_r,\\
\markerX&\text{ if }a\in\gen^{+}(I)\text{ and }\tail(a)\notin I^+_r,\\
\markerH&\text{ otherwise.}
\end{cases}\]
Let $S$ be the set of all terms of the sequences $(\tau^{(r,c)}(\phi(a)))_{a\in A(H)}$, where by term we mean a value with an assigned position (i.e. equal values in different sequences are considered different terms).
The \emph{$(r,c)$-type} of $\phi$ is the collection of types $\tau^{(r,c)}(\phi)=(\tau^{(r,c)}(\phi(a)))_{a\in A(H)}$ equipped with a pair of equivalence relations $(R_-,R_+)$ on the set $S\cup[3c]$ defined as follows.
\begin{itemize}
\item If a piece $P_1$ of $\phi(a_1)$ and a piece $P_2$ of $\phi(a_2)$ satisfy $\head(\first(P_1))=\head(\first(P_2))$, then the corresponding terms in $\tau^{(r,c)}(\phi)$ (elements of the set $\{\markerX\}\cup [r]$ with assigned positions in respective sequences) are in relation $R_-$.
\item If a piece $P$ of $\phi(a)$ is such that $\head(\first(P))$ is the tail of a singular arc of color $x\in[3c]$, then the corresponding term in $\tau^{(r,c)}(\phi)$ is in relation $R_-$ with $x$.
\item Analogously, if $\tail(\last(P_1))=\tail(\last(P_2))$, then the terms $f_+(\last(P_1))$ and $f_+(\last(P_2))$ are in relation $R_+$. And $x\in [3c]$ is in relation with all terms corresponding to tails of ends of paths which are simultaneously the head of the singular arc of color $x$.
\end{itemize}
\end{definition}

Let us provide some intuition on what kind of information is stored in the type of $\phi$ defined above. First of all, the entire ``singular interface'' of this partial immersion is kept, i.e. in the type we remember precisely the singular arcs used to enter or exit $I$ when traversing along each scattered path $\phi(a)$. Moreover, if we enter or exit $I$ with a  generic arc, we remember the precise vertex of entry/exit inside $I$, but only if it is close enough to the ``border'' of $I$ (i.e. within the first or last $r$ vertices in $\sigma$). Otherwise we remember the respective entry/exit as ``generic'', which is marked by the marker $\markerX$. Moreover, we keep the information about whether the scattered path begins or ends inside $I$ --- the marker $\markerH$ represents that a vertex of $H$ is mapped under the partial immersion embedding to a vertex within $I$. Finally if the  extreme (first or last) arcs of some pieces are generic and have the same first/last vertex in $I$, we remember this fact in the equivalence relations $R_\pm$. We shall need this information to be able to ``glue'' two partial immersions without using the same generic gluing arc for different pieces. The incidence with singular arcs is also stored to avoid a situation when one attempts a generic gluing  along a singular arc. The reason for colors being stored as negative integers is purely technical --- it ensures that $[r]\cap [-3c]=\emptyset$.

An \emph{$(r,c)$-type} is any collection of sequences of even lengths over the set $[-3c]\cup[r]\cup\{\markerX,\markerH\}$ indexed by $A(H)$ and equipped with a pair of equivalence relations $(R_-,R_+)$ on the union of the set of all terms of these sequences and $[3c]$. An \emph{$I$-admissible $(r,c)$-type} is every type $\tau$ such that there exists a partial immersion $\phi$ in $I$ such that $\tau=\tau^{(r,c)}(\phi)$. 
The \emph{size} of a type is the sum of lengths of its sequences (i.e. it does not depend on the equivalence relations).

Let $\gamma_r$ be a function mapping each element from the set $[r]$ onto the single marker $\markerX$, and identity otherwise. Intuitively, the function $\gamma_r$ keeps only the information about generic nature of the end of a piece, and ``forgets'' about the closeness of this vertex to the boundary. We say that an $(r,c)$-type $\tau'$ is a \emph{shortening} of an $(r,c)$-type $\tau$ if for every $a\in A(H)$ the sequence $\gamma(\tau'_a)$ is a subsequence of $\gamma(\tau_a)$ and the two sequences have the same first and last terms. 

The following observation is a direct consequence of the definition of shortening of a partial immersion.

\begin{observation}
If a partial immersion $\phi'$ of $(r,c)$-type $\tau'$ is a shortening of a partial immersion $\phi$ of $(r,c)$-type~$\tau$, then $\tau'$ is a shortening of $\tau$.
\end{observation}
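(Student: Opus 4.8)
The plan is to unwind both notions of ``shortening'' and check that the combinatorial shortening of types is forced by the geometric shortening of partial immersions. First I would fix $a\in A(H)$ and write $\phi(a)=(P_i)_{i=1}^q$ and $\phi'(a)=(P'_s)_{s=1}^{q'}$, recalling that $\phi'$ being a shortening of $\phi$ means each $\phi'(a)$ is a shortening of $\phi(a)$; so there are indices $i_s^-\le i_s^+$ with $\tail(\first(P'_s))=\tail(\first(P_{i_s^-}))$ and $\head(\last(P'_s))=\head(\last(P_{i_s^+}))$, the index intervals $[i_s^-,i_s^+]$ appear in increasing order, and $\First,\Last$ are preserved (equal or simultaneously undefined).

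Next I would compare the type sequences term by term. Recall $\tau_a=\tau^{(r,c)}(\phi(a))$ records, for each piece $P_i$, the pair $(f_-(\first(P_i)),f_+(\last(P_i)))$, and similarly for $\tau'_a$. The key local claim is: for each $s$, $\gamma_r(f_-(\first(P'_s)))=\gamma_r(f_-(\first(P_{i_s^-})))$ and $\gamma_r(f_+(\last(P'_s)))=\gamma_r(f_+(\last(P_{i_s^+})))$. This holds because $f_-$ depends only on $\first(P'_s)$, which is an arc with the same tail as $\first(P_{i_s^-})$ — and in a tournament the tail of a backward/forward arc into or out of $I$ together with the side of $I$ it touches determines into which case of the definition of $f_-$ the arc falls, hence determines $f_-$ up to the $\gamma_r$ collapse; more carefully, since $P'_s$ and $P_{i_s^-}$ are both arcs of a scattered path leaving/entering through the relevant boundary of $I$, $\first(P'_s)$ is singular iff $\first(P_{i_s^-})$ is (same tail, so same color $\xi$ if singular), and if generic the value $\sigma(\head)-\alpha$ or the marker is the same up to $\gamma_r$. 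Actually one must be slightly careful: the \emph{exact} $[r]$-value could differ between $P'_s$ and $P_{i_s^-}$ if only the tail, not the head, is pinned; but $\gamma_r$ forgets exactly the $[r]$-values, so $\gamma_r(f_-(\cdot))$ agrees, which is all that is needed. Feeding this into the definition, $\gamma_r(\tau'_a)$ is obtained from $\gamma_r(\tau_a)$ by keeping exactly the positions $2i_s^-{-}1$ and $2i_s^+$ (contributing $f_-$ of $\first$ and $f_+$ of $\last$, respectively), i.e.\ $\gamma_r(\tau'_a)$ is a subsequence of $\gamma_r(\tau_a)$; and the first and last terms coincide because $i_1^-=1$ is forced (the first piece cannot be dropped) — more precisely $\First(\phi'(a))=\First(\phi(a))$ forces the head-of-first data of $P'_1$ and $P_1$ to match — and symmetrically for the last, so the matching first/last-terms condition holds. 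That is exactly the definition of $\tau'$ being a shortening of $\tau$.

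The one point that needs genuine attention — and which I expect to be the main (minor) obstacle — is the handling of the first and last pieces and the role of $\markerH$: there, $\first(P'_1)$ might be an internal arc of $T[I]$ or a $+$-arc (the beginning of the scattered path) rather than a $-$-arc, and one has to confirm that the ``same first term'' requirement in the definition of type-shortening is really implied by $\First(\phi'(a))=\First(\phi(a))$ (in particular that $\markerH$ appears on the same side in both, and that when $\First$ is a vertex of $I_r^-$ the recorded value $\sigma(\First)-\alpha\in[r]$ is literally equal, not just equal up to $\gamma_r$, in both — which is fine since it is the same vertex). Everything else is bookkeeping: the equivalence relations $R_\pm$ play no role since the definition of shortening of a type explicitly says the size depends only on the sequences, so I would simply note that the claim concerns only the sequences $\gamma_r(\tau_a)$ and not $(R_-,R_+)$. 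Assembling the per-arc statements over all $a\in A(H)$ gives the observation.
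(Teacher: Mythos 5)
The paper itself omits the proof, calling the observation ``a direct consequence'' of the definitions, and your overall approach (matching pieces via the indices $i_s^\pm$ and comparing $f_\pm$-values up to $\gamma_r$) is the right skeleton. However, your central local claim rests on the parenthetical ``same tail, so same color $\xi$ if singular,'' and that is false. The coloring $\xi$ is assigned to arcs, not to tails: two distinct singular arcs $(v,u_1),(v,u_2)\in\sing^-(I)$ sharing a tail $v$ necessarily participate in a common cut, namely $\cut_\sigma[\max(\sigma(u_1),\sigma(u_2))]$, so the defining property of $\xi$ \emph{forces} them to receive \emph{different} colors. Since $\gamma_r$ fixes the set $[-3c]$ pointwise, this would give $\gamma_r(f_-(\first(P'_s)))\neq\gamma_r(f_-(\first(P_{i_s^-})))$ whenever a shortening replaces a singular entry arc by a different singular arc with the same tail --- and the definition of shortening of a scattered path pins down only $\tail(\first(P'_s))=\tail(\first(P_{i_s^-}))$, not the arcs themselves, so it does not rule this out. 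So either the intended reading of ``shortening'' is strictly stronger than what you used (e.g., that singular boundary arcs are never rerouted, which is what the one explicit shortening construction, in the proof of \Cref{lem:teleports}, actually does), or the local claim needs extra justification; in any case, the reason you gave for it is incorrect, and this is the genuine content of the observation you did not establish.

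A smaller point: the worry you raise about ``when $\First$ is a vertex of $I_r^-$ the recorded value $\sigma(\First)-\alpha\in[r]$'' describes a situation that cannot occur. When $\First(\phi(a))$ is defined, the first arc of $\phi(a)$ lies in $A(T[I])\cup\gen^+(I)\cup\sing^+(I)$, so $f_-$ of it is always $\markerH$, never an element of $[r]$; the $[r]$-values in the $f_-$ slot appear only when the first arc is generic (and then $\First$ is undefined). This does not break your argument, but it shows the case analysis of the first and last slots was not fully internalized, and it is precisely at the boundary pieces that both your claims (this one and the singular-color one) need care.
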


\begin{definition}
An $I$-admissible $(r,c)$-type $\tau$ is called \emph{minimal} (in $I$) if there is no $I$-admissible type $\tau'$ of strictly smaller size that would be a  shortening of $\tau$.
\end{definition}

\begin{observation}\label{obs:minimal-types}
Let $\phi$ be a partial immersion in $I$ of $(r,c)$-type $\tau$. Then $\phi$ is minimal if and only if $\tau$ is minimal in $I$.
\end{observation}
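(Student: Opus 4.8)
The plan is to prove the two implications separately, in each case transferring a size-decreasing shortening across the correspondence between partial immersions and their $(r,c)$-types.

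First, suppose $\phi$ is not minimal, and let $\phi'$ be a shortening of $\phi$ in which some scattered path $\phi'(a)$ has strictly smaller size than $\phi(a)$; denote the $(r,c)$-type of $\phi'$ by $\tau'$. By the preceding observation, $\tau'$ is a shortening of $\tau$. Since shortening a scattered path cannot increase the number of its pieces, and it strictly decreases it for the path $\phi(a)$, the total size of $\tau'$ (which equals twice the total number of pieces, summed over $A(H)$) is strictly smaller than the size of $\tau$. As $\tau'$ is an $I$-admissible type realised by $\phi'$, this witnesses that $\tau$ is not minimal in $I$. This gives the contrapositive of the ``$\tau$ minimal $\Rightarrow$ $\phi$ minimal'' direction.

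For the converse, suppose $\tau$ is not minimal in $I$: there is an $I$-admissible type $\tau'$ of strictly smaller size that is a shortening of $\tau$. Because $\tau'$ is $I$-admissible, there is a partial immersion $\psi$ in $I$ with $\tau^{(r,c)}(\psi)=\tau'$. The key point is that $\psi$ can be realised as an honest shortening of $\phi$: the relation ``$\tau'$ is a shortening of $\tau$'' records, for each $a\in A(H)$, which pieces of $\phi(a)$ are kept (possibly merged into contiguous blocks) and which are dropped, together with the requirement that the retained first/last arcs have the prescribed endpoint data (encoded by the $f_\pm$ values and the relations $R_\pm$). One then builds a shortening $\phi'$ of $\phi$ piece by piece following this recipe: for each block of contiguous pieces of $\phi(a)$ that is to be merged, replace it with a single piece in $T[I]$ joining the tail of the first arc of the block to the head of the last arc of the block — such a piece exists because these two vertices are connected by an arc in the tournament $T[I]$, or more carefully one routes it using available arcs of $T[I]$, and arc-disjointness is maintained since we only ever use arcs already used by the corresponding pieces of $\phi$ (for the merged-block case one may need a short internal path, but the beginnings and ends, which carry all the type information, are exactly those of $\phi$). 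The resulting $\phi'$ is a shortening of $\phi$ whose type is again $\tau'$ (the type only depends on the first and last arcs of pieces, which we have preserved), and $\phi'$ has strictly fewer pieces in total than $\phi$, hence at least one scattered path of strictly smaller size. So $\phi$ is not minimal.

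The main obstacle is the converse direction: one must argue that an abstract size-reducing shortening at the level of types can always be \emph{lifted} to a genuine shortening of the given partial immersion $\phi$, rather than merely to some other partial immersion of the smaller type. The delicate part is handling merged blocks of pieces — producing a single replacement piece inside $T[I]$ with the correct first and last arcs while keeping everything arc-disjoint — and checking that the bookkeeping in the equivalence relations $R_-,R_+$ (shared generic endpoints, incidences with singular arcs) is exactly what guarantees such a replacement is consistent. Once this lifting is in place, the size count is immediate because size of a type equals twice the total number of pieces of any partial immersion realising it.
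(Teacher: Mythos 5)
Your proof of the forward direction ($\phi$ not minimal $\Rightarrow$ $\tau$ not minimal) is correct and follows the paper exactly: it is an immediate application of the preceding observation together with a count of pieces.

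The converse is where you and the paper diverge. The paper's proof of this direction is a single sentence: it asserts outright that \emph{every} partial immersion $\phi'$ of type $\tau'$ is automatically a nontrivial shortening of $\phi$, so no construction is needed. You correctly sense that this assertion is not self-evident — the type only records $f_\pm$-values (colours of singular arcs, boundary positions, or the markers $\markerX,\markerH$), and says nothing about the tails of generic arcs in $\gen^-(I)$, the heads of generic arcs in $\gen^+(I)$, or the actual vertices $\First(\phi(a))$, $\Last(\phi(a))$, all of which must match exactly in the definition of a shortening of a scattered path — and you try to do the honest work of \emph{constructing} a shortening. That is a genuinely different, and more careful, route.

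However, your construction does not go through as written. The central error is in the merged-block step: you propose replacing a contiguous block $P_{i_s^-},\ldots,P_{i_s^+}$ of $\phi(a)$ ``with a single piece in $T[I]$ joining the tail of the first arc of the block to the head of the last arc of the block --- such a piece exists because these two vertices are connected by an arc in the tournament $T[I]$.'' For any block that is not both the initial and the terminal one, $\first(P_{i_s^-})\in\gen^-(I)\cup\sing^-(I)$ and $\last(P_{i_s^+})\in\gen^+(I)\cup\sing^+(I)$, so $\tail(\first(P_{i_s^-}))$ and $\head(\last(P_{i_s^+}))$ both lie \emph{outside} $I$. These two vertices are not joined by an arc of $T[I]$, and the replacement piece cannot live inside $T[I]$: it has to begin with an arc of $\gen^-(I)\cup\sing^-(I)$ with the prescribed tail, cross $I$, and end with an arc of $\gen^+(I)\cup\sing^+(I)$ with the prescribed head. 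Constructing such a path so that it is arc-disjoint from everything else in $\phi$ is exactly the delicate part, and your one clause ``arc-disjointness is maintained since we only ever use arcs already used by the corresponding pieces of $\phi$'' cannot be right, because a merged piece necessarily uses an arc crossing the gap between $\head(\first(P_{i_s^-}))$ and $\tail(\last(P_{i_s^+}))$ that was not used by $\phi$. Finally, you introduce the partial immersion $\psi$ witnessing $I$-admissibility of $\tau'$ and call it the ``key point'', but then never actually use $\psi$ in the construction, so its role in the argument is never clarified. In short, you identified the right obstacle — lifting a type-level shortening to a bona fide shortening of $\phi$ — but the lifting argument itself is not correct: the merged-piece replacement is misdescribed and its existence and arc-disjointness are not established.
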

\begin{proof}
If $\phi$ is not minimal, then for any nontrivial shortening $\phi'$ of $\phi$, $\tau^{(r,c)}(\phi')$ is a shortening of $\tau$ of strictly smaller size.
Conversely, if $\tau$ is not minimal, then there exists an $I$-admissible type $\tau'$ of strictly smaller size. Every partial immersion $\phi'$ of type $\tau'$ is a nontrivial shortening of $\phi$.
\end{proof}

We now observe that in a minimal partial immersion on an interval $I$, all scattered paths will be relatively small, that is, will visit $I$ only a bounded number of times.

\begin{lemma}\label{lem:teleports}
Suppose that $F\subseteq A(T)$ has the property that $|F|\leq f$, $I$ is $4\|H\|(c+f+1)$-long and $c$-flat, and a partial immersion $\phi$ in $I$ is minimal and disjoint with $F$. Then for every $a\in A(H)$ the scattered path $\phi(a)$ has size at most $2c+3$.
\end{lemma}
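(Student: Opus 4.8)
The strategy is to argue by contradiction: if some scattered path $\phi(a)$ has size $q > 2c+3$, we will produce a strictly shorter shortening of $\phi$ that is still $I$-admissible and disjoint from $F$, contradicting minimality (via \cref{obs:minimal-types}, but working directly with $\phi$). The key insight is that a scattered path $\phi(a) = (P_i)_{i=1}^q$ must "enter" $I$ from the left $q-1$ times and "exit" $I$ to the right $q-1$ times (counting the boundary crossings $\first(P_i) \in \gen^-(I) \cup \sing^-(I)$ for $i \neq 1$ and $\last(P_i) \in \gen^+(I) \cup \sing^+(I)$ for $i \neq q$). Each such crossing arc is either singular — and there are at most $c$ singular arcs in $\sing^-(I)$ and at most $c$ in $\sing^+(I)$ — or generic. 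So among the $q-1$ "incoming" crossing arcs, all but at most $c$ are generic, and likewise for the outgoing ones.

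First I would set up the counting. The total number of arcs used by all pieces of all scattered paths $\phi(b)$, $b \in A(H)$, is at most the width bound available: arcs in $A(T[I])$, arcs in $\sing(I)$ (at most $2c$ of them), and arcs in $\gen(I)$. But the relevant observation is subtler — I want to find, within the single path $\phi(a)$, two distinct pieces $P_i$ and $P_j$ with $i < j$ such that the "entry point" of $P_i$ into $I$ (the head of $\first(P_i)$ if it is a generic or singular incoming arc, or $\First(\scat{P})$ if $i=1$) agrees with the entry point of $P_j$, AND the "exit point" of $P_i$ from $I$ agrees with that of $P_j$ — because then the pieces $P_i, P_{i+1}, \dots, P_j$ can be collapsed. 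Actually a cleaner route: since each piece $P_i$ (for $1 < i < q$) enters $I$ via an arc in $\gen^-(I) \cup \sing^-(I)$ and leaves via an arc in $\gen^+(I) \cup \sing^+(I)$, and since $F$ is forbidden, I would argue that if $q$ is large then two pieces of $\phi(a)$ have the same $(r,c)$-type symbol pair $(f_-(\first(P_i)), f_+(\last(P_i)))$ and moreover correspond to the same actual vertices — but this needs the generic entry/exit points to be pinned down, which is exactly where the $r$-long hypothesis and the value $4\|H\|(c+f+1)$ enters.

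Here is the mechanism I expect the proof to use. Consider a piece $P_i$ with $1 < i < q$ that is generic, i.e. $\first(P_i) \in \gen^-(I)$ and $\last(P_i) \in \gen^+(I)$. Its incoming arc has head some vertex $u_i \in I$ and its outgoing arc has tail some vertex $v_i \in I$. If $u_i \notin I^-_r$ and $v_i \notin I^+_r$, then I claim $P_i$ can be rerouted or simply the sub-scattered-path from the entry of $P_i$ to the exit of $P_i$ can be replaced by a single piece: indeed, once we are "deep inside" $I$ (outside the boundary strips of width $r$), we have room to connect $u_i$ to $v_i$ directly within $T[I]$ avoiding $F$ and avoiding all arcs used by $\phi$ — provided $I$ is long enough and flat enough. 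More precisely, the number of vertices to avoid is bounded: $T[I]$ has cutwidth $\leq c$, $\phi$ uses at most... arcs, and $|F| \leq f$, so inside a long enough interval there is a direct path. The constant $4\|H\|(c+f+1)$ is calibrated so that: $\|H\|$ bounds the number of scattered paths; each contributes at most — after we bound things — a controlled number of forbidden vertices; the $c$ accounts for singular arcs and cutwidth; the $f$ for $F$; and the factor $4$ and the $+1$ give slack for the two boundary strips of width $r = \|H\|(c+f+1)$ plus the detour. The main obstacle is precisely this rerouting/shortening argument: one must show that a "deep" generic piece is redundant, i.e. that removing the excursion out of and back into $I$ and replacing it with an internal path keeps the object a valid partial immersion with the same type of all the remaining data, and with all arc-disjointness and $F$-disjointness preserved. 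Once that is established, a counting argument (at most $c$ incoming singular arcs, at most $c$ outgoing singular arcs, at most $2$ boundary strips each of which can host at most — well, the $r$ boundary vertices can each be an entry or exit point, but across one path $\phi(a)$ the relevant count is what matters) shows that if $q > 2c+3$ then some piece is "deep" and generic, yielding the contradiction. I would therefore structure the write-up as: (1) bound the total number of arcs/vertices touched by $\phi$; (2) show a deep generic piece can be shortened away; (3) count crossings to force the existence of a deep generic piece when $q > 2c+3$.
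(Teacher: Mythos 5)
Your core observation — that each middle piece crosses the boundary of $I$ twice, singular crossing arcs number at most $2c$, and therefore a long scattered path must have many generic pieces — is correct and is exactly the counting that makes the bound $2c+3$ come out. However, the mechanism you propose for dealing with the generic pieces diverges from the paper and has genuine gaps. First, you invoke the $r$-boundary $I_r^{\pm}$ and treat the hypothesis $|I|\geq 4\|H\|(c+f+1)$ as an ``$r$-long'' hypothesis with $r=\|H\|(c+f+1)$. But the parameter $r$ appears nowhere in \cref{lem:teleports}: it is neither in the hypothesis (the interval need only be $c$-flat and $4\|H\|(c+f+1)$-long) nor in the conclusion, and minimality of a partial immersion is a purely size-based notion that has nothing to do with the $(r,c)$-type. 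In particular, minimality does not require preserving ``the same type of all the remaining data,'' so your announced ``main obstacle'' is not an obstacle at all. Second, the deep-versus-shallow distinction cannot close the counting: without a bound on $r$ you have no bound on the number of ``shallow'' generic pieces, so ``$q>2c+3$ forces a deep generic piece'' does not follow. Third, the rerouting you sketch --- connecting $u_i$ to $v_i$ ``directly within $T[I]$'' --- does not reduce the number of pieces; $u_i$ and $v_i$ are already joined by the single piece $P_i$. (For what it is worth, the paper itself notes that non-extremal pieces of a scattered path may simply be omitted in a shortening, so no rerouting is needed to drop a lone generic middle piece.)

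The paper's actual argument is different and cleaner: for each $a\in A(H)$ it identifies the first index $i_a^-$ with generic entry and the last index $i_a^+$ with generic exit, and then replaces the \emph{entire} contiguous subsequence $(P^a_i)_{i=i_a^-}^{i_a^+}$ with a \emph{single} new generic piece, namely a two-arc path through a fresh pivot vertex $v(a)\in I$ chosen to avoid $F$, $\sing(I)$, and the generic arcs of the pieces that are kept. This is a shortening that does introduce new arcs, which is precisely why the $F$-disjointness of $\phi$ and the length hypothesis $|I|\geq 4\|H\|(c+f+1)\geq 2|F|+|\G|+|\sing(I)|+\|H\|$ are needed: they guarantee that $\|H\|$ such pivots exist. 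After this collapse each $\phi(a)$ has at most one generic piece; every other non-extremal piece carries at least one singular arc, and since pieces are pairwise arc-disjoint there are at most $2c$ of those, giving $q\leq 2c+3$. Your write-up should drop all mention of $r$ and of type preservation, and should construct the one-shot collapse via pivot vertices rather than attempting per-piece rerouting.
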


\begin{proof}
We will show that every partial immersion $\phi$ disjoint with $F$ can be shortened to a partial immersion $\phi'$ disjoint with $F$ in which every path $\phi(a)$ has at most one generic piece. This will yield the desired result as every nongeneric piece of $\phi(a)$ which is neither the initial, nor the terminal one, contains at least one arc from the set $\sing(I)$, which is of size not exceeding $2c$ (as $I$ is $c$-flat).

Take an arbitrary $\phi$ disjoint with $F$. For every $a\in A(H)$, let $\phi(a)=(P^a_i)_{i=1}^{q_a}$. Let $i_a^-$ be the smallest index $i$ such that $\first(P^a_i)\in\gen^-(I)$ (if there is no such index, we put $i_a^-=\infty$). Similarly, let $i_a^+$ be the greatest index $i$ such that $\last(P^a_i)\in\gen^+(I)$ (or $i_a^+=-\infty$ if no such index exists). Note that if $i<i_a^-$ or $i>i_a^+$, then $P^a_i$ is not generic, so
\[\left|\bigcup_{a\in A(H)}\bigcup_{i\notin [i_a^-,i_a^+]} \{P^a_i\}\right|\leq \|H\|\cdot (2c+2).\]
The set on the left-hand side of the above inequality comprises of pieces that we will keep unchanged in~$\phi$; we will show a way to shorten $\phi$ by replacing the subsequence $(P^a_i)_{i=i_a^-}^{i_a^+}$ with a single generic piece for each $\phi(a)$ for which this subsequence is nonempty (if $i^-_a > i^+_a$, then $\phi(a)$ has no generic pieces). This will conclude the proof as the new piece will be the only generic one.

Let 
\[\G=\gen(I)\cap \bigcup_{a\in A(H)}\bigcup_{i\notin [i_a^-,i_a^+]} A(P^a_i),\]
so in particular $|\G|\leq \|H\|\cdot (2c+2)$ as each piece $P^a_i$ with $i\notin [i_a^-,i_a^+]$ contains at most one generic arc in $\gen(I)$. Since $I$ is $2\|H\|(c+f+1)$-long, we may find in $I$ a subset $J$ of vertices not incident with $F\cup \G\cup\sing(I)$ such that $|J|\geq \|H\|$. Indeed,
\[|I|\geq 4\|H\|(c+f+1)\geq  2f+2(c+1)\|H\|+2c+\|H\|\geq 2|F|+|\G|+|\sing(I)|+\|H\|.\]
Now assign to each $a\in A(H)$ a different vertex $v(a)\in J$. We can use these vertices are ``pivots'' for the newly constructed generic pieces. Indeed, for every $a\in A(H)$ such that $i_a^-\leq i_a^+$ consider the path
\[P^a_{\pm}=(\tail(\first(P^a_{i_a^-})),v(a))(v(a),\head(\last(P^a_{i_a^+}))).\]
This path is well-defined as the two arcs it consists of are generic (because $v(a)$ was chosen not to be incident with $\sing(I)$). Moreover, it is edge-disjoint with $\G\cup F$ and with all other $P^{a'}_{\pm}$ ($a'\neq a$), as $v(a)\neq v(a')$. Inserting paths $P^a_\pm$ in place of the subsequences described above gives a desired shortening and finishes the proof.
\end{proof}

We call an $(r,c)$-type $\tau$ \emph{short} if for every $a\in A(H)$ the length of $\tau(\phi(a))$ is at most $4c+6$.

\begin{corollary}\label{cor:const-types}
For $r\geq 1$ we have the following:
\begin{enumerate}
\item If an interval $I$ is $4\|H\|(r+c)$-long and $c$-flat, and a partial immersion $\phi$ in $I$ is minimal, then $\tau^{(r,c)}(\phi)$ is short. 
\item For every pair of integers $r,c\in \N$ there is a constant $t(r,c)$ such that there are exactly $t(r,c)$ short $(r,c)$-types. In particular, for any interval $I$ as above, there are at most $t(r,c)$ different $I$-admissible minimal $(r,c)$-types. 
\end{enumerate}
\end{corollary}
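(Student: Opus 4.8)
The plan is to derive part~(1) directly from \cref{lem:teleports} and part~(2) by a crude counting argument. For part~(1), observe that if $I$ is $4\|H\|(r+c)$-long and $c$-flat, then since $r \geq 1$ we have $4\|H\|(r+c) \geq 4\|H\|(c+0+1)$, so $I$ is $4\|H\|(c+f+1)$-long with $f=0$; taking $F=\emptyset$ in \cref{lem:teleports}, every scattered path $\phi(a)$ of a minimal partial immersion $\phi$ has size at most $2c+3$. By definition, $\tau^{(r,c)}(\phi(a))$ is a sequence of length $2\cdot(\text{size of }\phi(a)) \leq 2(2c+3) = 4c+6$, so $\tau^{(r,c)}(\phi)$ is short. (I should double-check the exact constant in the ``$4\|H\|(r+c)$-long'' hypothesis against the ``$4\|H\|(c+f+1)$-long'' hypothesis of \cref{lem:teleports}; if there is a mismatch, I would either tighten one bound or simply note the monotonicity of the ``$\ell$-long'' property, which holds since larger intervals are also $\ell$-long for smaller $\ell$, and the flatness of a sub-interval is inherited. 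This bookkeeping is the only delicate point.)

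For part~(2), I would bound the number of short $(r,c)$-types by a function of $r$ and $c$ alone. A short $(r,c)$-type consists of $\|H\|$ sequences, each of length at most $4c+6$, with entries drawn from the finite set $[-3c]\cup[r]\cup\{\markerX,\markerH\}$ of size $3c+r+2$; hence there are at most $(3c+r+2)^{4c+6}$ choices per sequence and at most $\left((3c+r+2)^{4c+6}\right)^{\|H\|}$ choices for the whole collection of sequences. To this data one adjoins the pair of equivalence relations $(R_-,R_+)$ on the set $S\cup[3c]$, where $S$ is the set of terms of these sequences; since $|S| \leq \|H\|(4c+6)$, the ground set has size at most $\|H\|(4c+6)+3c$, and the number of equivalence relations on a set of size $m$ is the Bell number $B_m$, which is finite. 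Multiplying these finite quantities gives a constant $t(r,c)$ depending only on $r$, $c$ (and on the fixed digraph $H$, which is fixed throughout), and this is an upper bound on the total number of short $(r,c)$-types. Setting $t(r,c)$ to be exactly this count of short types establishes the claim.

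Finally, the ``in particular'' clause of part~(2) follows by combining part~(1) with the fact that an $I$-admissible minimal $(r,c)$-type is, by \cref{obs:minimal-types}, exactly the type of some minimal partial immersion in $I$, which by part~(1) is short; hence the number of such types is at most the number $t(r,c)$ of short $(r,c)$-types. I do not expect any serious obstacle here: the argument is essentially a finiteness/counting observation built on \cref{lem:teleports}. The one place requiring care is matching the length hypotheses between this corollary and \cref{lem:teleports} and confirming that ``short'' (length $\leq 4c+6$) is precisely twice the ``size $\leq 2c+3$'' bound from the lemma.
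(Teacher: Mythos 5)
Your proof is correct and follows essentially the same route as the paper: take $F=\emptyset$ in \cref{lem:teleports} (the hypothesis $r\geq 1$ ensures the length bound $4\|H\|(r+c)\geq 4\|H\|(c+1)$ matches), conclude each $\phi(a)$ has size at most $2c+3$ so the type sequences have length at most $4c+6$, and bound the number of short types by counting sequences over a fixed finite alphabet together with equivalence relations on a ground set of bounded size. The only cosmetic difference is that you spell out the counting in part~(2), while the paper compresses it to one sentence; your parenthetical remark about flatness of sub-intervals being inherited is unnecessary (no sub-interval is used) and not generally true, but since your main argument does not rely on it, this is harmless.
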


\begin{proof}
It is enough to take $F=\emptyset$ in \cref{lem:teleports} to see that the lengths of sequences $\tau^{(r,c)}(\phi(a))$ for $a\in A$ are uniformly bounded by $2(2c+3)$. To prove that the number of types is bounded in terms of $r$ and $c$ it is now enough to observe that the number of different equivalence classes $R_+$, $R_-$ is so. The last claim is true as these are the equivalence classes introduced on a set of size bounded in terms of $r$ and $c$. 
\end{proof}


Let $I_1$, $I_2$ be two consecutive $c$-flat, $2(r+c)$-long $\sigma$-intervals and $I=I_1\cup I_2$. We say that two $(r,c)$-types $\tau_1$ of $I_1$ and $\tau_2$ of $I_2$ are \emph{compatible} if there exists a partial immersion $\phi$ in $I$ such that $\phi|_{I_1}$ and $\phi|_{I_2}$ are compatible, $\phi|_{I_1}$ has type $\tau_1$, and $\phi|_{I_2}$ has type $\tau_2$.

The \emph{gluing} of the types $\tau_1\glue\tau_2$ is defined as the set of all $(r,c)$-types of all such $\phi$.  The following lemma proves that this definition is correct, i.e. that types of two immersions store enough information to ensure the possibility of gluing them.

\begin{lemma}\label{lem:types-behave-well}
Let $I_1$, $I_2$ be two consecutive $c$-flat, $4\|H\|(r+c)$-long $\sigma$-intervals and $I=I_1\cup I_2$. If two $(r,c)$-types $\tau_1$ of $I_1$ and $\tau_2$ of $I_2$ are \emph{compatible}, then for every partial immersion $\phi_1$ in $I_1$ of type $\tau_1$ and for every partial immersion $\phi_2$ in $I_2$ of type $\tau_2$, the immersions $\phi_{1}$ and $\phi_{2}$ are compatible.
\end{lemma}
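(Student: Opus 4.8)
The statement says: if $\tau_1$ (a type of $I_1$) and $\tau_2$ (a type of $I_2$) are compatible — meaning \emph{some} $\phi$ in $I=I_1\cup I_2$ has the property that $\phi|_{I_1}$ and $\phi|_{I_2}$ are compatible with $\tau(\phi|_{I_1})=\tau_1$ and $\tau(\phi|_{I_2})=\tau_2$ — then \emph{every} pair $\phi_1$ of type $\tau_1$ and $\phi_2$ of type $\tau_2$ is compatible. So the type does indeed encode everything needed for gluing. The plan is to start from the witnessing $\phi$ (with $\psi_1:=\phi|_{I_1}$, $\psi_2:=\phi|_{I_2}$) and, arc-of-$H$ by arc-of-$H$, ``transport'' the gluing pattern of $\psi_1(a),\psi_2(a)$ onto the given $\phi_1(a),\phi_2(a)$. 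For each $a\in A(H)$ the type $\tau^{(r,c)}(\cdot)$ records, for every piece, the $f_-$-value of its first arc and the $f_+$-value of its last arc; since $\phi_1(a)$ and $\psi_1(a)$ have the same type sequence $\tau_{1,a}$, there is a canonical bijection between their pieces matching up $f_\pm$-values, and likewise for $\phi_2(a),\psi_2(a)$. The compatibility of $\psi_1,\psi_2$ tells us which pieces of $\psi_1(a)$ concatenate with which pieces of $\psi_2(a)$ across the cut $A(I_1,I_2)$, and in what interleaved order they form a scattered path $\phi(a)$ in $I$; we simply use the corresponding pieces of $\phi_1(a),\phi_2(a)$ in the same order.

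The key point to check is that this reassembly actually yields a valid scattered path and then a valid partial immersion. Two things must be verified. First, \emph{arc-disjointness}: pieces of $\phi_1$ live in $A(T[I_1])\cup\sing(I_1)\cup\gen(I_1)$ and pieces of $\phi_2$ in the analogous $I_2$-sets, and by definition of the type the only arcs shared between an $I_1$-piece and an $I_2$-piece can be arcs of $A(I_1,I_2)$ — precisely the arcs whose $f_\pm$-value is recorded as a ``$\markerX$'' or a ``$-\xi$'' colour or a $[r]$-index. Here is where the equivalence relations $R_-,R_+$ do their work: if two pieces of $\phi_1$ would, after gluing, have to use the same crossing generic arc, then their boundary vertices in $I_1$ coincide, which is exactly what $R_\pm$ records; since $\phi_1$ has the \emph{same} type (including the same $R_\pm$) as $\psi_1$, and $\psi_1,\psi_2$ glued without conflict, no two pieces of $\phi_1(a)$ and $\phi_1(a')$ are forced onto the same crossing arc. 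The colouring $\xi$ guarantees that singular arcs in a common cut have distinct colours, so recording the colour pins down the singular arc exactly, and distinctness of $R_\pm$-classes propagates to distinctness of endpoints and hence of the crossing generic arcs. Second, \emph{the scattered-path order}: compatibility of $\psi_1,\psi_2$ gives an ordering of the combined pieces into a scattered path $\phi(a)$ in $I$ respecting the individual orders; the $f_-$ of the first arc of each piece and $f_+$ of the last arc — together with whether a piece's extreme arc crosses the cut — is exactly the data needed to know which pieces interleave, and these are identical for $\phi_i$ and $\psi_i$. Finally the partial-immersion conditions (coincidences of $\First$/$\Last$ across different $a,a'$) are governed by the $\markerH$ markers and, for the $\First$/$\Last$ that are \emph{generic} endpoints, again by $R_\pm$; these are preserved because types agree.

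Concretely, I would proceed as follows. \textbf{(1)} Fix the witness $\phi$ with $\psi_1=\phi|_{I_1}$, $\psi_2=\phi|_{I_2}$, $\tau(\psi_i)=\tau_i$. \textbf{(2)} For each $a$, set up the piece-bijections $\beta_1^a$: pieces of $\phi_1(a)\to$ pieces of $\psi_1(a)$ and $\beta_2^a$ analogously, induced by equality of the type sequences $\tau_{i,a}$; note a piece $P$ and $\beta_i^a(P)$ have $f_-(\first P)=f_-(\first\beta_i^a(P))$ and $f_+(\last P)=f_+(\last\beta_i^a(P))$. \textbf{(3)} Observe that whether $\first(P)\in\gen^-\cup\sing^-$ (resp.\ $\last(P)\in\gen^+\cup\sing^+$) is determined by the $f$-value (it lies in $[-3c]\cup[r]\cup\{\markerX\}$ iff the arc crosses a cut, vs.\ $\markerH$ for an internal $H$-vertex endpoint, vs.\ the first/last piece whose extreme arc may be internal). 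Hence the ``crossing pattern'' of $\phi_1(a)$ matches that of $\psi_1(a)$, and similarly on the $I_2$ side. \textbf{(4)} Use the scattered path $\phi(a)$ in $I$ witnessing compatibility of $\psi_1(a),\psi_2(a)$ to read off: which pieces of $\psi_1(a)$ merge with which pieces of $\psi_2(a)$ (they share a crossing arc of $A(I_1,I_2)$), and the global order. Transfer this via $\beta_1^a,\beta_2^a$ to build a sequence $\phi^\ast(a)$ of pieces from $\phi_1(a)$ and $\phi_2(a)$; merging two pieces across the cut is done by identifying their common crossing arc — which exists and is the same for the transferred pair because the $f$-value pins the crossing arc down ($\xi$ for singular, and the $[r]$-index or, in the $\markerX$ case, the $R_\pm$-class pins the endpoint in $I_1$ and $I_2$ simultaneously, hence the arc). \textbf{(5)} Verify $\phi^\ast(a)$ is a scattered path in $I$ (connectivity at merges, the $\first/\last$ membership conditions, arc-disjointness of its pieces), then that the family $(\phi^\ast(a))_{a\in A(H)}$ is pairwise arc-disjoint (cross-$a$ conflicts would again be forbidden by $R_\pm$ / $\xi$, matching $\psi$), and that the $\First/\Last$ coincidence conditions hold (from $\markerH$ and $R_\pm$). \textbf{(6)} Conclude $\phi^\ast$ is a partial immersion in $I$ with $\phi^\ast|_{I_1}=\phi_1$, $\phi^\ast|_{I_2}=\phi_2$, so $\phi_1,\phi_2$ are compatible.

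\textbf{Main obstacle.} The genuinely delicate part is Step (4)--(5): showing that the pieces transferred through the bijections can be merged \emph{consistently}, i.e., that the single crossing arc merging $\beta_1^a(P),\beta_2^a(Q)$ in $\psi$ has a well-defined counterpart merging $P,Q$ in $\phi$, and that no crossing arc of $A(I_1,I_2)$ gets reused. This is exactly the purpose for which the relations $R_-,R_+$ were added to the type (as the intuition paragraph after the definition explains): without them, two pieces with the same $\markerX$-marked generic endpoint in $I_1$ could be merged along two different crossing arcs in $\psi$ but would be forced to collide in $\phi$. I expect the write-up to spend most of its length carefully setting up the bijection between crossing arcs used in $\psi$ and crossing arcs to be used in $\phi$, using injectivity of $\xi$ on each cut and the $R_\pm$-classes to guarantee this is well-defined and injective; everything else (order preservation, the $\markerH$ bookkeeping, the trace-restriction identities) is routine unwinding of the definitions of scattered path, trace, and type.
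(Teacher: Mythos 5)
Your proposal is correct and follows essentially the same route as the paper: fix a witnessing gluing $\eta\in\eta_1\glue\eta_2$ with $\tau(\eta_i)=\tau_i$, set up piece-by-piece bijections $\pi_i$ from the equality of the type sequences, transfer the crossing pattern of $\eta$ to $\phi_1,\phi_2$, and use the $\xi$-colors for singular crossing arcs and the relations $R_\pm$ for generic ones to construct consistent, non-colliding gluing arcs $e'=(\tail(\last(\pi_1(P_1))),\head(\first(\pi_2(P_2))))$. The only cosmetic difference is that the paper introduces a blue/green/red coloring of the arcs of each $\eta(a)$ to name the crossing arcs cleanly, whereas you phrase the same bookkeeping via explicit bijections; the key idea — that $R_\pm$ plus the $\xi$-coloring suffice to pin down a well-defined, injective assignment of crossing arcs — is identical.
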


\begin{proof}
By the definition of compatible types, we have at least one pair of compatible partial immersions $\eta_1$ in $I_1$ and $\eta_2$ in $I_2$, of types $\tau_1$ and $\tau_2$ respectively. Consider any $\eta\in \eta_1\glue\eta_2$. We will use $\eta$ as a ``model'' for understanding how to successfully glue $\phi_1$ and $\phi_2$.

Fix $a\in A(H)$ and mappings $\pi_i$ sending the pieces of path $\eta_i(a)$ to the corresponding pieces of path $\phi_i(a)$, in accordance with the equality of types, where $i=1,2$. Moreover, color in $\eta(a)$ all arcs contained in $A(I_1,I_1)\cup\sing^+(I_1)\cup\gen^-(I_1)$ blue, all arcs contained in $A(I_2,I_2)\cup\sing^-(I_2)\cup\gen^+(I_2)$ green, and all other arcs (i.e. the ``gluing'' arcs in $A(I_1,I_2)$) red. Thus, every piece in $\eta(a)$ consists of alternating (and possibly empty) green and blue paths with a red arc between each two consecutive paths of different colors. 

Suppose that some piece $P$ of $\eta(a)$ contains a red arc $e$. Assume that the maximal monochromatic nonred subpath of $P$ directly preceding this arc is blue and the maximal monochromatic nonred subpath of $P$ directly following this arc is green (the reasoning in the other case is analogous). Then the blue path is a subpath of some piece $P_1$ of $\eta_1$ and the green path is a subpath of some piece $P_2$ of $\eta_2$, where the only difference is the arc $e$ and possibly the other extreme arcs (if they are colored red as well). We will show that we may directly glue pieces $\pi_1(P_1)$ of $\phi_1$ and $\pi_2(P_2)$ of $\phi_2$ using the arc 
\[e'=(\tail(\last(\pi_1(P_1))),\head(\first(\pi_2(P_2))))\] and moreover --- that such gluing can be performed simultaneously in all places corresponding to red arcs of $\eta$. This yields a construction of a gluing $\phi\in \phi_1\glue \phi_2$ therefore certifying the desired compatibility. Moreover, $\tau^{(r,c)}(\phi)=\tau^{(r,c)}(\eta)$ and $\eta$ was chosen arbitrarily, which implies that the types of immersions being glued are enough to determine all possible types of gluings.

First of all note that if the arc $e$ is singular of color $x$, then this fact is stored in the equivalence relations $R_+$ of $\tau^{(r,c)}(\eta_1)$ and $R_-$ of $\tau^{(r,c)}(\eta_2)$. Therefore, by the equality of types of $\eta_i$ and $\phi_i$, the arc $e'$ is singular of color $x$ as well, hence $e'=\last(\pi_1(P_1))=\first(\pi_2(P_2))$ and the gluing is possible. 

If $e$ is generic then, again by the respective equivalence relations, no other generic gluing arc joins vertices $\tail(\last(P_1))$ and $\head(\last(P_1))$. This means that $e'$ is generic and does not correspond to any other red arc of $\eta$ than $e$. Therefore again the gluing along this arc can be performed.
\end{proof}

\subparagraph*{Boundaried intervals and signatures.}

In the process of replacing protrusions we will need to consider intervals not as subsets of an ordered vertex set of a larger tournament, but as standalone structures which can be used to replace one another. We introduce the notion of an $(r,c)$-boundaried interval to enable such considerations. 

\begin{definition}
An \emph{$(r,c)$-boundaried interval}
is a digraph $D$ on vertex set $V(D)=S^+\cup I\cup S^-$ equipped with an ordering $\sigma_I$ of $I$. Furthermore, we require the following:
\begin{itemize}
\item $|I|\geq 4\|H\|(r+c)$;
\item $D[I]$ is a tournament;
\item $\sigma_I$ has width at most $c$;
\item $S^-\subseteq [3c]\times\{-\}$ and $S^+\subseteq [3c]\times\{+\}$;
\item each vertex of $S^+$ has only one incident arc and this arc belongs to the set $\dirE(I,S^+)$;
\item each vertex of $S^-$ has only one incident arc and this arc belongs to the set $\dirE(S^-,I)$.
\end{itemize}
The \emph{$r$-boundary} of $D$ is the pair of sets $I^-$ and $I^+$ consisting of the first and the last $r$ vertices of $I$ in $\sigma_I$, respectively. For $D$ defined above, we will shortly write $D=I\cup S^\pm$.
\end{definition}

Note that this notion emulates a $4\|H\|(r+c)$-long $c$-flat interval $I$ in the following sense. Arcs whose heads are contained in $S^+$ correspond to $\sing^+(I)$, and arcs whose tails are contained in the set $S^-$ correspond to $\sing^-(I)$. The names of the auxiliary vertices in $S^\pm$ correspond to the $\xi$-colors of the respective backward arcs. The $r$-boundary consists of precisely those vertices whose generic entry or exit is remembered in the $(r,c)$-type of a partial immersion.

Formally, every $4\|H\|(r+c)$-long $c$-flat $\sigma$-interval $I$ in $T$ can be uniquely encoded with an $(r,c)$-boundaried interval $D^{(r,c)}(I)$, whose structure resembles the structure of $T[I]$ and $\sing(I)$ as follows:
\begin{itemize}
\item the vertices and arcs of $I$ are kept along with their ordering in $T$;
\item every singular arc $a\in\sing^\pm(I)$ is mapped to an arc joining $(\xi(a),\pm)\in S^\pm$ with the  endpoint of $a$ contained in $I$;
\item the projection of $S^\pm$ onto the first coordinate is precisely $\{\xi(a)\mid a\in \sing^\pm(I)\}$.
\end{itemize}

The notion of a partial immersion can be naturally adjusted to the setting of boundaried intervals. The only difference is the lack of the ``generic interface'' i.e. there are no auxiliary edges in boundaries intervals used to emulate $\gen(I)$. These can be, however, emulated by storing the information from the type (marker $\markerX$ or number in $[r]$ if the generic arc is incident with the $r$-boundary, and the equivalence relations $R_{\pm}$) instead of the identity of particular generic arcs. Formally, a piece of a scattered path in $(r,c)$-boundaried interval can begin or end with an element in $\{\markerX\}\cup[r]$ instead of a generic arc. In particular, this slightly modified variant of partial immersions can be equipped with precisely the same definition of admissible $(r,c)$-type as in the former case.

\begin{definition}\label{def:signature}
An \emph{$(r,c)$-signature} is a subset of the set of all short $(r,c)$-types.
The \emph{$(r,c)$-signature} $\Sigma^{(r,c)}(I)$ of a $4\|H\|(r+c)$-long $c$-flat $\sigma$-interval $I$ is the set of all $I$-admissible minimal $(r,c)$-types.
The \emph{$(r,c)$-signature} $\Sigma^{(r,c)}(D)$ of an $(r,c)$-boundaried interval $D=I\cup S^\pm$ is the set of all $I$-admissible minimal $(r,c)$-types.
\end{definition}

The intuition behind this definition is that if $I$ is appropriately long and flat, then the signature of $I$ stores the information about all possible interactions of $I$ with minimal partial immersions. Note that $\Sigma^{(r,c)}(D^{(r,c)}(I))=\Sigma^{(r,c)}(I)$.

We say that two $(r,c)$-boundaried intervals $I\cup S^\pm$ and $I'\cup S'^\pm$ are \emph{exchangeable} if they have equal $(r,c)$-signatures, $S^\pm=S'^\pm$ (both intervals use precisely the same colors on the boundary), and the incidence structure of $r$-boundaries of those intervals with backward arcs is the same, i.e. for every $i\in [r]$ the set of colors of singular arcs incident with both $S^\mp$ and the $i$-th vertex of $I^{\pm}$ is the same as analogously defined set of colors for $i$-th vertex of $I'^{\pm}$. Intuitively this means that in $T$ we may replace the interval $I$ with $I'$.

\begin{corollary}\label{cor:const-sign}
For every pair of integers $r,c\in \N$ there is a constant $s(r,c)$ such that there are exactly $s(r,c)$ different $(r,c)$-signatures.
\end{corollary}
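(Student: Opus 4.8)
The plan is to obtain this as a one-line consequence of the finiteness statement already established in \cref{cor:const-types}. By \cref{def:signature}, an $(r,c)$-signature is \emph{by definition} a subset of the set of all short $(r,c)$-types. Denote this set of short $(r,c)$-types by $\mathcal{S}_{r,c}$. By \cref{cor:const-types}(2) we have $|\mathcal{S}_{r,c}| = t(r,c)$, a constant depending only on $r$ and $c$. Therefore the collection of all $(r,c)$-signatures is precisely the power set of $\mathcal{S}_{r,c}$, which has exactly $2^{t(r,c)}$ elements; so we may set $s(r,c) \coloneqq 2^{t(r,c)}$.

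Before concluding I would record the consistency check that the signatures actually used later — namely $\Sigma^{(r,c)}(I)$ for a $4\|H\|(r+c)$-long, $c$-flat $\sigma$-interval $I$, and $\Sigma^{(r,c)}(D)$ for an $(r,c)$-boundaried interval $D = I\cup S^\pm$ — are indeed legitimate $(r,c)$-signatures, i.e.\ genuinely subsets of $\mathcal{S}_{r,c}$. This is immediate from \cref{cor:const-types}(1): any $I$-admissible minimal $(r,c)$-type in such an interval is short, so $\Sigma^{(r,c)}(I) \subseteq \mathcal{S}_{r,c}$, and likewise for $\Sigma^{(r,c)}(D)$ by the verbatim transfer of the partial-immersion machinery to boundaried intervals described just before \cref{def:signature}. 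This observation is needed for the definition of signature to be internally consistent, but it is not strictly required for the counting statement itself.

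There is essentially no obstacle in this corollary: all the work is carried by \cref{cor:const-types}, which bounds the number of short types, and the present statement follows from the elementary fact that a set of size $n$ has exactly $2^n$ subsets. If one wanted an explicit value, one would unwind $t(r,c)$ — the number of short types — as at most the number of families, indexed by $A(H)$, of sequences of even length at most $4c+6$ over the alphabet $[-3c]\cup[r]\cup\{\markerX,\markerH\}$ of size $r+3c+2$, multiplied by the number of pairs of equivalence relations on the resulting finite ground set of terms together with $[3c]$; but no such explicit expression is needed, so I would leave $s(r,c) = 2^{t(r,c)}$ as the bound.
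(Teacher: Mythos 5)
Your proof is correct and coincides with the paper's one-line argument: since an $(r,c)$-signature is by definition a subset of the set of short $(r,c)$-types, and there are exactly $t(r,c)$ such types by \cref{cor:const-types}, one may take $s(r,c) = 2^{t(r,c)}$. The extra consistency check you record about $\Sigma^{(r,c)}(I)$ and $\Sigma^{(r,c)}(D)$ being genuine signatures is a reasonable sanity check but, as you note, not needed for the counting.
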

\begin{proof}
We may set $s(r,c)= 2^{t(r,c)}$, where $t(r,c)$ is the constant provided by \cref{cor:const-types}.
\end{proof}

For future discussion of algorithmic aspects, we will need the following observation.

\begin{lemma}\label{lem:signature-compute}
 Consider $r$ and $c$ fixed and let $T$, $\sigma$, and $I$ be as in \cref{def:signature}. Then given $T$, $\sigma$, and $I$, the signature $\Sigma^{(r,c)}(I)$ can be computed in polynomial time.
\end{lemma}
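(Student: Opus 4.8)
The plan is to show that the signature $\Sigma^{(r,c)}(I)$ can be computed by enumerating all short $(r,c)$-types $\tau$ and, for each of them, deciding whether $\tau$ is an $I$-admissible minimal $(r,c)$-type; since by \cref{cor:const-types} the number of short types $t(r,c)$ is a constant (as $r$ and $c$ are fixed), it suffices to perform each such decision in polynomial time. Checking minimality, once admissibility of $\tau$ is available as a subroutine, is immediate: one additionally checks for each of the constantly many short types $\tau'$ that is a shortening of $\tau$ and has strictly smaller size whether $\tau'$ is $I$-admissible, and declares $\tau$ minimal iff no such $\tau'$ exists. So the whole task reduces to a single primitive: \emph{given a short $(r,c)$-type $\tau$, decide whether there is a partial immersion $\phi$ in $I$ with $\tau^{(r,c)}(\phi)=\tau$}.

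For the admissibility test I would set up a flow/matching-style search, exploiting that $\sigma$ restricted to $T[I]$ has width at most $c$. A partial immersion $\phi$ realizing $\tau$ consists, for each $a\in A(H)$, of a scattered path $\phi(a)$ whose pieces have prescribed ``interface data'' at their first and last arcs (encoded by the entries of $\tau_a$, i.e. which singular arc / which boundary vertex / the markers $\markerX$, $\markerH$), together with the gluing constraints imposed by the equivalence relations $R_-,R_+$ (pieces whose marked endpoints are $R_\pm$-related must use the same vertex of $I$, and pieces related to a color $x\in[3c]$ must attach to the endpoint of the singular arc of that color). Because $\tau$ is short, the total number of pieces to be routed is bounded by a constant; each piece is a directed path inside $T[I]$ (possibly using one singular entry arc and one singular exit arc, or a generic-boundary endpoint), and the only global requirement is that all these constantly many paths be pairwise arc-disjoint. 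Finding a bounded number of arc-disjoint directed paths in $T[I]$ with prescribed endpoints — where the prescribed endpoints are either named vertices (the endpoints of specified singular arcs, or specified boundary vertices), or ``any vertex not in the first/last $r$ vertices'' in the $\markerX$ case, or ``a vertex to be chosen, subject to the $R_\pm$-equalities'' — can be solved by first guessing the (constantly many) concrete endpoint vertices among the at most $2c$ singular-arc endpoints, the at most $2r$ boundary vertices, and at most a constant number of additional ``pivot'' vertices (as in the proof of \cref{lem:teleports}, one may restrict the $\markerX$-endpoints to distinct pivots chosen outside $F=\emptyset$, the singular endpoints, and already-used arcs), and then running a single min-cost/max-flow computation to route the required arc-disjoint paths. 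All of this is polynomial in $|I|\le|T|$ because only a constant number of guesses are made and each flow computation is polynomial.

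The main obstacle I expect is bookkeeping rather than a genuine algorithmic difficulty: one has to be careful that the notion of admissibility used here is exactly the one from \cref{def:signature} — in particular that, after restricting attention to minimal partial immersions (which by \cref{lem:teleports} with $F=\emptyset$ have all scattered paths of size at most $2c+3$, so only short types arise, by \cref{cor:const-types}), nothing is lost by only searching over bounded-size scattered paths — and that the encoding of the type (the meaning of each value in $[-3c]\cup[r]\cup\{\markerX,\markerH\}$ and of $R_-,R_+$) is translated faithfully into endpoint constraints for the routing problem, including the subtle point that two $\markerX$-endpoints that are \emph{not} $R_\pm$-related are allowed but not required to coincide, while $R_\pm$-related ones must. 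Once the correspondence ``partial immersion of type $\tau$'' $\leftrightarrow$ ``feasible solution of the associated constant-size arc-disjoint paths instance in $T[I]$'' is made precise, correctness follows and the running time bound is clear.
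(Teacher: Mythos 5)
Your outer strategy matches the paper's exactly: enumerate the constantly many short $(r,c)$-types, decide $I$-admissibility for each, then filter by minimality. Where you diverge is the admissibility subroutine, and that is where your argument has a genuine gap.

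You propose to guess the constantly many concrete endpoint vertices and then ``run a single min-cost/max-flow computation to route the required arc-disjoint paths.'' This does not work. What is needed, after fixing endpoints, is a collection of pairwise arc-disjoint directed paths between \emph{prescribed source--sink pairs}. This is the directed $k$-linkage (arc-disjoint paths) problem, which is NP-hard already for two prescribed pairs in general digraphs (Fortune--Hopcroft--Wyllie), and a max-flow computation cannot enforce the pairing between sources and sinks. Your routing step also does not invoke the bounded-cutwidth structure of $T[I]$ anywhere, so there is nothing in the argument that would rescue tractability; as written it is simply a reduction to an NP-hard problem. What makes the subroutine feasible here is precisely the semi-complete/tournament structure with bounded cutwidth: the paper uses the rooted immersion containment algorithm of Fomin--Pilipczuk~\cite{FoPi} (which internally solves exactly these bounded-size linkage/immersion queries in semi-complete digraphs in polynomial time for fixed $H$), or alternatively expresses admissibility of a fixed type in $\mathsf{MSO}$ and applies Courcelle-style model-checking on bounded cliquewidth (bounded cutwidth of a tournament implies bounded cliquewidth). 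Either of these replaces your flow step and makes the argument go through; without such a replacement the proof is incomplete.

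One smaller point: your claim that the endpoints to guess range over ``constantly many'' vertices is not quite right for branch vertices of $H$ mapped inside $I$ (the $\markerH$ case): such a branch vertex may lie anywhere in $I$, so one must enumerate up to $|I|^{|V(H)|}$ tuples. This is still polynomial for fixed $H$, so it does not break the argument, but it should be stated.
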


\begin{proof}
 It suffices to decide, for every short $(r,c)$-type $\tau$, whether $\tau$ is $I$-admissible. This can be done in polynomial time using the algorithm for rooted immersion containment of Fomin and Pilipczuk~\cite{FoPi}. Alternatively, it follows from the result of Ganian et al.~\cite[Theorem~17]{GanianHO11} that $I$-admissibility of a fixed type $\tau$ can be expressed using an $\mathsf{MSO}$ formula. Then one can apply the algorithm for $\mathsf{MSO}$ model-checking on structures of bounded cliquewidth~\cite{CourcelleMR00} in conjunction with the observation that boundedness of cutwidth entails boundedness of cliquewidth for tournaments~\cite{FoPi}.
\end{proof}

Let $\S^{(r,c)}$ be the set of all $(r,c)$-signatures; we have $|\S^{(r,c)}|=s(r,c)$, where $s(r,c)$ is the constant given by \cref{cor:const-sign}. Let $\S^{(r,c)}_\sigma$ be the set of those $(r,c)$-signatures which are equal to $\Sigma^{(r,c)}(I)$ for some $\sigma$-interval $I$, i.e. contain only $I$-admissible minimal $(r,c)$-types. Then $\S^{(r,c)}_\sigma\subseteq S^{(r,c)}$, so $|\S^{(r,c)}_\sigma|\leq s(r,c)$. We argue that $\S^{r,c}_\sigma$ has a structure of a semigroup in the following sense.

\begin{lemma}\label{lem:composability}
Let $I_1$, $I_2$ be two $4\|H\|(r+c)$-long $c$-flat $\sigma$-intervals such that $I=I_1\cup I_2$ is a $c$-flat $\sigma$-interval. Then $\Sigma^{(r,c)}(I)$ is uniquely determined by $\Sigma^{(r,c)}(I_1)$ and $\Sigma^{(r,c)}(I_2)$. 
\end{lemma}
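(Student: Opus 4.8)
The plan is to exhibit an explicit formula computing $\Sigma^{(r,c)}(I)$ from $\Sigma^{(r,c)}(I_1)$ and $\Sigma^{(r,c)}(I_2)$ and then to verify it. Note first that, as $I_1,I_2$ are $4\|H\|(r+c)$-long and $c$-flat and $I=I_1\cup I_2$ is a $c$-flat $\sigma$-interval (hence also $4\|H\|(r+c)$-long), all three signatures are well defined; moreover, by the construction underlying \cref{lem:types-behave-well}, the compatibility relation between an $(r,c)$-type of $I_1$ and an $(r,c)$-type of $I_2$, and the associated gluing set $\tau_1\glue\tau_2$, depend only on the two types (once each is known to be admissible in the respective interval), not further on $I_1$ and $I_2$. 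For a set $X$ of $(r,c)$-types write $\mathrm{Min}(X)$ for the set of $\tau\in X$ admitting no shortening $\tau'\in X$ of strictly smaller size. I would prove the identity
\[\Sigma^{(r,c)}(I)=\mathrm{Min}\!\left(\ \bigcup_{\substack{\tau_1\in\Sigma^{(r,c)}(I_1),\ \tau_2\in\Sigma^{(r,c)}(I_2)\\ \tau_1,\tau_2\ \text{compatible}}}\tau_1\glue\tau_2\ \right)=:\mathrm{Min}(B).\]
Its right-hand side depends only on $\Sigma^{(r,c)}(I_1)$ and $\Sigma^{(r,c)}(I_2)$, so the lemma follows once the identity is proved.

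For ``$\subseteq$'', take $\tau\in\Sigma^{(r,c)}(I)$ and a partial immersion $\phi$ in $I$ with $\tau^{(r,c)}(\phi)=\tau$; by \cref{obs:minimal-types} $\phi$ is minimal. Set $\phi_i\coloneqq\phi|_{I_i}$, so $\phi\in\phi_1\glue\phi_2$, whence by \cref{obs:gluing-minimal} both $\phi_1$ and $\phi_2$ are minimal; thus their types $\tau_i\coloneqq\tau^{(r,c)}(\phi_i)$ lie in $\Sigma^{(r,c)}(I_i)$ (using \cref{obs:minimal-types} again), are compatible as witnessed by $\phi$, and satisfy $\tau\in\tau_1\glue\tau_2\subseteq B$. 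Since every element of $B$ is, by the definition of the type-level gluing, the $(r,c)$-type of some partial immersion in $I$ and hence $I$-admissible, and since $\tau$ is a minimal $I$-admissible type, no shortening of $\tau$ of strictly smaller size lies in $B$; therefore $\tau\in\mathrm{Min}(B)$.

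For ``$\supseteq$'', take $\tau\in\mathrm{Min}(B)$. From $\tau\in B$ we get that $\tau$ is $I$-admissible, so it remains to show that $\tau$ is minimal in $I$. Suppose not: there is an $I$-admissible type $\tau'$ of strictly smaller size that is a shortening of $\tau$. Realize $\tau'$ by a partial immersion $\phi'$ in $I$ and shorten it repeatedly while possible — each step strictly decreases the total number of pieces, so the process terminates at a minimal partial immersion $\phi''$; its type $\tau''$ is $I$-admissible and minimal, so $\tau''\in\Sigma^{(r,c)}(I)$. Since shortening is transitive for partial immersions and carries over to types, $\tau''$ is a shortening of $\tau$ of size no larger than that of $\tau'$, in particular of strictly smaller size than $\tau$. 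By the inclusion already proved, $\Sigma^{(r,c)}(I)\subseteq\mathrm{Min}(B)\subseteq B$, so $\tau''\in B$, contradicting $\tau\in\mathrm{Min}(B)$. Hence $\tau$ is minimal in $I$ and $\tau\in\Sigma^{(r,c)}(I)$, which finishes the identity and the lemma.

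The step I expect to be the main obstacle is the minimality bookkeeping in the ``$\supseteq$'' direction: one must reconcile minimality within the candidate set $B$ (obtained by gluing only minimal types of the two parts) with genuine minimality among all $I$-admissible types. This is exactly where \cref{obs:gluing-minimal} — restrictions of a minimal partial immersion stay minimal, which yields $\Sigma^{(r,c)}(I)\subseteq B$ — together with the ``shorten to a minimal witness'' trick are needed. A secondary point requiring care is confirming that $\glue$ and the compatibility relation are genuinely intrinsic to the abstract types and not to the ambient intervals, so that $B$ (and thus $\mathrm{Min}(B)$) is a function of the two signatures alone; this amounts to inspecting the proof of \cref{lem:types-behave-well}, where the glued partial immersion and its type are reconstructed using only the data stored in $\tau_1$ and $\tau_2$ — the singular interface, the boundary-position codes, and the relations $R_{\pm}$.
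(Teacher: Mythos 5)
Your proof is correct and follows essentially the same route as the paper. The paper's argument is stated in two parts: (i) every minimal admissible type of $I$ lies in some $\tau_1\glue\tau_2$ with $\tau_i\in\Sigma^{(r,c)}(I_i)$, proved via \cref{obs:gluing-minimal} applied to the restrictions $\phi|_{I_1},\phi|_{I_2}$; and (ii) every element of such a gluing is $I$-admissible. Those are exactly your two containments, and your $\mathrm{Min}(B)$ reformulation simply makes explicit why (i) and (ii) together pin down $\Sigma^{(r,c)}(I)$ — including the minimality bookkeeping (shorten a witness to a minimal one, appeal to transitivity of shortening, and use $\Sigma^{(r,c)}(I)\subseteq B$) and the observation, already visible in the proof of \cref{lem:types-behave-well}, that type compatibility and $\tau_1\glue\tau_2$ are intrinsic to the abstract type data. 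The paper leaves these last steps implicit; your write-up supplies them, but no new idea is involved.
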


\begin{proof}
It is enough to prove that:
\begin{itemize}[nosep]
\item[(i)] for every $\tau\in \Sigma^{(r,c)}(I)$ there exist $\tau_1\in\Sigma^{(r,c)}(I_1)$ and $\tau_2\in\Sigma^{(r,c)}(I_2)$ which are compatible and $\tau\in\tau_1\glue\tau_2$; and
\item[(ii)] for every two compatible $\tau_1\in\Sigma^{(r,c)}(I_1)$ and $\tau_2\in\Sigma^{(r,c)}(I_2)$ and every $\tau\in\tau_1\glue\tau_2$ that is minimal in $I$, we have $\tau\in \Sigma^{(r,c)}(I)$.
\end{itemize}

For the proof of (i) take a minimal partial immersion $\phi$ in $I$ with $\tau^{(r,c)}=\tau$. We know that $\tau\in\tau^{(r,c)}(\phi|_{I_1})\oplus\tau^{(r,c)}(\phi|_{I_2})$, so it is enough to prove that $\phi|_{I_1}$ and $\phi|_{I_2}$ are minimal. But this follows directly from \cref{obs:gluing-minimal}.

For the proof of (ii) note that $\tau_1\glue\tau_2$ consists only of $I$-admissible types.
\end{proof}

%
%

\Cref{lem:composability} implies that the set $\S^{(r,c)}_\sigma\cup \{0\}$ can be endowed with an associative binary product operation such that for every two consecutive intervals $I_1$, $I_2$, the product of their signatures is the signature of their union $I_1\cup I_2$. Formally, we set the product for all pairs of consecutive intervals as above; \cref{lem:composability} enures that this is well-defined. Next, for all pairs of elements $\tau_1,\tau_2\in \S^{r,c}_\sigma$ for which their product is not yet defined, we set $\tau_1\cdot \tau_2=0$. Also, we set $0=0\cdot 0=0\cdot \tau=\tau\cdot 0$ for all $\tau\in \S^{(r,c)}_\sigma$. In this way, $\S^{(r,c)}_\sigma\cup \{0\}$ becomes a monoid; the empty signature is the neutral element of multiplication.

By \cref{lem:Simon} we obtain the following.

\begin{corollary}\label{cor:Simon-intervals}
Suppose $I$ is a $c$-flat $4\|H\|(r+c)\ell^{3s(r,c)}$-long $\sigma$-interval. Then there exists a sequence of consecutive $4\|H\|(r+c)$-long $c$-flat $\sigma$-intervals $(I_i)_{i=1}^\ell$ whose $(r,c)$-signatures are equal and equal to the signature of their union. Moreover, given $r$, $c$, $T$, $\sigma$, and $I$, such a sequence can be found in polynomial time.
\end{corollary}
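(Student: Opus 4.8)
The plan is to apply \cref{lem:Simon} to the monoid $\S^{(r,c)}_\sigma\cup\{0\}$ constructed just above the statement, using as the input word a suitable factorization of $I$ into short consecutive blocks. Concretely, first I would partition $I$ into $N\coloneqq \ell^{3s(r,c)}$ consecutive $\sigma$-intervals $J_1,\dots,J_N$, each of length exactly $4\|H\|(r+c)$; this is possible since $|I|=4\|H\|(r+c)\,\ell^{3s(r,c)}$. Each such $J_t$ is $c$-flat: the induced ordering has width at most $c$ because $J_t$ is a sub-interval of the $c$-flat interval $I$ (here I would invoke that $\sigma$ restricted to $T[I]$ has width at most $c$, hence so does its restriction to any sub-interval), and $|\sing^\pm(J_t)|\le c$ since $\sing^\pm(J_t)\subseteq\cut_\sigma[\cdot]$, which has size at most the width of $\sigma$, which is at most $c$ on the relevant cuts inside $I$ — more carefully, one argues $|\sing^\pm(J_t)|\leq c$ directly from $c$-flatness of $I$ as in the preliminaries. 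So each $J_t$ has a well-defined $(r,c)$-signature $\Sigma^{(r,c)}(J_t)\in\S^{(r,c)}_\sigma$.

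Next I would set up the morphism. Let $A\coloneqq \S^{(r,c)}_\sigma$ be the alphabet (finite by \cref{cor:const-sign}), let $S\coloneqq\S^{(r,c)}_\sigma\cup\{0\}$ be the monoid, and let $\alpha\colon A^\star\to S$ be the morphism extending the identity on single letters. Form the word $w\coloneqq \Sigma^{(r,c)}(J_1)\,\Sigma^{(r,c)}(J_2)\cdots \Sigma^{(r,c)}(J_N)\in A^\star$ of length $N=\ell^{3s(r,c)}\ge \ell^{3|S|}$ — wait, note $|S|=s(r,c)+1$, so to be safe I would instead take $N=\ell^{3(s(r,c)+1)}$, or simply observe that $\ell^{3|S|}$ with $|S|\le s(r,c)+1$ is what \cref{lem:Simon} needs and adjust the length bound in the statement accordingly; this is a minor bookkeeping point I would reconcile with the exact constant. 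Apply \cref{lem:Simon}: there is a contiguous subword $w'=w_{a}w_{a+1}\cdots w_{b}$ which decomposes as $w'=u_1u_2\cdots u_\ell$ with each $u_j$ a nonempty (contiguous) block of consecutive letters and $\alpha(u_1)=\cdots=\alpha(u_\ell)=e$ for some idempotent $e\in S$. First I would argue $e\ne 0$: by \cref{lem:composability}, $\alpha$ applied to any block of consecutive letters $\Sigma^{(r,c)}(J_p)\cdots\Sigma^{(r,c)}(J_{p'})$ equals $\Sigma^{(r,c)}(J_p\cup\cdots\cup J_{p'})$ provided this union is a $c$-flat interval (which it is, being a sub-interval of $I$), so this product is a genuine signature, never $0$. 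Hence $e\in\S^{(r,c)}_\sigma$.

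Now define $I_j$ to be the union of the consecutive $\sigma$-intervals $J_t$ corresponding to the block $u_j$; since each $u_j$ is nonempty, each $I_j$ is a union of at least one $J_t$, hence a $c$-flat $\sigma$-interval of length at least $4\|H\|(r+c)$, and the $I_j$ are consecutive. By the monoid structure and \cref{lem:composability} again, $\Sigma^{(r,c)}(I_j)=\alpha(u_j)=e$ for every $j$, and $\Sigma^{(r,c)}(I_1\cup\cdots\cup I_\ell)=\alpha(u_1\cdots u_\ell)=e^{\ell}=e$ since $e$ is idempotent — so all signatures are equal and equal to the signature of the union, as required. For the algorithmic claim, the partition into the $J_t$'s is trivial, the signatures $\Sigma^{(r,c)}(J_t)$ are computable in polynomial time by \cref{lem:signature-compute}, and the factorization $w'=u_1\cdots u_\ell$ can be found in time $\Oh(|S|\cdot N^3)$ as remarked after \cref{lem:Simon} (note $N$ is polynomial in $|I|\le |T|$ for fixed $r,c$). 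The main obstacle I anticipate is not deep but fiddly: carefully checking that every sub-interval obtained by merging blocks is genuinely $c$-flat (so that the monoid product really is realized by the signature-of-union operation and $0$ never appears), and reconciling the exact exponent $3s(r,c)$ in the length hypothesis with the $3|S|$ appearing in \cref{lem:Simon} given that $|S|=s(r,c)+1$; these are the two places where the argument needs a little care rather than a one-line appeal.
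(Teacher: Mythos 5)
Your proposal takes essentially the same route as the paper: partition $I$ into consecutive $4\|H\|(r+c)$-long blocks, map the blocks to their $(r,c)$-signatures, and apply \cref{lem:Simon} together with \cref{lem:composability} to extract the idempotent factorization; this is exactly what the (very terse) published proof does.

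Two remarks. Your off-by-one concern is legitimate: the monoid $\S^{(r,c)}_\sigma\cup\{0\}$ has up to $s(r,c)+1$ elements, so a literally correct hypothesis would require an $\ell^{3(s(r,c)+1)}$-long interval; the paper is silent on this, and it is harmless for the final polynomial bound, but you are right to flag it. More seriously, your justification that each block $J_t$ is $c$-flat is incorrect as stated. You argue $\sing^\pm(J_t)\subseteq\cut_\sigma[\cdot]$ and claim this has size at most $c$, but $\cut_\sigma[\gamma]$ is a cut of the \emph{full} tournament $T$, whose width under $\sigma$ is \emph{not} controlled by $c$-flatness of $I$ (in the application $\ctw(T)$ can be $\Theta(k^2)$). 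What $c$-flatness of $I$ actually gives you is two separate bounds: $|\sing^\pm(I)|\le c$, and a width-$c$ bound on $\sigma$ restricted to $T[I]$. Splitting $\sing^+(J_t)$ into arcs whose head lies outside $I$ (at most $c$, since these belong to $\sing^+(I)$) and arcs whose head lies in $I$ before $J_t$ (at most $c$, from the width of $\sigma$ on $T[I]$) yields $|\sing^\pm(J_t)|\le 2c$, so the blocks are only $2c$-flat in general. The corollary's own statement and the paper's one-line proof share this imprecision; it is harmless downstream because the typing machinery only needs the $3c$-coloring $\xi$ of singular arcs, which any sub-interval of a $c$-flat interval inherits, but the reasoning you supplied for the literal $c$-flatness claim would not close that gap, and you should record the bound as $2c$ rather than $c$.
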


\begin{proof}
Let $(X_i)_{i=1}^{\ell^{3s(r,c)}}$ be an arbitrary sequence of consecutive $4\|H\|(r+c)$-long $c$-flat $\sigma$-intervals included in $I$. Note that $\Sigma^{(r,c)}_\sigma\colon A^\ast\to \S^{(r,c)}$ is a properly defined morphism, sending a single interval to its signature and a concatenation of two consecutive intervals to the unique signature of their union (which is not $0$).  Applying \cref{lem:Simon} finishes the proof; the algorithmic statement follows from the remark after \cref{lem:Simon}.
\end{proof}

\section{Finding protrusions}\label{sec:finding}

In order to find an appropriately large subgraph of $T$ which does not ``affect'' the behavior of $T$ with respect to \himmt{}, we roughly proceed as follows. First, we find a suitable ordering $\sigma$ of $V(T)$ and an appropriately long interval $X$ in $\sigma$ such that $X$ has a constant-size singular interface towards the remainder of $T$. Then, inside $X$, we find (again, an appropriately long) subinterval $I$ of a very specific structure: $I$ can be divided into $2k+3$ subintervals with the same signatures as itself. This is where we use Simon Factorization through \cref{cor:Simon-intervals}. In the next part we use this extra structure to prove that one of these subintervals can be replaced with a strictly smaller replacement in such a way that after the substitution, we obtain an equivalent instance of the problem.

We proceed to a formal implementation of this plan. The first lemma gives the ordering $\sigma$ and the interval~$X$.

\begin{lemma}\label{lem:protrusion-const-boundary}
Let $T$ be a tournament with $\ctw(T)\leq c$ and $|T|\geq (2c+1)(x+1)(k+1)$. If $T$ contains at most $k$ arc-disjoint immersion copies of $H$, then there exists an ordering $\sigma$ of $V(T)$ 
and an $H$-free $\sigma$-interval $X$ such that $|X|\geq x$ and $X$ is $\Cctw$-flat with respect to $\sigma$.

Moreover, given $T,k,c,x$ as above, one can in polynomial time either conclude that $T$ contains more than $k$ arc-disjoint immersion copies of $H$, or find an ordering $\sigma$ and an interval $X$ satisfying the above properties.
\end{lemma}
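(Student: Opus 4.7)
The plan is to combine sorting by outdegree with a greedy partition into $H$-free intervals and two rounds of pigeonhole: the first round either witnesses many arc-disjoint copies of $H$ or isolates a large $H$-free protrusion $Z$, and the second tightens the singular boundary of a sub-protrusion from at most $c$ down to at most $\Cctw$.

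First, I sort the vertices of $T$ in non-increasing order of their outdegrees in $T$ to obtain an ordering $\sigma_0$; by \cref{lem:sorting} its width equals $\ctw(T)\leq c$. I then greedily carve off consecutive $\sigma_0$-prefixes: each $X_i$ is the shortest still-unprocessed prefix whose induced subtournament contains an immersion of $H$, and $X_i' \coloneqq X_i \setminus \{v_i\}$ (with $v_i$ the last vertex of $X_i$) is $H$-free by the minimality of $X_i$. If this procedure ever produces more than $k$ prefixes $X_i$, then the immersion copies inside their vertex-disjoint sets are pairwise arc-disjoint and we conclude; otherwise at most $m \leq k$ such prefixes arise, followed by an $H$-free tail $Y$. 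A first pigeonhole over the at most $k+1$ $H$-free $\sigma_0$-intervals $X_1',\ldots,X_m',Y$, whose total size is at least $|V(T)|-m$, produces one, call it $Z$, of size at least $(2c+1)(x+1)$.

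Next I reorder only inside $Z$. Since $T[Z]$ is $H$-free, \cref{cor:const-ctw} gives $\ctw(T[Z])\leq \Cctw$, and \cref{lem:sorting} applied to $T[Z]$ lets me replace $\sigma_0|_Z$ by the outdegree ordering of $T[Z]$ to obtain an ordering $\sigma$ of $V(T)$ with $\sigma|_{T[Z]}$ of width at most $\Cctw$. Permuting vertices only inside the $\sigma_0$-interval $Z$ leaves $\sing^+(Z)$ and $\sing^-(Z)$ (computed with respect to $\sigma$) unchanged, so each still has size at most $c$. Let $B \subseteq Z$ collect the tails of $\sing^+(Z)$ and the heads of $\sing^-(Z)$, so $|B| \leq 2c$; removing $B$ from $Z$ splits it into at most $2c+1$ consecutive $\sigma$-subintervals. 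A second pigeonhole extracts one such sub-interval $X$ with $|X| \geq x$.

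The resulting $X$ is $\Cctw$-flat: the width of $\sigma|_{T[X]}$ inherits from the width of $\sigma|_{T[Z]}$, which is at most $\Cctw$; and any arc in $\sing^+(X)$ either stays inside $Z$, in which case it contributes to a cut of $\sigma|_{T[Z]}$ at the left end of $X$ and so there are at most $\Cctw$ such arcs, or leaves $Z$ backward in $\sigma$, in which case its tail would lie in $B$, contradicting $X \cap B = \emptyset$; symmetrically for $\sing^-(X)$. I do not foresee a hard step beyond calibrating the parameters so that the lower bound $|T|\geq (2c+1)(x+1)(k+1)$ survives both pigeonholes: the factor $k+1$ absorbs the first split among $H$-free intervals, the factor $2c+1$ absorbs the second split produced by removing $B$, and the factor $x+1$ provides just enough slack for the small additive losses from the $m$ singletons and from $B$. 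The algorithmic version follows because each step (outdegree sorting, detecting immersions of $H$ in a subtournament by the algorithm of Fomin and Pilipczuk~\cite{FoPi}, and the two pigeonhole selections) runs in polynomial time.
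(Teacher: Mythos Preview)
Your proof is correct and follows essentially the same two-stage approach as the paper: find a large $H$-free $\sigma_0$-interval, reorder it internally to get width $\leq \Cctw$, then delete the at most $2c$ vertices touching the singular boundary and pigeonhole among the resulting pieces. The only difference is cosmetic: where the paper simply partitions $\sigma_0$ into $k+1$ equal blocks of size $\geq (2c+1)(x+1)$ and observes that one must be $H$-free, you use a greedy prefix-carving that produces at most $k+1$ $H$-free intervals of varying sizes and then appeals to pigeonhole (with an integrality argument) to extract one of the required size; both routes yield the same interval $Z$ and the remainder of your argument coincides with the paper's.
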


\begin{proof}
We first argue the existential statement, and then address the algorithmic aspects at the end.

Let $\sigma_0$ be an ordering of $V(T)$ of width $\ctw(T)\leq c$.
Consider any $k+1$ vertex-disjoint $\sigma_0$-intervals, each containing at least $(2c+1)(x+1)$ vertices. By assumption, at least one of them, say $I$
, induces an $H$-free graph. Let 
$B=\sing(I)$. Then $|B|\leq 2c$.

Note that we have $\ctw(T[I])\leq \Cctw$ by \cref{cor:const-ctw}. Let $\sigma_I$ be an ordering of vertices of $I$ of width $\ctw(T[I])$ and let $\sigma$ be the ordering obtained 
from $\sigma_0$ by reordering $I$ according to $\sigma_I$, that is,
\begin{itemize}[nosep]
\item If $u,v\in I$ then $\sigma(u)<\sigma(v)\iff \sigma_I(u)<\sigma_I(v)$.
\item If $u\notin I$ or $v\notin I$, then  $\sigma(u)<\sigma(v)\iff \sigma_0(u)<\sigma_0(v)$.
\end{itemize}

Consider the set 
\[I_{-B}\coloneqq I-\left(\{\tail(a) \mid a\in \sing^+(I)\}\cup \{\head(a) \mid a\in \sing^-(I)\} \right),\]
i.e. the vertices in $I$ which are not incident with $B$. Since $|B|\leq 2c$, we have that $|I_{-B}|>(2c+1)x$ and $I_{-B}$ is the union of at most $2c+1$ $\sigma$-intervals.
Hence, at least one of these intervals has size at least $x$. Call it $X$.

Let \[I^-\coloneqq \{v\in I~|~\sigma(v)<\sigma(w)\text{ for every }w\in X\}\quad\textrm{and}\quad I^+=I-I^--X.\] Note that as no arc from $B$ is adjacent to $X$, $\dirE(X,I^-)$ contains precisely all $\sigma$-backward arcs in $T$ with tail in $X$, and $\dirE(I^+,X)$ contains all those with head in $X$. On the other hand, both $\dirE(X,I^-)$ and $\dirE(I^+,X)$ have size bounded by the width of $\sigma_I$, which is at most $\Cctw$. It follows that $X$ is $c_H$-flat with respect to $\sigma$.

As for the algorithmic aspect, an ordering $\sigma_0$ with optimum width can be computed using \cref{lem:sorting}. Then we can divide $\sigma_0$ into $k+1$ disjoint $\sigma_0$-intervals of size at least $(2c+1)(x+1)$ each, and for each of them check in polynomial time whether it induces an $H$-free subtournament using the algorithm from~\cite{FoPi}. If this check fails for each of the intervals, we may terminate the algorithm and conclude that $T$ contains $k+1$ arc-disjoint copies of $H$. Otherwise, we have one interval $I$ such that $T[I]$ is $H$-free, and it is straightforward to turn the remainder of the reasoning into a polynomial-time algorithm computing $\sigma$ and $X$.
\end{proof}

In the remainder of this section let $x\geq 4\|H\|$, $r$, $c$ be fixed positive integers. Moreover, let $T$ be a tournament for which there exists an optimal solution of size at most $k$ and let $\sigma$ be an ordering of $T$. Finally let $X$ be an $H$-free $c$-flat $x(3c+k+1)(2k+3)^{3s(r,c)}$-long $\sigma$-interval. We assume that there is a coloring $\xi$ mapping all $\sigma$-backward arcs incident to $X$ to colors in $[3c]$ such that not two such arcs of the same color participate in the same $\sigma$-cut.

We may apply \cref{cor:Simon-intervals} to $X$ to find a collection of $2k+3$ consecutive $\sigma$-intervals $I_i$ with $|I_i|=x(3c+k+1)$ for all $i\in [2k+3]$ such that all $I_i$ have equal $(r,c)$-signatures, and this common signature, call it $\Sigma$, is equal to the $(r,c)$-signature of their union $I$. Then $I$ is an $H$-free $c$-flat $x(3c+k+1)(2k+3)$-long $\sigma$-interval. That such $I$ can be found in polynomial time (given $r$, $c$, $x$, $T$, $X$, $\sigma$, and $\xi$) follows from \cref{cor:Simon-intervals}.

Now comes a key step in the proof: we argue that from the equality of types of $I,I_1,\ldots,I_{2k+3}$ it follows that every optimum solution will contain a bounded number of arcs incident with $I$.

\begin{lemma}\label{lem:finding}
For every optimal solution $F\subseteq A(T)$, we have $|F\cap A(I,V(T))|\leq 2c$.
\end{lemma}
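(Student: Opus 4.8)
The key observation is that the union $I = I_1 \cup \cdots \cup I_{2k+3}$ splits into $2k+3$ consecutive blocks all sharing the signature $\Sigma$, which is also the signature of $I$ itself. I would argue by contradiction: suppose some optimal solution $F$ uses at least $2c+1$ arcs incident with $I$. Since each $I_i$ has length $x(3c+k+1)$, and $|F| \leq k$, at most $k$ of the intervals $I_i$ contain an endpoint of an arc of $F$; choose one, call it $I_{i_0}$, that is entirely untouched by $F$, i.e. $A(I_{i_0}, V(T)) \cap F = \emptyset$. The idea is then to replace $I_{i_0}$ by a fresh copy of some ``cheaper'' interval with the same signature — concretely, a minimal $(r,c)$-boundaried interval realizing $\Sigma$ — and produce a new instance $(T', k)$ that is equivalent, but where the analogue of $F$ in $T'$ is strictly smaller, contradicting optimality. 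Actually, re-reading the statement: we want to bound $|F \cap A(I, V(T))|$ directly, so more likely the argument is the reverse — we show that $F$ restricted to $I$ can be pushed entirely onto the constant-size singular interface.

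Here is the approach I would actually pursue. Since $X$ (hence $I$) is $H$-free, the arcs of $F$ with both endpoints in $I$ are useless: $T - (F \setminus A(I,I))$ already has $T[I]$ being $H$-free, so deleting arcs inside $I$ can only help an immersion model that passes through $I$ via generic or singular arcs. The real content is: for any immersion model $\wh H$ of $H$ in $T - (F \setminus A(I,I))$, we can find another immersion model $\wh H'$ that avoids $A(I,I)$ entirely but uses only the $\leq 2c$ singular arcs of $\sing(I)$ to interact with $I$ — and then deleting $\sing(I) \cap F$ (of size $\leq 2c$) together with $F \setminus A(I,V(T))$ is already a solution, so by optimality $|F \cap A(I, V(T))| \leq 2c$. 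To build $\wh H'$: the trace of $\wh H$ on $I$ is a partial immersion $\phi$ in $I$; by \cref{lem:teleports} (applied with the set $F \cap A(I,V(T))$, which has size $\leq f$ where we take $f$ appropriately, and using that $I$ is long enough, namely $x(3c+k+1)(2k+3) \geq 4\|H\|(c+f+1)$) we may assume $\phi$ is minimal, so each scattered path $\phi(a)$ has size at most $2c+3$ and in particular uses at most $O(c)$ distinct generic entry/exit points. The point of the Simon-factorization structure is that $I$ contains $2k+3$ blocks of signature $\Sigma$, one of which — say $I_{i_0}$ — is untouched by $F$; within that block we can re-route all the generic pieces through fresh ``pivot'' vertices of $I_{i_0}$ exactly as in the proof of \cref{lem:teleports}, so that the rerouted model uses no arc inside $I$ from $F$ at all, only the singular arcs.

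**The main obstacle.** I expect the delicate point to be arguing that rerouting inside the clean block $I_{i_0}$ does not create conflicts with the part of $\wh H$ living \emph{outside} $I$, nor with the other scattered paths. Arc-disjointness across the boundary is governed precisely by the generic/singular colour data recorded in the $(r,c)$-type and the equivalence relations $R_\pm$ — this is exactly what \cref{lem:types-behave-well} is designed to handle: since $\Sigma^{(r,c)}(I_{i_0}) = \Sigma = \Sigma^{(r,c)}(I)$, any minimal partial immersion type realizable in $I$ is also realizable in $I_{i_0}$, and by \cref{lem:types-behave-well} compatibility of types across $I \setminus I_{i_0}$ and $I_{i_0}$ implies genuine compatibility of the partial immersions, so the glued model is a legitimate immersion of $H$ in $T$. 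The one subtlety is ensuring the rerouted model avoids $F \cap A(I,I)$: this is where we crucially use that $I_{i_0}$ is disjoint from $F$, so all new pieces (which live in $T[I_{i_0}] \cup \gen(I_{i_0})$) are automatically $F$-free; combined with the $\leq 2c$ singular arcs of $\sing(I) = \sing(I_{i_0} \text{ extended})$ being the only interface arcs, we conclude that $(F \setminus A(I,V(T))) \cup (F \cap \sing(I))$ is already a solution of size $|F| - |F \cap \gen(I)| \leq |F|$, forcing $|F \cap \gen(I)| = 0$ up to optimality and hence $|F \cap A(I,V(T))| = |F \cap \sing(I)| \leq 2c$.
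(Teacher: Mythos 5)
Your high-level idea matches the paper's: show that $F$ restricted to $I$ can be replaced by (a subset of) the singular interface $\sing(I)$, using an untouched block $I_{i_0}$ to re-host the trace of any surviving immersion model, and conclude the bound by optimality of $F$. The paper formalizes this by defining $F_\star = (F - A(I,V(T))) \cup \sing(I)$, arguing $F_\star$ is a solution, and deriving $|F \cap A(I,V(T))| \le |\sing(I)| \le 2c$. However, there are several concrete gaps in your proposal.

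First, your choice of the replacement set $(F \setminus A(I,V(T))) \cup (F \cap \sing(I))$ is strictly harder to work with than the paper's $F_\star$. By including \emph{all} of $\sing(I)$ in $F_\star$, the paper forces any immersion model $\phi$ in $T - F_\star$ to interact with $I$ exclusively through generic arcs (the type of $\phi|_I$ has no negative entries). With your $F'$, an immersion model in $T - F'$ may freely use the singular arcs in $\sing(I) \setminus F$; the partial immersion $\phi'$ of the same type found inside $I_{i_0}$ would then hit $I_{i_0}$-singular arcs of matching colors, which are different arcs from the $I$-singular arcs used by $\phi|_I$, and you never explain how these two sets of singular arcs interact with $\phi|_{T-I}$. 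The paper sidesteps this entirely by making the boundary ``purely generic''.

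Second, your invocation of \cref{lem:types-behave-well} does not apply: that lemma is about gluing partial immersions on two \emph{consecutive} intervals whose union is again an interval. Here $\phi'$ lives on $I_{i_0}$ and the other half lives on $T - I$, which are separated by $I \setminus I_{i_0}$; they are not consecutive, so compatibility of types says nothing about genuine gluability. The paper's proof does not use \cref{lem:types-behave-well} at all in this step; instead it bridges the gap by a concrete ``pivot'' construction, inserting two-arc generic paths through vertices $v_\alpha \in I_1$ and $v_\omega \in I_{2k+3}$. Relatedly, placing the pivots inside $I_{i_0}$, as you suggest, risks breaking the orientation: to connect a vertex of $I_{i_0}$ generically to a vertex after $I$, the intermediate pivot must be $\sigma$-after the first and $\sigma$-before the second, which is guaranteed if the pivot sits in $I_{2k+3}$ but not if it sits inside $I_{i_0}$. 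The choice of the \emph{extreme} blocks $I_1, I_{2k+3}$ for pivots (and the choice $2 \le i_0 \le 2k+2$) is essential.

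Finally, your preliminary claim that arcs of $F$ with both endpoints in $I$ are ``useless'' because $T[I]$ is $H$-free is not justified as stated; an immersion model of $H$ can pass through $I$ and be hit by an arc of $F \cap A(I,I)$ even though no immersion model lives entirely inside $I$. This claim is in fact a \emph{consequence} of the lemma you are trying to prove, not a premise you can assume up front, and the paper does not rely on it.
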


\begin{proof}
Fix an optimal solution $F$. By assumption, $|F|\leq k$.

Define \[F_\star=(F-A(I,V(T)))\cup\sing(I).\]
That is, $F_\star$ is obtained from $F$ by removing all arcs incident with $I$ are replacing them with all $I$-singular arcs. We will prove that $F_\star$ is a solution in $T$. Observe that since $F$ is optimal, this will imply that
\begin{align*}
|F\cap A(I,V(T))|&=|F|-|F-A(I,V(T))|\\&\leq |F_\star|-|F-A(I,V(T))|=|F_\star\cap A(I,V(T))|=|\sing(I)|\leq 2c,
\end{align*} and in consequence we may set $X_\star\coloneqq I$. 

Suppose for the sake of contradiction that $F_\star$ is not a solution in $T$, i.e. there exists an immersion model $\phi$ of $H$ in $T-F_\star$. We may choose $\phi$ so that $\phi|_{I}$ is minimal. As 
\[|I|\geq x(3c+k+1)\geq 4\|H\|(c+(2c+k)+1)\geq 4\|H\|(c+|F_\star|+1),\] by \cref{lem:teleports} we have that $\tau^{(r,c)}(\phi|_{I})$ is short. Note that $\phi$ is incident with $I$ for otherwise it would have not been hit by $F$, contradicting $F$ being a solution. Moreover, $A(\phi)\cap A(I,V(T))\subseteq \gen(I)$  
as all singular arcs incident with $I$ are in $F_\star$ which is disjoint with $\phi$.

Observe that since each of the at most $k$ arcs in $F$ is incident with at most $2$ intervals $I_j$, there exists $i\in [2k+3]$ such that $i\neq 1$, $i\neq 2k+3$, and $I_i$ is not incident with $F$. Pick vertices $v_\alpha\in V(I_1)$, $v_\omega\in V(I_{2k+3})$ which are not incident with $\sing(I)\cup\sing (I_i)\cup F$. This is possible since \[|I_1|,|I_{2k+3}|\geq 3c+k+1>c+c+k\geq|\sing(I)|+|\sing(I_i)|+|F|\geq |\sing(I)\cup\sing (I_i)\cup F|.\]
By equality of signatures of $I$ and $I_i$ and the shortness of $\tau^{(r,c)}(\phi|_{I})$, we conclude that there exists a partial immersion $\phi'$ in $I_i$ of the same type as $\phi|_{I}$, which moreover has a purely generic boundary (i.e. does not contain any $I$-singular arcs).

Now it is enough to observe that we can connect $\phi'$ with $\phi|_{T-I}$ to get an immersion of $H$ in $T$ disjoint with $F$. We shall use $v_\alpha$ and $v_\omega$ as pivots for two-arc generic paths, similarly as in the proof of \cref{lem:teleports}. In other words, to enrich $\phi'$ and $\phi|_{T-I}$ to a full immersion model of $H$, we insert a collection of two-arc generic paths with middle vertices in $v_\alpha$ or $v_\omega$, joining vertices on $\phi'$ with vertices on $\phi|_{T-I}$.

Formally, for every $a\in A(H)$ for which $\phi(a)$ is incident with both $I$ and $T-I$ we do the following. Consider all arcs used for gluing $\phi|_{I}(a)$ with $\phi|_{T-I}(a)$. As argued above, all these arcs are generic. For each such arc $g\in \gen^+(I)$ consider the piece $P$ of the scattered path $\phi(a)$ with $g=\last(P)$ and let $P'$ be the corresponding piece of the corresponding (via equality of the $(r,c)$-types) scattered path $\phi'(a)$. Now replace the arc $\last(P')$  with a two-arc path
\[(\tail(\last(P')),v_{\omega})(v_{\omega},\head(g))\]
joining $P'$ with the piece $Q$ of $\phi|_{T-I}(a)$ which is glued in $\phi$ along $g$ to $P$. Finally remove $g$ from $Q$. We perform a similar replacement for each $g\in \gen^-(I)$ (changing the roles in each pair first/last, head/tail, $v_\alpha$/$v_\omega$). The choice of $v_\alpha$ and $v_\omega$ ensures that the new connections are arc-disjoint and disjoint with $F$.
\end{proof}

We define a digraph $T^\circ$ based on $T$ and $I=I_1\cup\ldots\cup I_{2k+3}$ as follows: start with $T$, and
\begin{itemize}
 \item remove all vertices of $I_2$;
 \item for every arc $a\in \sing^+(I_2)$, replace $a$ with an arc with the same head as $a$ and tail in a fresh vertex  $s^+_{\xi(a)}$;
 \item for every arc $a\in \sing^-(I_2)$, replace $a$ with an arc with the same tail as $a$ and head in a fresh vertex  $s^-_{\xi(a)}$.
\end{itemize}
We call the constructed graph a \emph{$c$-boundaried co-interval}. The intuition of this construction is as follows. We pinch off one of the $2k+3$ intervals and keep the singular arc interface in a fashion similar as in $(r,c)$-boundaried intervals. The only difference is that we do not keep track of the $r$-boundary vertices.

Now we can define the \emph{gluing} of $T^\circ$ with an $(r,c)$-boundaried interval $B=I\cup S^{\pm}$ with signature~$\Sigma$, simply by identifying the singular arcs of the same color (note that different vertices from $S^{\pm}$ can be therefore mapped to the same vertex) and completing the obtained structure to a tournament by making all missing arcs generic. Note that in order for this to be well-defined, we need to require that the sets of colors of the singular arcs in $T^\circ$ and in $B$ are identical --- if it is so, we will say that $T^\circ$ and $B$ are \emph{compatible}. Also, note that we require that the signature of $B$ is $\Sigma$: that is, the possible types of partial immersions present in $B$ are exactly the same as in the substituted interval $I_2$.

Denote by $T^\circ\glue B$ the tournament obtained from gluing $T^\circ$ and $B$. Note that in this tournament we have naturally defined ordering of vertices: in $T^\circ$ it is inherited from the ordering $\sigma$ of $T$ and within $B$ it is inherited from the ordering $\sigma_I$ of the boundaried interval. Finally all the vertices of the substituted interval appear in the ordering between the two interval parts (prefix and suffix) of the co-interval. The following observation is obvious.

\begin{observation}
If two exchangable $(r,c)$-boundaried intervals $B=I\cup S^\pm$ and $B'=I'\cup S^\pm$ are compatible with~$T^\circ$, then the $(r,c)$-signatures of $I$ in  $T^\circ\glue B$ and $I'$ in $T^\circ\glue B'$ are equal.
\end{observation}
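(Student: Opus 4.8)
The statement claims that if $B = I \cup S^\pm$ and $B' = I' \cup S^\pm$ are exchangeable $(r,c)$-boundaried intervals, both compatible with $T^\circ$, then the $(r,c)$-signature of $I$ computed \emph{inside} $T^\circ \glue B$ equals the $(r,c)$-signature of $I'$ computed \emph{inside} $T^\circ \glue B'$. Since the signature of a long flat interval is by definition the set of its admissible minimal $(r,c)$-types, and since minimality of a type is an intrinsic property of the interval (by \cref{obs:minimal-types}), it suffices to prove that the set of $I$-admissible $(r,c)$-types in $T^\circ \glue B$ equals the set of $I'$-admissible $(r,c)$-types in $T^\circ \glue B'$; minimality then transfers automatically since both sets coincide and the size notion on types is purely combinatorial.

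The plan is as follows. First I would observe that the gluing operation $T^\circ \glue B$ splits the vertex order into three consecutive parts: a prefix of $T^\circ$, the interval $I$ (ordered by $\sigma_I$), and a suffix of $T^\circ$. Equivalently, writing $I$ as a $\sigma$-interval of $T^\circ \glue B$, the co-interval $V \setminus I$ is exactly the vertex set of $T^\circ$ (with its singular-arc interface reattached), and the same holds verbatim for $I'$ inside $T^\circ \glue B'$. Now fix a short $(r,c)$-type $\tau$. By \cref{lem:types-behave-well} applied to the interval $I$ and its co-interval, $\tau$ is $I$-admissible in $T^\circ \glue B$ if and only if there is a type $\tau^\circ$ realizable as a partial immersion in the co-interval $V \setminus I$ (i.e.\ in $T^\circ$'s part) that is compatible with $\tau$; crucially, since $B$ has signature $\Sigma$, the $I$-admissible minimal types of $B$ are exactly those in $\Sigma$, and every $I$-admissible type is a trivial shortening-extension of a minimal one, so the $I$-admissible types of $B$ depend only on $\Sigma$. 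The same description applies to $I'$ in $T^\circ \glue B'$, with the same $\Sigma$ (exchangeability gives equal signatures) and the same co-interval $T^\circ$.

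The only remaining subtlety, and the step I expect to be the main (though minor) obstacle, is the interaction of the gluing arcs with the $r$-boundary: when we glue $\tau$ (a type of a partial immersion in $I$) with a co-interval partial immersion, the generic gluing arcs must be anchored at $r$-boundary vertices whose singular incidence pattern matters, and the singular gluing arcs must use the correct colors. Here is exactly where the hypotheses of \emph{exchangeability} beyond equality of signatures are used: $S^\pm = S'^\pm$ guarantees the singular colors available for gluing are identical, and the condition that the $i$-th $r$-boundary vertex of $I^\pm$ and of $I'^\pm$ carry the same set of colors of incident singular arcs guarantees that the numeric labels in $[r]$ appearing in $\tau$ encode compatible anchoring information in both $T^\circ \glue B$ and $T^\circ \glue B'$. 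With these two facts, the compatibility predicate between $\tau$ and a co-interval type is literally the same Boolean condition in both glued tournaments, so the $I$-admissible type sets coincide. Therefore $\Sigma^{(r,c)}(I) = \Sigma^{(r,c)}(I')$ as computed in the respective glued tournaments, which is the claim. I would present this as a short paragraph invoking \cref{lem:types-behave-well}, \cref{obs:minimal-types}, \cref{obs:gluing-minimal}, and the definition of exchangeability, without spelling out the routine case analysis on gluing arcs.
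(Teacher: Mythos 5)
The central step of your argument does not hold. You claim that ``$\tau$ is $I$-admissible in $T^\circ \glue B$ if and only if there is a type $\tau^\circ$ realizable as a partial immersion in the co-interval $V \setminus I$ that is compatible with $\tau$.'' This equivalence is false in the forward direction. By definition, $I$-admissibility of a type $\tau$ means only that \emph{some} partial immersion in $I$ has type $\tau$; it is a purely local property of $I$, determined by $T[I]$, the singular interface $\sing(I)$ with its coloring $\xi$, the generic interface, and the $r$-boundary. Compatibility of an interval type with a co-interval type, on the other hand, requires by the paper's definition a full immersion model of $H$ in the whole tournament whose traces on $I$ and $V\setminus I$ have the two respective types. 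If, say, $T^\circ\glue B$ is $H$-free, there are still many $I$-admissible types (any minimal partial immersion inside $I$ produces one), yet no type is compatible with any co-interval type. Hence \cref{lem:types-behave-well} cannot be used to characterize $I$-admissibility, and the remainder of your argument, which reasons about when the ``compatibility predicate'' against $T^\circ$ is the same in the two glued tournaments, is built on this false reduction.

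The intended argument is shorter and never looks at $T^\circ$ at all. The signature $\Sigma^{(r,c)}(I)$ is, by \cref{def:signature}, the set of $I$-admissible minimal $(r,c)$-types, and since $I$-admissibility (and minimality, as you correctly note via \cref{obs:minimal-types}) depend only on the data encoded in the boundaried interval --- this is exactly the identity $\Sigma^{(r,c)}(D^{(r,c)}(I)) = \Sigma^{(r,c)}(I)$ noted right after \cref{def:signature} --- one gets $\Sigma^{(r,c)}(I \text{ in } T^\circ\glue B) = \Sigma^{(r,c)}(B)$ and likewise $\Sigma^{(r,c)}(I' \text{ in } T^\circ\glue B') = \Sigma^{(r,c)}(B')$. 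Exchangeability includes $\Sigma^{(r,c)}(B) = \Sigma^{(r,c)}(B')$, so the claim follows by transitivity. The additional ingredients of exchangeability ($S^\pm = S'^\pm$ and the matching incidence of singular colors along the $r$-boundaries) are there to ensure that $T^\circ$ is compatible with both $B$ and $B'$ and that the resulting structures are well-formed tournaments with matching $\xi$-colorings, not to compare an admissibility-via-compatibility predicate as you propose.
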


We now observe, by inspecting the proof of \cref{lem:finding}, that if in $I$ we replace one of its subintervals with an exchangeable interval, then the conclusion of \cref{lem:finding} --- that the modified $I'$ will still have constant incidence with every optimal solution in $T$ --- will still hold. Let us summarize this section with putting together these observations and recalling all needed assumptions.

\begin{corollary}\label{cor:finding}
Suppose that $x\geq 4\|H\|$, $r$, $c$ are fixed positive integers, $T$ is a tournament of for which there exists an optimal solution of size at most $k$ and $\sigma$ is an ordering of $T$. Suppose further that $X$ is an $H$-free $c$-flat $x(3c+k+1)(2k+3)^{3s(r,c)}$-long $\sigma$-interval with all incident $\sigma$-backward arcs colored according to $\xi$ with colors $[3c]$ so that no two arcs of the same color participate in the same $\sigma$-cut.

Then there exists an $(r,c)$-signature $\Sigma$, a $c$-boundaried co-interval $T^\circ$ of this signature, and an $(r,c)$-boundaried interval $B=I\cup S^\pm$ such that $T=T^\circ\glue B$ and moreover for every $B'=I'\cup S'^\pm$ exchangeable with $B$ and for every optimal solution $F\subseteq A(T')$ where $T'=T^\circ\glue B'$ we have: $|F\cap A(I',T')|\leq 2c$. Moreover, given $x$, $r$, $c$, $T$, $\sigma$, and $X$, such $\Sigma$, $T^\circ$, and $B$ can be computed in polynomial time.
\end{corollary}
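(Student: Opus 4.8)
The plan is to assemble \cref{cor:finding} mechanically from the pieces already developed in this section, so the proof is mostly a matter of carefully threading together \cref{lem:protrusion-const-boundary}, \cref{cor:Simon-intervals}, \cref{lem:finding}, and the construction of $T^\circ$. First I would fix the data: given $x,r,c,T,\sigma,X$ as in the statement, I invoke \cref{cor:Simon-intervals} on the $c$-flat, $4\|H\|(r+c)\ell^{3s(r,c)}$-long interval $X$ with $\ell = 2k+3$ (noting $x(3c+k+1) \ge x \ge 4\|H\|(r+c)$ after absorbing constants, so the length hypothesis is met) to obtain $2k+3$ consecutive $4\|H\|(r+c)$-long $c$-flat subintervals $I_1,\ldots,I_{2k+3}$ of $X$, all with a common $(r,c)$-signature $\Sigma$, equal to the signature of their union $I = I_1\cup\cdots\cup I_{2k+3}$. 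I then set $\Sigma$ to be this common signature, define $T^\circ$ to be the $c$-boundaried co-interval obtained by pinching off $I_2$ as in the construction immediately preceding \cref{cor:finding}, and set $B = D^{(r,c)}(I_2)$, the $(r,c)$-boundaried interval encoding $I_2$. By construction $T = T^\circ \glue B$, $B$ has signature $\Sigma$, and $T^\circ$ and $B$ are compatible since they share exactly the colors of $\sing(I_2)$.

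The substance is the ``moreover'' clause, and the key point is to observe that the argument of \cref{lem:finding} never used anything about $I_2$ beyond its $(r,c)$-signature being $\Sigma$ and its length being at least $3c+k+1$ in each of its $2k+3$ pieces. Concretely: for any $B' = I'\cup S'^\pm$ exchangeable with $B$, the tournament $T' = T^\circ \glue B'$ has a $\sigma'$-ordering (inherited from $\sigma$ on $T^\circ$ and from $\sigma_{I'}$ inside $B'$), and by the observation preceding \cref{cor:finding}, the $(r,c)$-signature of $I'$ in $T'$ equals the $(r,c)$-signature of $I_2$ in $T$, which is $\Sigma$; likewise the full interval $I'_{\mathrm{big}} := I_1 \cup I' \cup I_3 \cup \cdots \cup I_{2k+3}$ (with $I'$ in the second slot) is again a $c$-flat $H$-free $\sigma'$-interval whose signature is still $\Sigma$, decomposing into $2k+3$ consecutive subintervals of signature $\Sigma$ and length $\ge 3c+k+1$. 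Therefore the hypotheses under which \cref{lem:finding} was proved hold verbatim for $T'$ with this interval in place of $I$, so $|F \cap A(I'_{\mathrm{big}}, V(T'))| \le 2c$ for every optimal solution $F$ in $T'$; in particular $|F \cap A(I', T')| \le 2c$ since $A(I',T') \subseteq A(I'_{\mathrm{big}}, V(T'))$. I would phrase this either by re-running the short argument of \cref{lem:finding} with the symbols relabeled, or — cleaner — by stating explicitly that the proof of \cref{lem:finding} applies to any tournament equipped with such a signature-$\Sigma$, $(2k+3)$-fold-factored, $c$-flat, $H$-free interval, and then noting that $T'$ with $I'_{\mathrm{big}}$ is exactly such an instance.

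The last item is the algorithmic claim. Here I would simply chain the polynomial-time parts: \cref{lem:sorting} and the $H$-freeness check underlying \cref{lem:protrusion-const-boundary} are not needed here since $\sigma$ and $X$ are given, the coloring $\xi$ is given, \cref{cor:Simon-intervals} yields $I_1,\ldots,I_{2k+3}$ and hence $\Sigma$ (via \cref{lem:signature-compute} for computing signatures) in polynomial time, and the constructions of $T^\circ$ and $B = D^{(r,c)}(I_2)$ are explicit and clearly polynomial.

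I expect the main obstacle to be purely expository rather than mathematical: making precise that ``inspecting the proof of \cref{lem:finding}'' genuinely suffices, i.e. isolating exactly which properties of $I$ that proof used — that $I$ is $H$-free, $c$-flat, $4\|H\|(c+k+1)$-long (so \cref{lem:teleports} applies), and splits into $2k+3$ consecutive subintervals each of length $\ge 3c+k+1$ sharing a common signature with $I$ — and checking that $T' = T^\circ \glue B'$ with $I'_{\mathrm{big}}$ satisfies all of them. The one place demanding a little care is verifying that $I_1$ and $I_{2k+3}$ are themselves unchanged by the substitution (they are: the surgery only touches the middle slot), so that the pivot vertices $v_\alpha \in I_1$, $v_\omega \in I_{2k+3}$ used in the gluing argument of \cref{lem:finding} still exist with the required disjointness from $\sing(I'_{\mathrm{big}}) \cup \sing(I') \cup F$, which holds because $|I_1|, |I_{2k+3}| \ge 3c+k+1$ as before.
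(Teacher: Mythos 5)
Your proof is correct and takes essentially the same approach as the paper's: the paper's own proof of \cref{cor:finding} is a two-sentence sketch that says to re-run the argument of \cref{lem:finding} on both $T$ and $T'$ and then observe that constant incidence with an interval implies constant incidence with each of its subintervals, with the algorithmic claim deferred to \cref{cor:Simon-intervals} and \cref{lem:signature-compute}. Your write-up fills in precisely these details (the construction of $\Sigma$, $T^\circ$, $B$, the identification of the exact hypotheses that the proof of \cref{lem:finding} actually uses, the passage from $I'_{\mathrm{big}}$ to $I'$ via $A(I',T')\subseteq A(I'_{\mathrm{big}},V(T'))$, and the unchanged pivot intervals $I_1$, $I_{2k+3}$), so it is a faithful, more explicit rendering of the same argument.
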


\begin{proof}
It is enough to see that the argument from the proof of \cref{lem:finding} can be applied to both $T$ and $T'$ to obtain constant incidence of any optimum solution with $I$ (and $I'$). Constant incidence with an interval means in particular constant incidence with each of its subintervals. As for the algorithmic statement, it follows directly from \cref{cor:Simon-intervals} and \cref{lem:signature-compute}.
\end{proof}

\section{Replacing protrusions}

In the entire section we fix $c\coloneqq \Cctw$, where $\Cctw$ is the constant from \cref{cor:const-ctw}, and $r\coloneqq 6\|H\|c$. Moreover we assume that $x\geq 4\|H\|$; the precise value of $x$ will be determined later.

\begin{definition}
\emph{Protrusion} is any $(r,c)$-boundaried interval $X=I\cup S^\pm$ 
which is $H$-free and $x(3c+k+1)$-long. For brevity we will refer to $X$ as to $I$. The set $\Sigma^{(r,c)}(X)$ is the \emph{signature} of the protrusion.
\end{definition}

Let $T$ be a tournament equipped with a vertex ordering $\sigma$ and let $I$ be an $H$-free interval and such that $T=T^\circ\glue X$, where $X$ is a protrusion of signature $\Sigma$. Let $I\subseteq V(T)$ be the interval defined by the protrusion. Fix the coloring $\xi$ of $\sigma$-backward arcs incident with $I$. 

Recall that from \cref{cor:finding} follows that for every $(r,c)$-boundaried co-interval $T^\circ$ compatible with $X$ if $T^\circ\glue X$ admits an optimal solution $F$ of size not greater than $k$, then there are at most $2c$ arcs in $F$ incident with $X$.

For every protrusion $X=I\cup S^\pm$ define a function $f_X\colon 2^{\S^{(r,c)}}\to \{0,1,2,\ldots,2c\}\cup\{\infty\}$ as follows: $f_X(S)$ is the minimum number of arcs in $A(I,I)$ needed to hit all partial immersions in $I$ whose signatures belong to $S$, or $\infty$ if this number is greater than $2c$. 

We introduce an equivalence relation $\sim$ on the set of all protrusions. Let $X=I\cup S^\pm$, $X'=I'\cup S'^\pm$. We say that $X\sim X'$ if:
\begin{itemize}
 \item $S^\pm=S'^\pm$; 
 \item $\Sigma^{(r,c)}(X)=\Sigma^{(r,c)}(X')$; and
 \item 
$f_X(S)= f_{X'}(S)$ for every $S\subseteq \S^{(r,c)}$.
\end{itemize}
Note that the number of equivalence classes of $\sim$ is finite and bounded by a constant depending only on~$H$. This means that if in each class we pick a representative with the minimal number of vertices (call each such element a \emph{small} protrusion), then all small protrusions will have size bounded from above uniformly by a constant $s_{H}$ depending on the digraph $H$ only.

Same arguments as in the proof of \cref{lem:signature-compute} give the following.

\begin{lemma}\label{lem:protr-equiv-algo}
 Given protrusions $X$ and $X'$ it can be decided in polynomial time whether $X\sim X'$.
\end{lemma}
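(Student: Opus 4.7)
The plan is to verify the three defining conditions of $\sim$ in turn. First I would directly compare $S^\pm$ with $S'^\pm$, both being sets of constant size bounded in terms of $c$; this is done in constant time. Next I would compute $\Sigma^{(r,c)}(X)$ and $\Sigma^{(r,c)}(X')$ in the manner of \cref{lem:signature-compute}: iterate over the constantly many short $(r,c)$-types guaranteed by \cref{cor:const-types} and test admissibility of each via the rooted immersion containment algorithm of~\cite{FoPi}, or equivalently via MSO model-checking on structures of bounded cliquewidth. As the transition from $c$-flat intervals to boundaried intervals is merely a matter of bookkeeping, this step carries over verbatim and runs in polynomial time. Comparing the two resulting sets of types then settles the second condition.

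The substance lies in checking the third condition, which demands $f_X(S)=f_{X'}(S)$ for every $S$ in the domain of $f_X$. Since this domain is of constant size by \cref{cor:const-sign}, it suffices to compute $f_X$ and $f_{X'}$ pointwise. Fix $S$. Because $f_X(S)\le 2c$ whenever finite, I would enumerate all arc subsets $F\subseteq A(I,I)$ with $|F|\le 2c$, of which there are $O(|I|^{2c})$ --- polynomial, as $c=\Cctw$ is constant. For each such $F$ I would test whether $F$ meets every partial immersion in $X$ of a type flagged by $S$, by running, for each of the constantly many relevant short types $\tau$, one admissibility test in $X$ with the arcs of $F$ marked as forbidden. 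The smallest $|F|$ for which every admissibility test fails yields $f_X(S)$ (and $\infty$ if no such $F$ exists). An analogous computation for $X'$ followed by pointwise comparison concludes the check.

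The one point requiring a small argument is that forbidding a constant number of arcs breaks the tournament structure of the interval $I$ of $X$, so one has to verify that the admissibility procedure of \cref{lem:signature-compute} survives this modification. This is straightforward: admissibility of a fixed type is expressible by a fixed MSO formula over a digraph with a distinguished ``forbidden'' arc set, and $I$ retains bounded cliquewidth after the removal of $O(1)$ arcs, so the MSO model-checking route of~\cite{CourcelleMR00} goes through. Alternatively, the rooted immersion algorithm of~\cite{FoPi} can be invoked after observing that the cutwidth of $I$ increases by at most $O(1)$ under $O(1)$ arc deletions. Together these deliver a polynomial-time decision procedure for $X\sim X'$.
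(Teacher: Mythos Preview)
Your proof is correct and is precisely the elaboration the paper has in mind: the paper's own proof is the single sentence ``Same arguments as in the proof of \cref{lem:signature-compute} give the following,'' and you have unpacked exactly those arguments, including the additional enumeration over arc sets $F$ of size at most $2c$ needed to compute $f_X$.

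One small caveat on your alternative route: invoking the rooted-immersion algorithm of~\cite{FoPi} after deleting the arcs of $F$ is not quite clean, since that algorithm is formulated for tournaments (or semi-complete digraphs) and $I-F$ is neither; also, arc deletion can only \emph{decrease} cutwidth, so the phrasing ``increases by at most $O(1)$'' is off. Your primary MSO route is unaffected by this and is sound, so the argument stands.
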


We now argue that equivalent protrusions are replaceable.

\begin{lemma}\label{lem:replacing}
Suppose that $T=T^\circ\glue X$ is a tournament satisfying the conclusion of \cref{cor:finding}, where $X$ is a boundaried $(r,c)$-interval of signature $\Sigma$ and $T^\circ$ is a $c$-boundaried co-interval which is compatible with $X$.

Then for every $X'$ such that $X\sim X'$, the optimal solution in $T^\circ\glue X$ is of size not greater than $k$ if and only if the optimal solution in $T^\circ\glue X'$ is of size not greater than $k$. 
\end{lemma}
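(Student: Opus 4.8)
The statement asserts that replacing a protrusion $X$ by an equivalent protrusion $X'$ (in the sense of $\sim$) preserves the answer to the instance; by symmetry it suffices to show one direction, say that if $T^\circ\glue X$ has an optimal solution of size at most $k$ then so does $T^\circ\glue X'$. The natural strategy is to take an optimal solution $F$ for $T=T^\circ\glue X$, split it into the part $F^\circ$ living in $T^\circ$ and the part $F_X$ living inside $I$ (i.e.\ in $A(I,I)$, using that by \cref{cor:finding} we have $|F\cap A(I,V(T))|\le 2c$, so after also discarding the few arcs of $F$ incident with $I$ but not inside $I$ we are in good shape), and then manufacture a solution for $T'=T^\circ\glue X'$ of no larger size by keeping $F^\circ$ and swapping $F_X$ for an equally small arc set inside $I'$ that kills ``the same'' partial immersions. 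The set of partial immersions we must kill inside $I'$ is exactly those whose trace/type could be extended through $T^\circ$ to a full $H$-immersion model; the point of the definitions is that this is determined by the $(r,c)$-signature data that $T^\circ\glue X$ exposes on the boundary, and that is precisely what is preserved by $\sim$.

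\textbf{Key steps, in order.} First, fix an optimal solution $F$ in $T$ with $|F|\le k$. Using \cref{cor:finding}, $|F\cap A(I,V(T))|\le 2c$; let $F_X := F\cap A(I,I)$ and $F^\circ := F\setminus A(I,V(T))$, and observe $|F|\ge |F^\circ|+|F_X|$ plus possibly a few singular/generic boundary arcs of $F$ which we may harmlessly move into $F^\circ$ after rerouting (the interaction across the boundary is governed by singular arcs and generic arcs, and a generic boundary arc deleted by $F$ can be absorbed because generic arcs are plentiful; this is the same rerouting trick used in \cref{lem:teleports} and \cref{lem:finding}). Second, identify the ``forbidden signature set'' $S^\star\subseteq \S^{(r,c)}$: the set of minimal $(r,c)$-types $\tau$ such that $\tau$ is realized by a partial immersion in $I$ which glues, through the rest of $T^\circ\glue X$ minus $F^\circ$, into a full immersion model of $H$. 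Crucially, whether a type $\tau\in\Sigma$ extends to a full model depends only on $T^\circ$, $F^\circ$, and the boundary interface colors $S^\pm$ — not on the interior of $X$ — by \cref{lem:types-behave-well} (types carry enough information to glue) and the fact that $X$ and $X'$ have the same signature $\Sigma$ and the same $r$-boundary incidence structure (this is where ``$X\sim X'$'' with $S^\pm=S'^\pm$ and the $r$-boundary condition from exchangeability enters). Hence $S^\star$ is the same whether we work in $T^\circ\glue X$ or $T^\circ\glue X'$. Third, $F_X$ hits all partial immersions in $I$ with signature in $S^\star$, so $|F_X|\ge f_X(S^\star)$; in fact since $F$ is optimal we may assume $|F_X|=f_X(S^\star)$ (otherwise shrink $F_X$ to an optimal hitting set, using $f_X(S^\star)\le 2c<\infty$ which is guaranteed because $\sing(I)$ itself is such a hitting set of size $\le 2c$). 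By $X\sim X'$ we have $f_{X'}(S^\star)=f_X(S^\star)$, so there is $F_{X'}\subseteq A(I',I')$ with $|F_{X'}|=|F_X|$ hitting all partial immersions in $I'$ with signature in $S^\star$. Fourth, set $F' := F^\circ\cup F_{X'}$; then $|F'|\le |F^\circ|+|F_X|\le |F|\le k$, and $F'$ is a solution in $T'$: any $H$-immersion model in $T'-F'$ would leave a minimal trace on $I'$ whose type lies in $\Sigma$ (as $I'$ is $H$-free) and, being extendable through $T^\circ-F^\circ$, would lie in $S^\star$ — contradicting that $F_{X'}$ hits all such. Finally, invoke symmetry (the roles of $X$ and $X'$ are interchangeable under $\sim$) to get the converse direction.

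\textbf{Main obstacle.} The delicate point is making rigorous the claim that the ``forbidden signature set'' $S^\star$ is genuinely a function of $T^\circ$ and $F^\circ$ alone, independent of which side of $\sim$ we are on. This requires carefully combining three things: that an immersion model in $T'-F'$ restricted to $I'$ may be taken minimal (so its type is in $\Sigma^{(r,c)}(X')=\Sigma$), that by \cref{lem:types-behave-well} the existence of a compatible completion on the co-interval side depends only on the type and the singular/$r$-boundary interface, and that the completion side $T^\circ-F^\circ$ is literally the same digraph in both $T$ and $T'$. One must also be careful about generic gluing arcs: a full model might be forced to use generic arcs between $I'$ and $T^\circ$ that were deleted by $F^\circ$, but since $F^\circ$ touches only finitely many arcs and generic arcs between the interval and the co-interval are abundant (the interval is $x(3c+k+1)$-long, far exceeding $|F|+|\sing|$), such a model can always be rerouted to avoid $F^\circ$'s generic arcs using interior pivot vertices, exactly as in \cref{lem:teleports} and \cref{lem:finding}. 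Handling this rerouting cleanly — so that ``$F^\circ$ hits no completable type'' really is equivalent on both sides — is the crux; the rest is bookkeeping with the counting inequality $|F'|\le|F|$.
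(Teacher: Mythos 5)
Your high-level outline (split $F$, keep the co-interval part, swap the interior part using $f_X=f_{X'}$) matches the paper's approach, but there is a genuine gap in how you handle the boundary arcs $F_1 := F\cap A(V(T)\setminus X,\,X)$, and a smaller issue with your choice of forbidden set.

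Your construction $F'=F^\circ\cup F_{X'}$ simply drops $F_1$. This is not sound: $F_1$ arcs (singular arcs incident to $I$, and generic arcs crossing the boundary) may be essential to hitting immersion models that straddle the boundary, and these arcs do not literally live in $T^\circ$, so ``moving them into $F^\circ$'' is not meaningful --- the corresponding arcs in $T'=T^\circ\glue X'$ are different arcs, incident to $I'$. If you omit them, an $H$-model in $T'-F'$ can cross the boundary through a singular arc of some color $x$, or through a generic arc at some $r$-boundary position, exactly where $F$ had cut the corresponding arc in $T$; your reconstruction of an $H$-model in $T-F$ from $\phi|_{X'}$ then breaks, because the natural pre-image of the gluing arc is in $F_1\subseteq F$. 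The paper resolves this with two ingredients you are missing: (a) a claim (their \cref{cl:only-border}) that in an optimal $F$ every $X$-generic arc of $F_1$ is incident with the $r$-boundary of $X$, proved by a rerouting argument that uses optimality of $F$ (not merely abundance of generic arcs); and (b) an explicit transfer $F_1\mapsto F'_1$ under the bijection given by singular colors and $r$-boundary positions, which is exactly what the conditions $S^\pm=S'^\pm$ and the preserved $r$-boundary incidence structure in the definition of $\sim$/exchangeability make possible. The resulting solution is $F'=F_0\cup F'_1\cup F_{X'}$, not $F^\circ\cup F_{X'}$. Your ``rerouting absorbs $F_1$'' intuition, if pushed through, would actually prove (a); but (a) alone does not let you discard $F_1$, it only constrains where its generic arcs can lie so that (b) is well-defined.

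A secondary point: you define the forbidden set $S^\star$ as the set of types extendable through $T^\circ-F^\circ$ to a full model, and then assert that $F_X$ hits all partial immersions of type in $S^\star$ and that optimality lets you assume $|F_X|=f_X(S^\star)$. Neither assertion is justified as written --- the first implicitly requires that the gluing arcs used in the extension can always be chosen to avoid $F_1$ (not automatic for singular gluing arcs, and for generic ones it is again the content of Claim~\ref{cl:only-border}), and the second does not follow from optimality of $F$ without a further argument that replacing $F_X$ by a smaller $S^\star$-hitter keeps $F$ a solution. The paper sidesteps both by choosing the set $S$ to be, by definition, the set of short types $\tau$ such that \emph{every} $X$-admissible partial immersion of type $\tau$ meets $F_X$; then $f_X(S)\le|F_X|$ is immediate and $S$ depends only on $F_X$, which is the quantity controlled by $f_X=f_{X'}$. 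This choice makes the bookkeeping straightforward and avoids the circularity in your version. The crux of the lemma really is the $F_1\to F'_1$ transfer and Claim~\ref{cl:only-border}; without them the argument does not close.
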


\begin{proof}
We will prove the forward implication only; the proof of the other one is completely analogous due to the symmetry of the roles of $X$ and $X'$.

Let $F$ be an optimal solution in $T$ and suppose that $|F|\leq k$. Let 
$$F_0=F - A(X,V(T)),\qquad F_1=F\cap A(V(T)\setminus X,X)\quad\textrm{and}\quad F_X=F\cap A(X,X).$$ By the assumption that $T=T^\circ \glue X$ satisfies the conclusion of \cref{cor:finding}, we have $|F_1\cup F_X|\leq 2c$.

\begin{claim}\label{cl:only-border}
Every $X$-generic arc of $F_1$ is incident with the $r$-boundary of $X$.
\end{claim}

\begin{proof}[of claim]
Suppose for the sake of contradiction that in $F_1$ there exists an $X$-generic arc $g$ not incident with the $r$-boundary of $X$. This means that this arc is incident with an \emph{internal} vertex of $X$, i.e. the vertex being $\sigma$-smaller than all vertices in $X_r^+$ and $\sigma$-larger than all vertices in $X_r^-$.

By the optimality of $F$ we conclude that in $T-(F-g)$ there exist immersion models of $H$, and moreover each such model uses the arc $g$. Let us choose such model $\phi$ with the property that the partial immersion $\phi|_{X}$ cannot be non-trivially shortened. We will show a way to ``reroute'' the generic connection containing $g$ to avoid $F$, all $X$-singular arcs and all other generic arcs used in $\phi$. This way we expose an immersion model of $H$ in $T-F$, which is a  contradiction.

This is the part of the proof where the $r$-boundary of $X$ (and keeping track of the $r$-boundaries in the definition of types) comes into play. In the beginning of this section we picked $r$ large enough to ensure the existence of a ``pivot'' vertex, which can be used for rerouting.

Suppose that $g\in\gen^+(X)$; the case $g\in\gen^-(X)$ is analogous. Denote $v=\tail(g)$. Note that at most $4c$ vertices of $X_{r}^+$ are incident with $F_1\cup F_X$, as $|F_1\cup F_X|\leq 2c$. Moreover $|\dirE(X_r^+,\{v\})|\leq c$, because $X$ is $c$-flat. The number of vertices of $X_{r}^+$ incident with $\sing^-(X)$ also does not exceed $c$.

For every $a\in A(H)$ the number of arcs on $\phi|_{X}(a)$ with tail in $v$ does not exceed $1$ as otherwise $\phi|_{X}(a)$ could be non-trivially shortened contradicting minimality of $\phi|_{X}$. So the number of arcs in $\gen^+(X)\cap(F_1\cup F_X\cup A(\phi|_{X}))$ incident with $X_r^+$ is not greater than $\|H\|$. Together this means that if
\[r> 6c+\|H\|,\]
which is true, then there exists a vertex $v_{\omega}\in X_r^+$ which is not incident with any of the special arcs mentioned above. The arc $(v,v_{\omega})$ exists and is not contained in $F\cup A(\phi)$. Neither is the arc joining $v_{\omega}$ with the respective vertex in $T-X$. The arc $g$ may therefore be replaced by a path composed of the two arcs described above. And so we have obtained an immersion model of $H$ in $T-F$, a contradiction.

\end{proof}

Let $S$ be the set of all short types $\tau$ such that every $X$-admissible partial immersion of type $\tau$ intersects with $F_X$. Because $f_X(S)=f_{X'}(S)$, we know that there exists a set $F_{X'}$ of arcs of $A(X')$ in $T'\coloneqq T^\circ\glue X'$ such that $|F_{X'}|\leq |F_X|$ and each partial immersion whose type is in $S$ intersects with $F_{X'}$. Let $F'_1\subseteq A(T'-X',X')$ be the set directly corresponding to $F_1$ by a natural bijection that maps each singular arc of to the singular arc of the same color, and each generic arc to a generic arc whose the endpoint in $T^\circ$ is the same and the other endpoint has the same index in the respective $r$-boundary (by \cref{cl:only-border}, this endpoint must belong to the respective boundary). We will prove that the set $F'\coloneqq F_0\cup F'_1\cup F_{X'}$ is a solution in $T'$, thereby completing the proof due to $|F'|\leq |F|$.

Suppose for the sake of contradiction that there exists an immersion model $\phi$ of $H$ in $T'-F'$. Choose one such that $\phi|_{X'}$ is minimal. Therefore, by \cref{lem:teleports}, $\phi|_{X'}$ is short. Moreover $\tau^{(r,c)}(\phi|_{X'})\notin S$, so in $T$ there exists a partial immersion $\phi_X$ in $X$ which is disjoint with $F$ and such that $\tau^{(r,c)}(\phi_X)=\tau^{(r,c)}(\phi|_{X'})$. We will show a way to reconstruct an immersion model of $H$ in $T$ from $\phi_X$ and $\phi|_{T^\circ}$ which will be disjoint with $F$ and hence will give the desired contradiction. We may follow closely the strategy used in the proof of \cref{lem:types-behave-well}, taking an additional care about ensuring disjointness with $F$.

We know that $\phi_X$ is disjoint with $F$, so for sure the direct gluing along singular arcs is possible, as it was so with $\phi|_{X'}$ and $\phi_{T^\circ}$. Moreover, gluing along generic arcs with one endpoint $u$ on the $r$-boundary of $X$ and the other one $v$ in $T^\circ$ is possible since the arc $uv$ is generic, not incident with $F$ and will not have to be introduced more than once (this is ``guarded'' by the equivalence relations $R^{\pm}$).

\end{proof}

We are now set up with all tools needed to prove our main theorem.

\begin{proof}[of \cref{thm:main}]
We prove that provided $|T|> C\cdot k^C$, where the constant $C$ will be defined later, one can compute an instance $(T',k)$ equivalent to $(T,k)$ and satisfying $|T'|<|T|$. The conclusion will follow from applying such reduction (at most) $|T|$ times.

First of all note that from \cref{thm:ctw-bound} for a digraph being a disjoint union of $k$ exemplars of $H$, we conclude that if $T$ does not contain $k$ arc-disjoint immersion copies of $H$, then 
$\ctw(T)\leq c_0k^2$ for some constant $c_0$ (depending on $H$). In particular, it follows that if $\ctw(T)> k^2 c_0$ (which can be established in polynomial time using \cref{lem:sorting}), then $T$ is a {\sc no}-instance. So from now on we may assume that $\ctw(T)\leq c_0k^2$.

Let $c=\Cctw$, $r=6\|H\|c$, $x' = \max\left\{4\|H\|,s_H+1\right\}$ and 
$x=x'(3c+k+1)(2k+3)^{3s(r,c)},$ 
where $s(r,c)$ is defined in \cref{cor:const-sign}. Let $C$ be a constant satisfying $C k^C \geq (2k^2c_0+1)(x+1)(k+1)$, e.g.
$C=\max\{3s(r,c)+4,5^{3s(r,c)}\cdot 3c_0\cdot 4x'\cdot (6c+2)\}$.

Suppose that $T$ is a tournament satisfying $\ctw(T)\leq k^2c_0$ and $|T|>C k^C$. Applying \cref{lem:protrusion-const-boundary}, we either conclude that $T$ admits more than $k$ arc-disjoint copies of $H$ (so $(T,k)$ is a {\sc no}-instance), or find an ordering $\sigma$ of $V(T)$  and an $H$-free $c$-flat $x$-long $\sigma$-interval $J$. Both conclusions can be effectively gained in polynomial time.

In the latter case, we may use \cref{cor:Simon-intervals} in a manner described in \cref{sec:finding} to find in $J$ an $H$-free $c$-flat $x'(3c+k+1)(2k+3)$-long $\sigma$-interval $I$ of $(r,c)$-signature $\Sigma$, which can be decomposed to $2k+3$ consecutive $x'(3c+k+1)$-long $\sigma$-intervals $I_i$, each of $(r,c)$-signature $\Sigma$. Both $I$ and $\Sigma$ are found in polynomial time.

Let $T=T^\circ\glue X$ be the decomposition where $T^\circ$ is a $c$-boundaried co-interval and  $X=(I_2\cup S^\pm,\Sigma)$ is a protrusion corresponding to $I_2$. Clearly, $X$ is compatible with $T^\circ$. Using \cref{lem:protr-equiv-algo} we may check in polynomial time all small protrusions to find one $X'$ such that $X'\sim X$. Let us define $T'=T^\circ\glue X'$.

By \cref{lem:replacing} we conclude that $(T',k)$ is an instance of \himmt{} equivalent to $(T,k)$. Moreover as $|X'|\leq s_H<|X|$, we have that $|T'|<|T|$. 
\end{proof}

\pagebreak[3]
\bibliographystyle{abbrv}
\bibliography{main}

\newpage
\begin{appendices}
  \crefalias{section}{appsec}
\section{Derivation of \texorpdfstring{\Cref{lem:Simon}}{Lemma 5}}\label{sec:appa}

Given an alphabet $A$ and a word $w\in A^\star$, we say that $w$ can be \emph{decomposed} into words $w_1$, $w_2$, $\ldots$, $w_\ell$ if $w=w_1w_2\ldots w_\ell$. An unranked rooted tree $T$ labeled with elements of $A^\star$ is a \emph{decomposition tree} of $w$ if:
\begin{itemize}[nosep]
\item the root is labeled with $w$;
\item the label of every non-leaf node can be decomposed into labels of all sons of this node (from left to right);
\item the label of every leaf node is an element of $A\cup\{\varepsilon\}$ (here $\varepsilon$ is the empty word).
\end{itemize}
The \emph{degree} of a node of a decomposition tree is the number of its sons. Given a monoid $S$ and a morphism $\alpha\colon A^\star\to S$, we say that decomposition $w=w_1w_2\ldots w_\ell$ satisfies the \emph{$\alpha$-idempotent rule} if there exists an idempotent $e\in S$ such that
\[\alpha(w_1)=\alpha(w_2)=\ldots=\alpha(w_\ell)=e.\]  We define an \emph{$\alpha$-factorization forest} for a word $w\in A^\star$ as a decomposition tree of $w$, whose each non-leaf node with degree greater than $2$ corresponds to an $\alpha$-idempotent rule.

\begin{theorem}[Factorization Forest theorem of Simon]\label{thm:Simon}
Let $S$ be a finite monoid, $A$ be a finite alphabet, and $\alpha\colon A^\star\to S$ be a morphism. Then every  $w\in A^\star$ has an $\alpha$-factorization forest of height at most $3|S|$.
\end{theorem}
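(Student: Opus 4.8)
The plan is to prove Simon's theorem by induction on $|S|$, structuring the factorization around Green's relations; the constant $3$ will come from three nested ``rounds'' of cutting, governed respectively by the $\mathcal{R}$-class, the $\mathcal{L}$-class (equivalently the $\mathcal{H}$-class) reached by the running prefix product, and the group living in a single $\mathcal{H}$-class. First I would normalize so that $\alpha$ is surjective: passing to the submonoid $\alpha(A^\star)$ does not increase $|S|$, idempotents of $\alpha(A^\star)$ are idempotents of $S$, and a factorization forest for the corestriction $A^\star\to\alpha(A^\star)$ is verbatim a factorization forest for $\alpha$ of the same height. The base case $|S|=1$ is immediate: one node of degree $n$ suffices, legitimately using the idempotent rule when $n\ge 3$ because the sole element $1$ is idempotent, so the height is at most $1$.

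For the inductive step, take $w=a_1\cdots a_n$ with $n\ge 2$ and look at the prefix products $p_i\coloneqq\alpha(a_1\cdots a_i)$, $0\le i\le n$. Since the right ideals $p_0S\supseteq p_1S\supseteq\cdots\supseteq p_nS$ are weakly decreasing, the positions split into at most $|S|$ maximal runs of constant $p_iS$; this gives $w=w^{(1)}\cdots w^{(m)}$ with $m\le|S|$ and all prefix products inside each $w^{(t)}$ lying in one $\mathcal{R}$-class, and bolting the blocks together with a binary comb costs height at most $m-1$ (\emph{round~1}). Inside a block, Green's lemma lets me cut once more, now along the $\mathcal{L}$-class of the running product, producing at most $|S|$ sub-blocks on which multiplication by the letters stays inside one fixed $\mathcal{H}$-class; a second comb costs height at most $|S|-1$ (\emph{round~2}). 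On such a sub-block the behaviour is governed by the group $G$ of that $\mathcal{H}$-class: if $S$ is not a group then $|G|<|S|$ and the inductive hypothesis, applied to the induced morphism $A^\star\to G$, yields a forest of height at most $3|G|$; if $S$ itself is a group --- the genuine base case, since the $\mathcal{H}$-class reduction just returns $S$ --- one invokes a direct pigeonhole argument on prefix products (extracting nonempty infixes mapping to $1$ and grouping them under idempotent nodes) to get height $O(|S|)$ (\emph{round~3}).

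Summing the three rounds naively gives a bound linear in $|S|$ but not yet the sharp $3|S|$ --- the crude nesting costs roughly $5|S|$, and Simon's original analysis about $9|S|$. The main obstacle is exactly the amortized accounting that brings this down to $3|S|$: one must argue that the three rounds do not each charge a full $|S|$ simultaneously, that blocks of length $1$ or $2$ consume no idempotent rule, and that the group base case is absorbed inside one of the rounds rather than added on top --- this optimization is the content of \cite{Kufleitner08}. A secondary technical point is the careful use of Green's lemma to see that inside an $\mathcal{R}$-class letter multiplication realizes elements of the $\mathcal{H}$-class group, and that a factorization forest built over the section $G$ lifts to one over $S$ with no height loss. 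For the present paper none of this delicacy matters: any computable bound $f(|S|)$ in place of $3|S|$ (even $9|S|$) makes \cref{lem:Simon} go through, with $\ell^{f(|S|)}$ replacing $\ell^{3|S|}$, so we simply cite \cite{SIMON,Kufleitner08} for the optimal constant.
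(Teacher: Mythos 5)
The paper does not prove \cref{thm:Simon} at all: it is imported as a black box from the literature (Simon's theorem \cite{SIMON}, with the constant $3|S|$ due to Kufleitner \cite{Kufleitner08}), and the appendix only derives \cref{lem:Simon} from it. Your bottom line --- that any computable bound $f(|S|)$ suffices for the application and that the sharp constant should simply be cited --- therefore coincides exactly with what the paper does, and there is no discrepancy on that front.

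As a standalone proof attempt, though, your sketch has two gaps, one of which you flag yourself and one you understate. The flagged one is the constant: your three-round nesting, even if carried out, yields a bound like $5|S|$ or worse, and the amortization down to $3|S|$ is genuinely the hard part of \cite{Kufleitner08}; acknowledging this and citing is the right call, but it does mean the statement as written (with $3|S|$) is not proved by your argument. The understated one is round~2: the $\mathcal{R}$-cut in round~1 works because the right ideals $p_iS$ form a weakly decreasing chain, so the positions split into at most $|S|$ runs; but the $\mathcal{L}$-classes of the running prefix products within a fixed $\mathcal{R}$-class do \emph{not} form a monotone chain, so ``cut once more along the $\mathcal{L}$-class'' does not produce at most $|S|$ contiguous sub-blocks by the same argument. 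The standard proofs handle this step differently (e.g.\ by a first-return/Ramsey argument on the values $p_i$ themselves, or by induction on the number of $\mathcal{J}$-classes), so if you ever want to write this out in full you should follow \cite{Kufleitner08} or Boja\'nczyk's exposition \cite{BojanczykFF} rather than the symmetric $\mathcal{R}$-then-$\mathcal{L}$ scheme you describe. For the purposes of this paper, none of this matters: citing the theorem, as both you and the authors do, is sufficient.
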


The intuition behind \Cref{lem:Simon} is as follows. If word $w$ is appropriately long then existence of a constant-height factorization forest forces existence of a node of large degree (hence corresponding to an idempotent rule) in this forest. 

\begin{proof}[of \cref{lem:Simon}]
Fix $w\in A^\star$ of length at least $\ell^{3|S|}$. By \cref{thm:Simon} there exists an $\alpha$-factorization forest $T$ for $w$ of height at most $3|S|$. Note that if $T$ admits a node of degree at least $\ell$, then the idempotent rule corresponding to this node provides the desired subword $w'$ and its decomposition into $\ell$ non-empty subwords (in the case of degree strictly larger than $\ell$, we may concatenate all subwords whose indices are not smaller than $\ell$ into a single subword).

We will prove that such node always exists. Suppose for the sake of contradiction that all non-leaf nodes of $T$ have degree strictly smaller than $\ell$. As $T$ has height at most $3|S|$, the number of leaves of $T$ is therefore strictly smaller than $\ell^{3|S|}$. On the other hand, each letter of $w$ has a different corresponding leaf node of $T$, so there are at least $|w|\geq \ell^{3|S|}$ leaf nodes. The contradiction finishes the proof.
\end{proof}

\end{appendices}

\end{document}